\pgfplotsset{compat=1.5}
\newtheorem{theorem}{Theorem}
\newtheorem{lemma}{Lemma}
\newtheorem{corollary}{Corollary}
\newtheorem{definition}{Definition}
\newcommand{\allClasses}{{\mathcal{N}}}
\newcommand{\leafClasses}{{\mathcal{L}}}
\newcommand{\internalClasses}{{\mathcal{I}}}
\newcommand{\rootClass}{{\rm root}}
\newcommand{\parent}[1]{{{\rm p}(#1)}}
\newcommand{\child}[1]{{{\rm child}(#1)}}
\newcommand{\anc}[1]{{{\rm anc}(#1)}}
\newcommand{\sib}[1]{{{\rm sib}(#1)}}
\newcommand{\desc}[1]{{{\rm desc}(#1)}}
\newcommand{\ldesc}[1]{{{\rm ldesc}(#1)}}
\newcommand{\stree}[1]{{{\rm stree}(#1)}}
\newcommand{\lmax}[1]{L_{#1}^{\max}}
\newcommand{\weight}[2][]{{w^{#1}_{#2}}}
\newcommand{\fairQuota}[2][]{{F^{#1}_{#2}}}
\newcommand{\balance}[2][]{{B^{#1}_{#2}}}
\newcommand{\aggBalance}[2][]{{A^{#1}_{#2}}}
\newcommand{\residual}[2][]{{R^{#1}_{#2}}}
\newcommand{\rc}[1]{{\rm rc}(#1)}
\newcommand{\activeClasses}{{\mathcal{A}}}
\newcommand{\reffig}[1]{Fig.~\ref{#1}}
\newcommand{\refeq}[1]{\eqref{#1}}
\newcommand{\reflem}[1]{Lemma~\ref{#1}}
\def\N{\mathcal{N}}
\def\S{\mathcal{S}}
\def\I{\mathcal{I}}
\def\L{\mathcal{L}}
\def\A{\mathcal{A}}
\begin{document}
\title{A Round-Robin Packet Scheduler for Hierarchical Max-Min Fairness}

\author{
Natchanon Luangsomboon, ~J\"{o}rg Liebeherr
\vspace{-4mm}
\thanks{
N. Luangsomboon and J. Liebeherr  are with the Department of Electrical and Computer Engineering, University of Toronto. 
}
\thanks{
The  research in this paper is supported in part by an NSERC Discovery grants.}
}
\maketitle

\begin{abstract}
Hierarchical link sharing addresses the demand for fine-grain traffic control at multiple levels of aggregation. At present, packet schedulers  that can support hierarchical link sharing 
are not suitable for an implementation at line rates, and deployed schedulers perform poorly when distributing excess capacity to classes that need additional bandwidth. We present {\it HLS}, a packet scheduler that ensures a hierarchical max-min fair allocation of the link bandwidth. HLS supports minimum rate guarantees and isolation between classes. Since it is realized as a non-hierarchical round-robin scheduler, it is suitable to operate at high rates.   
We implement HLS in the Linux kernel and evaluate it with respect to achieved 
rate allocations and overhead. We compare the results with those obtained for
CBQ and HTB, the existing scheduling algorithms in Linux for hierarchical link sharing. We show that the overhead of HLS is comparable to that of other classful packet schedulers. 
\end{abstract}

\section{Introduction}
\label{sec:intro}
Packet scheduling plays a crucial 
role in  the management of traffic flows, for prioritizing traffic,  for flexible service differentiation, and for achieving performance metrics, such as flow completion times, throughput, and the tail of the delay distribution. 
This paper is concerned with packet scheduling methods that support traffic control at multiple aggregation levels. The need for such scheduling methods is  largely driven by content providers that  manage traffic within and between servers, clusters, and data centers. Increasingly, data centers rely on fine-grain traffic control at multiple levels of aggregation.  The Google~B4 inter data center network reports no less than five levels of traffic aggregation \cite{B4,datacenter-tc}. 
Traffic control in support of a hierarchical distribution of available bandwidth is referred to as {\em hierarchical link sharing}. 

As an example of link sharing, consider the hierarchy shown in Fig.~\ref{fig:hierarchy}. 
The top of the hierarchy,  labeled as {\em root}, is a link with a 
fixed rate of 1000 (units are in Mbps). This bandwidth is to be divided between three traffic classes $A$,~$B$, and~$C$ that each receive a minimum rate guarantee,  
as indicated in the figure. 
Traffic class~$A$ is further divided into classes $A1$ and $A2$, with guarantees of $100$ and $200$, respectively. Class $B$ splits the bandwidth between 
$B1$ and $B2$ in the same fashion. 
Arriving packets are classified and mapped to {\it leaf classes}, which are 
the classes at the bottom  level of the hierarchy. 

\begin{figure}

  \centering
 	\includegraphics[width=0.5\columnwidth]{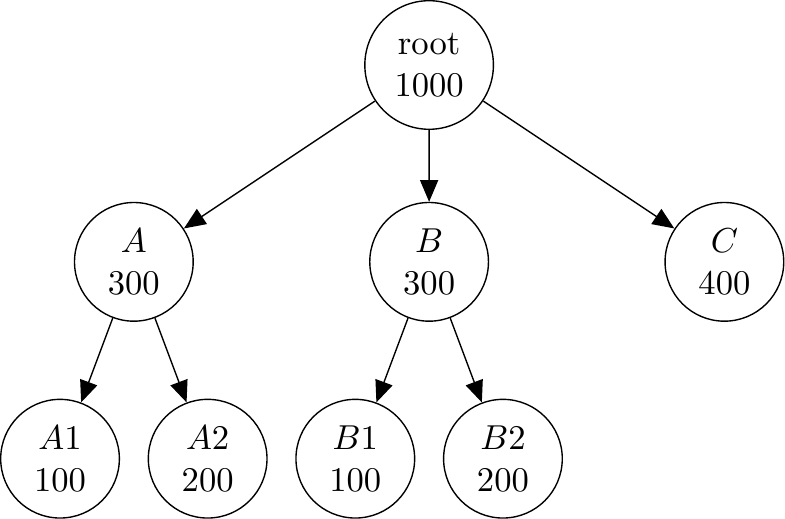}

\caption{Link sharing hierarchy.}
\label{fig:hierarchy}

\vspace{-7mm}
\end{figure}

Clearly, if the aggregate traffic from all leaf classes does not exceed the 
link capacity,  every leaf class can obtain a rate equal to its arrival rate. Likewise, if the arrival rate of every leaf class exceeds its guaranteed rate, then each leaf class is limited to its  
guaranteed rate.  
The bandwidth allocation becomes less trivial when the aggregate arrival rate from all 
classes is larger than the link 
capacity, and some classes exceed 
their guaranteed rates, while others stay well below their guarantees. 
In this case, excess capacity left unused by some classes must be distributed equitably to classes that desire additional bandwidth. 

Several packet scheduling algorithms that support class 
hierarchies with rate guarantees as shown in Fig.~\ref{fig:hierarchy} are available, however, 
deployed or deployable algorithms show significant shortfalls while algorithms without such shortfalls are too complex to be deployable. This paper addresses this dichotomy by presenting a packet scheduler with provable link sharing properties and  low computational complexity. 

For a non-hierarchical setting, a bit-by-bit round-robin algorithm provides link sharing that satisfies  a weighted version of max-min fairness \cite{Book-Bertsekas}. However, bit-by-bit round robin assumes fluid flow traffic and is not implementable as a packet scheduler. 
Weighted-Fair-Queueing (WFQ) \cite{WFQ-keshav} has shown to have a strictly bounded 
deviation from the ideal bit-by-bit round robin \cite{parekh93}. The drawback of WFQ, which extends to some of its approximations \cite{SCFQ2,SFQ}, is that 
it requires to maintain a priority queue that transmits packets in the order of assigned timestamps. 
Deficit-Round-Robin (DRR)~\cite{DRR} is a packet-level round-robin scheduler for  
variable-sized packets, whose link sharing ability is inferior to WFQ, but with a simpler implementation. 
Due to the low complexity,  Linux~\cite{linux-tc} and line-rate 
switches \cite{CRS12000} generally realize link sharing with a round-robin scheduler, such as DRR. 

For class hierarchies as in Fig.~\ref{fig:hierarchy}, Hierarchical Packet Fair Queueing (HPFQ) \cite{HFQ}  
achieves link sharing by employing a cascade of hierarchically organized 
WFQ schedulers.
Packets at the head of the queue of backlogged leaf classes
engage in a virtual tournament, with one round of the tournament 
for each level of the class hierarchy. The tournament starts at the bottom 
of the hierarchy. In each round, the packet with the 
smallest timestamp at one level 
proceeds to the next level. 
The winner of the tournament is 
selected for transmission.
While HPFQ achieves  almost ideal link sharing, it involves a considerable overhead and has not been considered for deployments.\footnote{The claim in \cite{PIFO} of 
realizing HPFQ by a hierarchy of PIFO queues is incorrect, as counterexamples are easily 
constructed when packet sizes are variable.} 

Attempts to extend DRR to a class hierarchy have so far not resulted in practical scheduling algorithms. 
In \cite{HDRR1}, the class hierarchy is mapped to a flat hierarchy by interleaving classes according to their weight guarantees. This results in good fairness properties, but rounds grow prohibitively large which may result in excessive delays between packet transmissions for some classes. 
Other efforts in this direction, e.g.,~\cite{NestedRR,HDRR2} make scheduling decisions in multiple stages, one per level in the class hierarchy, and thus inherit the drawbacks of HPFQ. 

Class-based queuing (CBQ)~\cite{CBQ}  and  Hierarchical Token Bucket (HTB)~\cite{HTB} are two packet schedulers for link sharing  in class hierarchies that are actually deployed, even if the deployment is limited to Linux systems.\footnote{In the appendix, we provide supplemental information on the operation of CBQ and HTB.}

CBQ provides minimum bandwidth guarantees to traffic classes and  distributes excess capacity to backlogged classes. CBQ measures the transmission rate of each class 
to identify traffic classes that are allowed to transmit, which are then served by a  variant of DRR. 
HTB tries to improve the efficiency of CBQ by metering the transmission rates of classes with token bucket filters.  
Classes that exceed their rate guarantee can `borrow' bandwidth from classes further up in the class hierarchy.  HTB schedules packets with a set of DRR schedulers, where only one DRR scheduler is active at a time. 
In addition to link sharing, HTB also enforces rate limits. HTB has become the primary tool for scheduling and shaping of hierarchically structured traffic flows in Linux servers \cite{BwE,Carousel,Eiffel}. 

CBQ and HTB implement rules that dictate when a class with need for additional bandwidth can transmit, however, with the rules it is not possible to determine (a priori) the allocated rates for a given traffic load. In contrast, the outcomes of schedulers such as HPFQ and hierarchical extensions of DRR schedulers  
satisfy a hierarchical version of max-min fairness, which ensures class guarantees as well as isolation between classes in the hierarchy.

Realizing hierarchical link sharing with round-robin schedulers is attractive, since it does not involve packet timestamps and priority queues, but has shown to be challenging. Extensions of DRR to class hierarchies has so far not resulted in a practical scheduling algorithm. 
On the other hand, CBQ and HTB systematically fail to isolate rate guarantees between classes in different branches of the class hierarchy. In particular, they allow classes to manipulate the rate allocations  by 
reassigning rate guarantees in a subtree of the class hierarchy (see Subsec.~\ref{subsec:exp2}). 
Until now, there does not exist a round-robin packet scheduler for class hierarchies that can  
satisfy rate guarantees  while isolating the allocations in different parts of the class hierarchy.

In this paper, we present {\em Hierarchical Link Sharing} (HLS), 
 the first  round-robin scheduler for hierarchical link sharing 
 that ensures rate guarantees and isolation between classes, and that can run at high line rates. The rate allocation achieved by HLS
satisfies a hierarchical version of max-min fairness.  This allocation is strategy-proof, as defined in \cite{DRF}, 
in the sense that classes cannot improve their allocation through 
wrongful representation of their demand or the demand of their sub-classes.  
HLS is a non-hierarchical variant of DRR with a 
time-variable quantum for each class. 

We have implemented HLS as a Linux kernel module \cite{HLS-github}. We present experiments showing that HLS ensures rate guarantees for and isolation between classes, with an overhead that is comparable to other classful scheduler in the Linux kernel. 

\section{Class Hierarchy: Terminology}
\label{sec:hierarchy}

We introduce terminology needed to describe the relationships between classes in a 
class hierarchy. 
Fig.~\ref{fig:hierarchy-def}  depicts a class hierarchy as a rooted tree, where each 
node represents a class. 
The class at the top of the hierarchy, referred to as {\it root}, represents 
a network interface where the scheduling algorithm is active. 
Leaf nodes in the rooted tree 
represent {\it leaf classes}, which are shown as gray circles. 
As stated earlier, all traffic arrivals are mapped to leaf classes. 
Nodes that are neither the root nor a leaf node represent {\it internal classes}.  
If~$\N$ is the set of all classes, we denote by $\L$ and~$\I$, respectively, the leaf classes and internal classes, with $\N = \L \cup \I \cup \{{\rm root}\}$.

For class~$i$ in the figure, the incoming edge connects to its parent class $p(i)$, 
and the outgoing edges connect to its child classes ${\rm child} (i)$. 
Other needed terms, such as ${\rm anc} (i), {\rm sib} (i), {\rm desc} (i)$, and 
${\rm ldesc} (i)$ are defined in Table~\ref{table:hierarchy} and indicated by 
dashed boxes in   Fig.~\ref{fig:hierarchy-def}. 

\begin{figure}[!t]
\centering
\includegraphics[width=4in]{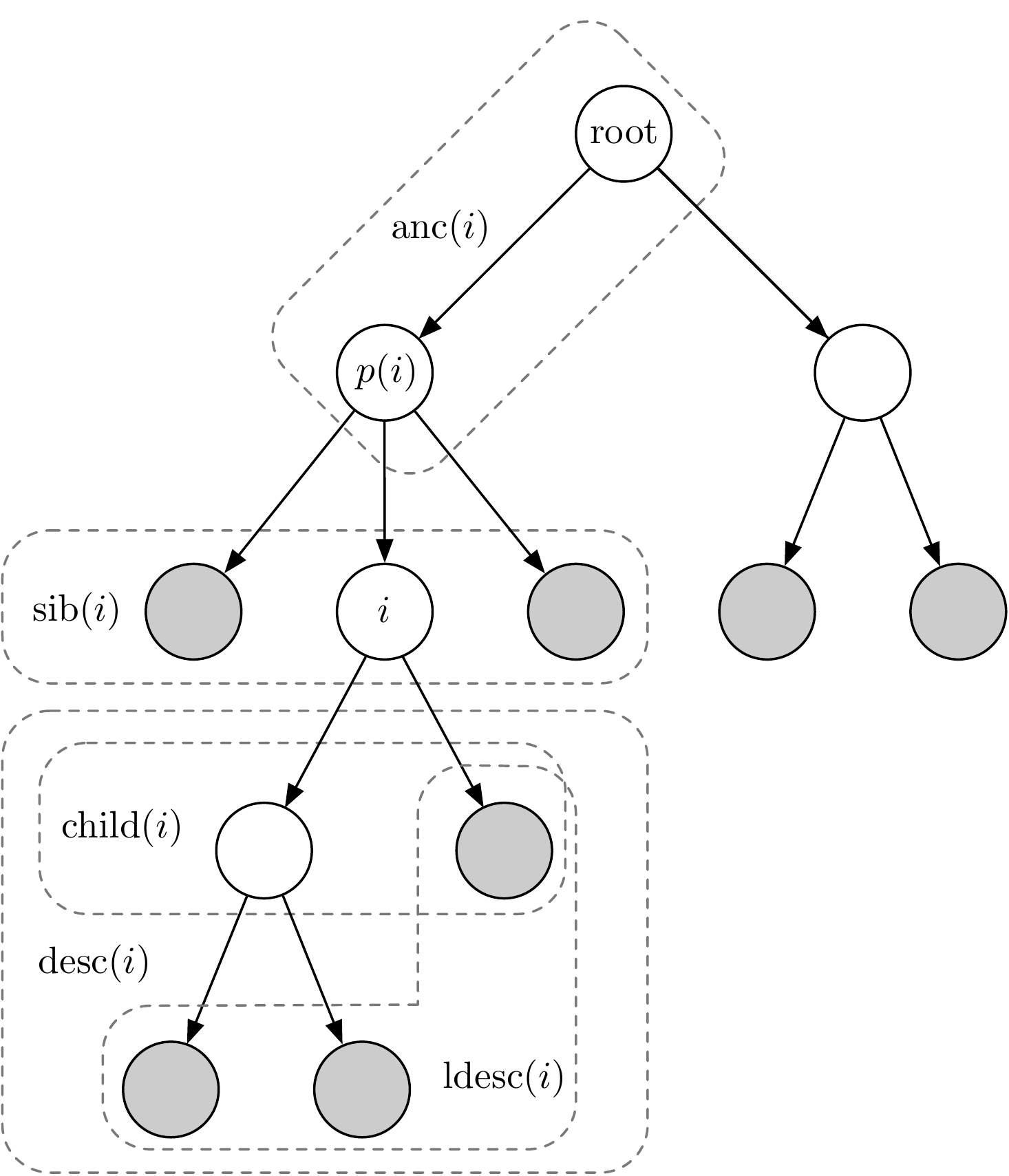}

    \caption{Subsets in link sharing hierarchy.} 
    \label{fig:hierarchy-def}
    
    \vspace{-8pt}
\end{figure}
\begin{table}[!h]
{\small
\begin{center}
\begin{tabular}{| l l l |}
\hline 
$p(i)$ & parent  & Next class on the path from $i$ to root, with  $p ({\rm root})=\emptyset$ \\
${\rm anc} (i)$ & ancestors & Set of classes on  the path from $i$ to root (incl. root) \\
${\rm sib} (i)$ & siblings & All classes with the same parent as  class~$i$ \\
${\rm child} (i)$ & child classes & Set of classes  with  $i$ as parent\\
${\rm desc} (i)$ & descendants & Set of classes with~$i$ as ancestor\\
${\rm ldesc} (i)$ & leaf descendants & Set of leaf classes with~$i$ as ancestor \\
\hline 
\end{tabular}
\end{center}
}
 \caption{Notation for class hierarchy.}
    \label{table:hierarchy}
    
    \vspace{-8pt}
\end{table}

Fig.~\ref{fig:hierarchy-def} is representative of the 
configuration of classful schedulers in Linux.  
In Linux traffic control \cite{linux-tc}, a scheduling discipline is referred to as  a {\it qdisc}. Class hierarchies are specified using configuration commands and 
are built starting from the top of the hierarchy, which is called {\it root qdisc}. Filter expressions are used to map packets to leaf classes. 

\smallskip 
In a class hierarchy, each class is associated with a weight or with a rate guarantee. In Fig.~\ref{fig:hierarchy}, classes
are assigned rate guarantees. Denoting the rate guarantee of class 
$i \in \I \cup \{{\rm root}\}$ by $g_i$, the guarantee must satisfy the 
superadditive property  
\[
g_i \ge \sum_{j \in {\rm child} (i)} g_j \, . 
\]
The guarantee of the root class is the link capacity $C$, that is, $g_{\rm root} = C$. 
There is an alternative specification of link sharing that is based on 
{\it weights}, 
where  $w_i > 0 $ is used to denote  the weight of class $i$. 
If three sibling classes, say classes $1, 2, 3$ have weights $w_1, w_2, w_3$, 
and all siblings are backlogged, the weights indicate that they will split the 
capacity made available to them as a group in the ratio $w_1 : w_2 : w_3$. Viewing link sharing in terms of weights 
is often more convenient, since it allows to express link sharing as dividing   
available bandwidth locally between siblings. In contrast, guarantees appear as global 
quantities with constraints across all classes. We emphasize that the concepts are equivalent.  
Guarantees that satisfy the superadditive constraints above can be viewed as weights, that 
is $w_i = g_i$ for each class $i$. Likewise, given the link capacity~$C$ and weights $w_i$ for 
each class $i$, 
an absolute bandwidth guarantee of class~$i$,~$g_i$,  is computed as 
\[
g_i = \prod_{\substack{j \in {\rm anc}(i) \cup \{i\}\\ j \neq  {\rm root}} }
\frac{w_j}{\sum_{k \in {\rm sib} (j)} w_k} \cdot C \, . 
\]
Fig.~\ref{fig:weights-vs-rates} depicts the relationship between weights and guaranteed rates. 
In the following we will work with weights~$w_i$, but ensure that 
they satisfy $w_i \ge \sum_{j \in {\rm child} (i)} w_j$ for each class~$i$. 
Then, we can use the terms weight and class guarantee interchangeably. 

\begin{figure}[!h]
\centering
\includegraphics[width=3in]{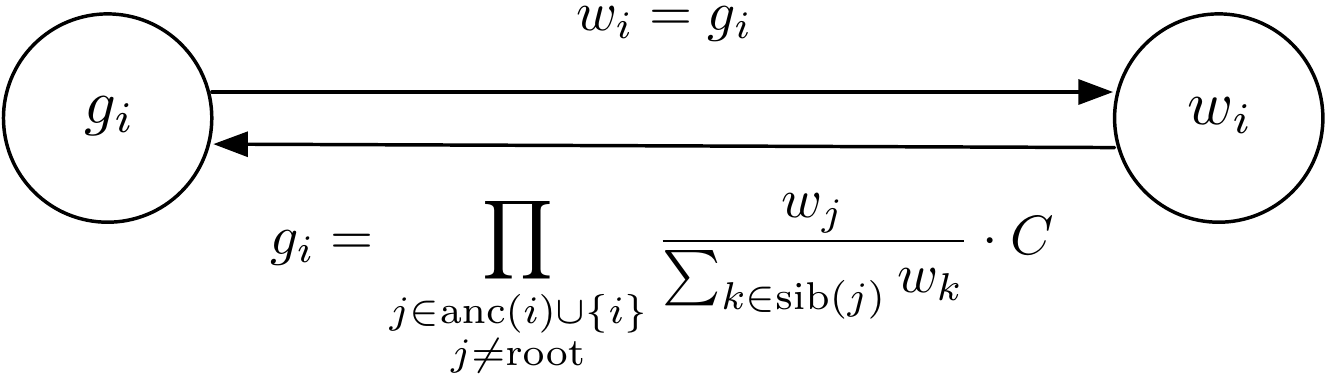}

    \caption{Class guarantees versus weights.} 
    \label{fig:weights-vs-rates}
    
    \vspace{-8pt}
\end{figure}

\vspace{3cm}

\section{Quantifying Link Sharing Goals}
\label{sec:goals}

In a non-hierarchical setting, link sharing between classes can be achieved 
by fair queueing algorithms that approximate a bit-by-bit round robin, 
resulting in a max-min fair rate allocation.  
We  define hierarchical max-min fairness as the result of applying (weighted) bit-by-bit round robin between each group of siblings in a class hierarchy. 
In \cite{HFQ}, such a scheduler is referred to as Hierarchical Generalized Processor Sharing (HPGS). The allocation of this scheduler is also the targeted allocation of HPFQ scheduling. 
Expressions that quantify the solution of this allocation exist for a non-hierarchical setting, 
but are not available for class hierarchies. In the following we quantify both 
the non-hierarchical and hierarchical notions of fairness.

\subsection{Max-min fair allocation}
\label{subsec:mmfair}

We formulate rate allocation for traffic classes 
with fixed-rate traffic at a link with fixed capacity $C$. We define 
\begin{center}
\begin{tabular}{l l}
$r_i$ & Rate request of class $i$, \\
$a_i$  & Rate allocation to class $i$ ($a_i \le r_i$), \\
$w_i$  &  Weight associated with class $i$. 
\end{tabular}
\end{center}
Rate requests are not made explicitly, but are determined by traffic 
arrivals from a class at the link and the resulting backlog. 
In a max-min fair allocation without weights, 
if a class is allocated less than it requests, it receives at least as 
much as any other class. As a consequence, two classes that do not satisfy their 
demand have the same allocation. Also, if the total demand exceeds the 
capacity then the entire link capacity is allocated. 
When specifying weights $w_i$ for each  class $i$, the weighted max-min fair allocation is defined 
by the following rules: 
\begin{enumerate}
\item[{ \it (R1)}] If $a_i < r_i$, then $\displaystyle \frac{a_i}{w_i} \ge \frac{a_j}{w_j}$ for each class $j \in \N$. 
\item[{\it (R2)}]   $\sum_{j\in \N} a_j = \min \bigl( \sum_{j\in \N} r_j , \, C \bigr) $. 
\end{enumerate}

Rule {\it (R1)} states that, if a class is not allocated its entire request, then its allocation 
 in proportion to its weight is as least as large as the (also proportional) allocation 
of any other class. 
The second rule simply ensures that either 
all requests are satisfied or all resources are allocated. 

A weighted max-min fair allocation creates a   
set $\S$ of {\it satisfied classes}, which receive their entire request ($a_i=r_i$). 
and a set $\N \setminus \S$  of {\it unsatisfied classes} with $a_i<r_i$.
Rule {\it (R1)} implies that 
$\frac{a_i}{w_i} = \frac{a_j}{w_j}$ for any two unsatisfied classes. 
The allocation is strategy-proof, since an unsatisfied class cannot increase its allocation by increasing or misrepresenting its request. 

If there is at least one unsatisfied class~$i$, we define the {\em fair share $f$} as 
\[
f = \frac{a_i}{w_i} \, , 
\]
which results in the allocation 
$a_j = \min \{ r_j \, , w_j f \}$. 

Supposing that there exist unsatisfied classes, rule {\it (R2)} yields   
\[
C  = \displaystyle \sum_{j \in \S } r_j  +  \sum_{j \notin \S } w_j f \, . 
\]
Solving for $f$ gives an expression for the fair share as  
\begin{align}
\label{eq:fairshare}
f  = \frac{C - \sum_{j \in \S } r_j}{\sum_{j \notin \S } w_j }  \, .  
\end{align}
As long as there is at least one unsatisfied class, the fair share 
is uniquely defined. Even though the expression for~$f$ is implicit, that is, 
$f$ is defined in terms of $\S$, and  $\S$ is defined in 
terms of $f$, the fair share $f$ can be computed, e.g., by a water filling algorithm as 
given in Algorithm~\ref{alg:wmm_alg}. 
The algorithm uses the fact that, with a fair share $f_n$, each class 
$i$ with $r_i \le    w_i f_n$ is satisfied. 
In the algorithm, the fair share is set to infinity when 
the total demand does not exceed the link capacity. The algorithm computes the fair share iteratively 
by initially assuming that no class with traffic is satisfied, and then labels classes as satisfied until 
the true fair share is obtained. 

\begin{algorithm}[t]
\SetKwRepeat{Do}{do}{while}
\DontPrintSemicolon

	\KwIn{
		Link capacity $C$  and a set $\N$ of classes with $r_i, w_i \ge 0$ for each class~$i \in \N$.}
	\KwOut{Fair share $f$ from~\eqref{eq:fairshare}.}

    \SetKwFunction{WMMMain}{MaxMinFair}
    \SetKwProg{Fn}{Function}{:}{}
    \Fn{\WMMMain{$\N$,~$\{r_i\}_{i\in \N}$,~$\{w_i\}_{i\in \N}, C$}}{
		\uIf{$\sum_{i\in \N} r_i \le C$}{
			\Return{$f \gets \infty$}\;
		}
		\uElse {
		$f_0 \gets 0$\;
		$n \gets 0$\;
		\Do{$f_n \neq f_{n-1}$}{
			$\S^n \gets \left\{i \mid r_i \le  w_i f_n\right\}$\;
			$f_{n+1} \gets \displaystyle \frac{C - \sum_{i\in \S^n}r_i}{\sum_{j\notin \S^n}w_i}$\;
			$n \gets n+1$\;
		}
		\Return{$f \gets f_n$}
		}
		
}
\caption{Computing the fair share~$f$.}
\label{alg:wmm_alg}
\end{algorithm}

\subsection{Hierarchical max-min fair allocation}
\label{subsec:hmmfair}

Next consider a class hierarchy as given in Fig.~\ref{fig:hierarchy-def}.
The requests and allocations of internal classes and the root consist of  
the total requests and allocations, respectively, of their child classes. 
That is, for each $i \in \I \cup {\rm root}$, 
\begin{align}
r_i = \sum_{j \in {\rm child}(i)} r_j \, , 
\quad 
a_i  = \sum_{j \in {\rm child}(i)} a_j    \, . 
\label{eq:aggregate}
\end{align}
With this notation, we can specify a max-min fair allocation 
for class hierarchies. 

A {\it hierarchical weighted max-min fair} ({\it HMM fair}) allocation is defined by these  two rules that hold for each  $i \in \L \cup \I$. 
\begin{enumerate}
\item[{\it (R1)}] If $a_i < r_i$, then $\displaystyle \frac{a_i}{w_i} \ge \frac{a_j}{w_j}$ for all 
$j \in {\rm sib} (i)$. 
\item[{\it (R2)}]   $\displaystyle \sum_{j\in \L} a_j = \min \bigl(\displaystyle \sum_{j\in \L} r_j , \, C \bigr) $. 
\end{enumerate}
The rules are analogous to those for  max-min fairness without a hierarchy. In essence, 
each parent allocates the capacity available to it to its child classes using the max-min fairness principle. 
According to {\it (R1)}, if  a class cannot satisfy its request, then its allocation 
 relative to its weight is at least as large as the allocation  
of any of its siblings relative to the weight of that sibling. 
The second rule makes sure that all available capacity is  utilized.
The allocation is strategy-proof for  each group of siblings, since it 
satisfies 
max-min fairness from Sec.~\ref{subsec:mmfair}, and, therefore, is  strategy-proof for the entire hierarchy. No class can obtain a larger allocation by increasing or misrepresenting its request. 

Using the aggregation in \eqref{eq:aggregate} and rule {\it (R2)}, we can make  two observations for a class~$i$:
\begin{enumerate}
\item[{\it (O1)}] If $a_i < r_i$ then $a_j < r_j$ for all $j \in {\rm anc}(i)$. 
\item[{\it (O2)}] If $a_i = r_i$ then $a_j = r_j$ for all $j \in {\rm desc}(i)$.   
\end{enumerate}
To explain {\it (O1)}, 
if a class receives a smaller rate than it requests, it will try to get more capacity from its parent, which, in turn, will seek to acquire capacity from its own parent, and so forth. Hence, if the request of a class is not satisfied,  the resources of all its ancestors will be exhausted, leaving them unsatisfied as well. Observation {\it (O2)} follows since requests and allocations of an internal class consist of  the 
aggregated requests and allocations of their child classes.  

If the rules for hierarchical max-min fairness are straightforward, the computation of the allocations to classes is much less so. The reason is that the capacity available at an internal class depends on the requests of leaf 
classes in all branches of the hierarchy. The results of an HMM allocation are specified in the following theorem. 

\begin{theorem} 
\label{theo:share-hier}
Given a link with capacity $C$, and a class hierarchy where each class $i$ has a request rate $r_i$ and a weight $w_i > 0$. 
Define $a_{\rm root} = \min \{ r_{\rm root}, C \}$.  \\
Then, the HMM fair allocation 
for each class $i \in \I \cup \L$ is 
\[
a_i = 
\begin{cases}
\min \left( r_i , w_i f_{p(i)} \right)  \, , & {\rm if } \ r_{p(i)} > a_{p(i)} \, , \\
 r_i \, , & \ {\rm otherwise}\, , 
\end{cases}
\]
where the fair share $f_k$ for each $k \in \I \cup \{ {\rm root}\}$ is  
\begin{align}
\label{eq:fairshare-hierarchy}
f_k = \frac{a_k - \sum_{j \in \S_k} r_j}{  \sum_{j \notin \S_k} w_j} \, , 
\end{align}
and where $\S_k$ is defined as 
\[
\S_k = \left\{ j \in {\rm child} (k) \mid r_j <  w_j f_k   \right\} \, . \
\]

\end{theorem}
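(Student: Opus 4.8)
The plan is to prove the theorem by induction on the depth of the class hierarchy, showing that the recursive formula produces an allocation satisfying rules \emph{(R1)} and \emph{(R2)}, and conversely that any HMM fair allocation must coincide with it. The key observation is that the recursion decouples the hierarchy level by level: once $a_k$ is known for an internal class $k$, the problem of distributing $a_k$ among $\child(k)$ is exactly the non-hierarchical weighted max-min fair problem of Subsec.~\ref{subsec:mmfair}, with ``link capacity'' $a_k$ and classes $\child(k)$ with requests $r_j$ and weights $w_j$. So the real content is (i) verifying that $a_k$ as defined is the correct aggregate to hand down, and (ii) checking consistency of the aggregation identity \eqref{eq:aggregate}.

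First I would establish the base of the recursion: $a_{\rm root} = \min\{r_{\rm root}, C\}$ is forced, since by \eqref{eq:aggregate} $r_{\rm root} = \sum_{j\in\L} r_j$ and $a_{\rm root} = \sum_{j\in\L} a_j$, so rule \emph{(R2)} is literally the statement $a_{\rm root} = \min\{r_{\rm root}, C\}$. Next, for the inductive step, fix an internal class or the root, call it $k$, and assume $a_k$ has been determined. I would split into the two cases of the theorem. If $r_k = a_k$ (class $k$ is satisfied), observation \emph{(O2)} forces $a_j = r_j$ for all $j \in \desc(k)$, matching the ``otherwise'' branch; I would note this is consistent with \eqref{eq:aggregate} because $\sum_{j\in\child(k)} r_j = r_k = a_k$. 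If $r_k > a_k$, then $k$ is unsatisfied, and I would invoke Theorem's non-hierarchical analogue: applying the max-min fair rules to $\child(k)$ with capacity $a_k$ gives a fair share, which by \eqref{eq:fairshare} specialized to this subproblem is precisely $f_k = (a_k - \sum_{j\in\S_k} r_j)/\sum_{j\notin\S_k} w_j$ with $\S_k = \{j \in \child(k) \mid r_j < w_j f_k\}$, yielding $a_j = \min(r_j, w_j f_k)$. (I would remark that whether one writes $r_j < w_j f_k$ or $r_j \le w_j f_k$ in the definition of $\S_k$ does not change the value of $f_k$ nor the resulting $a_j$, echoing the discussion around Algorithm~\ref{alg:wmm_alg}.) Then I would check that rule \emph{(R1)} for the siblings in $\child(k)$ holds — it is inherited directly from the non-hierarchical case — and that passing $a_j$ down to the next level keeps the induction going.

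To finish, I would verify rule \emph{(R2)} globally: summing $a_j$ over leaf classes telescopes through \eqref{eq:aggregate} up to $a_{\rm root} = \min\{\sum_{j\in\L} r_j, C\}$, which is exactly \emph{(R2)}. For uniqueness, I would argue top-down that the rules leave no freedom: $a_{\rm root}$ is forced, and given $a_k$, rule \emph{(R1)} restricted to $\sib(j)=\child(k)$ together with $\sum_{j\in\child(k)} a_j = a_k$ (from \eqref{eq:aggregate}) pins down each $a_j$ by the uniqueness of the non-hierarchical max-min fair solution.

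The main obstacle I anticipate is \textbf{subtle}: making the reduction to the non-hierarchical problem fully rigorous, in particular justifying that the ``capacity'' handed to $\child(k)$ is $a_k$ and \emph{not} something larger. This requires care because $a_k$ itself was computed as $\min(r_k, w_k f_{p(k)})$, and one must confirm that feeding $a_k$ downward, combined with \eqref{eq:aggregate} ($\sum_{j\in\child(k)} a_j = a_k$), does not create an inconsistency — e.g., that the $a_j$'s so produced actually sum to $a_k$ and not to $r_k$. Handling the boundary between satisfied and unsatisfied classes cleanly (where $a_k = r_k$ exactly) and confirming that the two branches of the formula agree there is the delicate point; everything else is a routine unwinding of the definitions.
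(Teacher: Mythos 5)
Your proposal is correct and follows essentially the same route as the paper's proof: a top-down induction over the hierarchy that, at each internal class $k$, splits on whether $k$ is satisfied (invoking \emph{(O2)}) or not, and in the latter case derives $f_k$ from rules \emph{(R1)} and \emph{(R2)} applied to ${\rm child}(k)$ with available capacity $a_k$. Your explicit framing of the inductive step as the non-hierarchical problem of Subsec.~\ref{subsec:mmfair} with capacity $a_k$, and your remark that the strict versus non-strict inequality in the definition of $\S_k$ is immaterial, are both consistent with (and slightly tidier than) what the paper does inline.
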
 

\begin{proof}
We proceed by performing an induction over the levels of the hierarchy, starting at the top.
The proof refers to rules {\it (R1)} and {\it (R2)} from  Sec.~\ref{subsec:hmmfair}.

Consider the root class, where we have $a_{\rm root}=\min \{r_{\rm root}, C\}$. If $ r_{\rm root} = a_{\rm root}$, 
then $a_n = r_n$ for every child $n \in {\rm child}({\rm root})$. 
Moreover, with observation {\it (O2)}, we have 
$a_j = r_j$ for all $j \in {\rm desc} ({\rm root}) = \I \cup \L$. 

If $r_{\rm root} >  a_{\rm root}$, 
there exists an $n \in {\rm child}({\rm root})$ with $a_n < r_n$. Define 
$f_{\rm root} = \tfrac{a_n}{w_n}$. By {\it (R1)}, 
for every class $m \in {\rm child}({\rm root})$ with $a_m < r_m$ we have $f_{\rm root} = \tfrac{a_m}{w_m}$. 
We therefore have for each child class $n$ of the root that 
\begin{align*}
a_n = \min \{ w_n f_{\rm root}\, , r_n \}  \, , 
\end{align*}
as well as  
$\S_{\rm root} = \{ i \in {\rm child}({\rm root}) \mid r_i <  w_i f_{\rm root}\ \}$. By rule {\it (R2)}
we get 
\[
\sum_{n \notin \S_{\rm root}} w_n f_{\rm root} + \sum_{n \in \S_{\rm root}} r_n = a_{\rm root} \, , 
\]
which gives~\eqref{eq:fairshare-hierarchy}  for $i={\rm root}$.

For the induction step, we consider a class~$k$ and assume that the 
allocation has been computed for all  ancestors $l \in {\rm anc}(k)$. 
If the allocation of its parent $p(k)$ was such that $a_{p(k)} = r_{p(k)}$, then due to observation {\it (O2)}, we get $a_k = r_k$. 
Now consider $a_{p(k)} < r_{p(k)}$. Then, the parent has computed 
a fair share $f_{p(k)}$ and the allocation to its child class~$k$ was 
$a_k = \min \{ w_k f_{p(k)}\, ,  r_k \}$. If $k$ is a leaf class ($k \in \L$), 
we are done. If $k$ is an internal class, there are two cases. 
If $r_k \le w_k f_{p(k)}$, then $a_k =r_k$, and, by {\it (O2)}, $a_l = r_l$ for 
each $l \in {\rm desc} (k)$. 
If $r_k > w_k f_{p(k)}$, then there exists a class $l \in {\rm child}(k)$
with $a_l < r_l$. 
Defining  
$f_{k} = \tfrac{a_l}{w_l}$,  by {\it (R1)}, we have $a_j =  w_j f_k$ for each   
$j \in {\rm child} (k)$ with $a_j < r_j$. We also define 
$\S_{k} = \{ i \in {\rm child}(k) \mid r_i \le  w_i f_k \}$. 
With rule {\it (R2)} we obtain   
\[
\sum_{n \notin \S_k} w_n f_k + \sum_{n \in \S_k} r_n = a_k \, , 
\]
where $a_k = w_k f_{p(k)}$. Solving the equation for $f_k$ 
gives~\eqref{eq:fairshare-hierarchy} for $i=k$.
\end{proof}

\begin{algorithm}[t]
\SetKwRepeat{Do}{do}{while}
\DontPrintSemicolon

	\KwIn{
		Link capacity $C$,   a set of $\N$ of classes with $r_i, w_i \ge 0$ for each class~$i \in \N$,   and  a class hierarchy $M$ supporting the operations in Fig.~\ref{fig:hierarchy-def}.}
	\KwOut{
		Fair shares~$\{f_i\}_{i\in \I \cup {\rm root}}$ from~\eqref{eq:fairshare-hierarchy}. 
	}
    \SetKwFunction{HMMMain}{HierarchicalMaxMinTopDown}{
    \SetKwProg{Fn}{Function}{:}{}
    \Fn{\HMMMain{$\N$, $M$, $\{r_i\}_{i\in \N}$, $\{w_i\}_{i\in \N}, C$}}{
		\ForEach{$i \in I\cup \{{\rm root}\}$}{
			$r_i \gets \sum_{j\in {\rm ldesc} (i)} r_j$\;
		}
		$L \gets \{{\rm root}\}$\;
		$a_{\rm root}\gets\min(r_{\rm root}, C)$\;

		\Do{$L\neq\emptyset$}{
			$i \gets$\ Select an element from set $L$\;
			$L \gets L\setminus \{i\}$\;

			\uIf{$i \notin  \L$}{
				$f_i \gets$\ \WMMMain{${\rm child}(i)$, $\{r_j\}_{j\in {\rm child}(i)}$, $\{w_j\}_{j\in {\rm child}(i)}$, $a_i$}\;
				\ForEach{$j\in {\rm child}(i)$}{
					$a_j \gets \min(r_j, w_j f_i)$\;
					$L \gets L\cup \{j\}$
				}
			}
		}
		\Return{$\{f_i\}_{i\in \I \cup \{{\rm root}\}}$}
	}
	}
\caption{Computing fair shares for HMM fairness.}
\label{alg:hmm_topdown}

\end{algorithm}

The values of the fair shares in a class hierarchy can be computed with  Algorithm~\ref{alg:hmm_topdown}. 
The algorithm starts at the top of the hierarchy and computes the fair share of the root. Then it proceeds to compute the fair share of children of the root 
and continues to traverse the class hierarchy in a top-down fashion (in no particular order) until a leaf class is reached. The algorithm uses the fair share computation from Algorithm~\ref{alg:wmm_alg}. 

The allocations we discussed are simplified in that arriving traffic and service are expressed in terms of rates. 
Without a hierarchy, expressions for max-min fair allocations for intermittent  bursty traffic exist, 
but they require that traffic be shaped, e.g., by token buckets \cite{parekh93} 
or more general concave bounding functions~\cite{GPS2018}. 

To measure how well a scheduling algorithm realizes an HMM fair allocation for time-variable traffic, we will instead use a fairness metric that measures the deviation from the allocation of an ideal hierarchical bit-by-bit round-robin scheduler.



\section{The HLS scheduler}
\label{sec:hls}

We next present the Hierarchical Link Sharing (HLS) scheduler which allocates rates according to the principle of HMM fairness. We have implemented HLS as a  Qdisc in the Linux kernel \cite{HLS-github}. 

The design of HLS departs from that of HTB and CBQ, 
which both track the transmission rates of classes  
using moving averages in CBQ and token buckets in HTB. If a class 
requires additional bandwidth, both HTB and CBQ allow the class to borrow 
bandwidth from other classes in a greedy fashion.
(HTB and CBQ descriptions prefer the term `borrow,' 
but the bandwidth so acquired is never returned.) 
Different from CBQ and HTB, HLS does not measure the transmission rates of classes.
Instead, it gives transmission quotas to classes such that HMM fairness is satisfied.
Minimum rate guarantees follow as a consequence of achieving HMM fairness.

In HLS, each class~$i$ is associated with an integer 
weight $w_i>0$, which can be set to the rate guarantee of the class 
(see Sec.~\ref{sec:hierarchy}). 

At its core, HLS is a non-hierarchical DRR scheduler with a time-variable 
quantum for each class, which we refer to as {\it quota}. 
Each round of the round robin visits each class that is designated as {\it active}, one by one, in an arbitrary order.  
A leaf class is active if it is backlogged at the start of a round. 
An internal class is active if at least one of its child classes is active. 
We distinguish two kinds of rounds, {\it main rounds} and {\em surplus rounds}, where 
each main round may be followed up by one or more surplus rounds. 
The quota of a class is recomputed and assigned during a visit in a main round. 
If at the end of a main round some classes have unused quota, a surplus round is 
started, where the unused quotas are distributed to classes with a backlog. An additional surplus round  is started if after the completion of a surplus round there is still unused quota left. 

Each active leaf class is visited once per round (main or surplus). 
The determination of the set of active classes 
is done at the start of a round. If a class becomes idle during a round, it remains idle until the end of that  round, even if there 
is an arrival to that class in the middle of the round. 
We use $\leafClasses_{\rm ac}$ and $\internalClasses_{\rm ac}$, respectively, to denote the set of active leaf and internal classes in a main or surplus round.
We also use~$\activeClasses = \leafClasses_{\rm ac}\cup\internalClasses_{\rm ac}$ to denote the set of all non-root active classes.

Each class~$i$ maintains a {\it balance}, denoted by $B_i$, which 
maintains the number of bytes that the class is allowed to transmit (if $i \in \leafClasses$) or that its leaf  
descendants  are allowed to transmit (if $i \in \internalClasses$) in the current round.  
The initial setting is   
\begin{align*}
B_i = 
\begin{cases}
Q^* \, ,  & {\rm if } \  i = {\rm root} \, , \\
0 \, ,  & {\rm otherwise} \, , 
\end{cases}
\end{align*}
where  $Q^* > 0$ denotes the total number of bytes from all classes that can be transmitted in a main or surplus round. 
In Section~\ref{sec:Qstar}, we address how to select $Q^*$.   
In a main round, the root distributes its balance across its child classes, who, in turn, distribute their balance to their own child classes, and so forth. 
The root and active internal classes also maintain a {\it residual}, 
denoted by $R_i$~($i \in \internalClasses_{\rm ac} \cup \{{\rm root}\} $), which contains 
permits for the transmission of bytes that were collected 
from descendants in the previous round, with initial setting $R_i =0$.

In each round, all active internal classes recompute the number of bytes 
that a child class with weight set to one can transmit in the  round, which is referred to 
as the {\it fair quota} and denoted by $F_i$ for class $i$. 
For a class $i \in  \internalClasses_{\rm ac} \cup \{{\rm root}\}$, the fair quota is 
defined as  
\begin{align}
F_i = \left\lfloor\frac{B_{i} }{w^{\rm ac}_i} \right\rfloor \, , 
\label{eq:fair-quota}
\end{align}
where 
\begin{align}
\label{eq:weight-active-children}
w^{\rm ac}_{i} = \sum_{k \in {\rm child} (i) \cap (\leafClasses_{\rm ac} \cup \internalClasses_{\rm ac})} w_k
\end{align}
denotes the sum of the weights of the active child classes of class~$i$. The rounding by the floor function rounds the quantity to an integer multiple of unit byte to avoids floating point operations in the Linux kernel. The root class recomputes $F_{\rm root}$ only in a main round and sets $F_{\rm root}=0$ in all surplus rounds.

Before computing the fair quota, each class~$i \in  \internalClasses_{\rm ac} \cup \{{\rm root}\}$ updates its balance and residual. For the root class the update is 
\begin{align}
\label{eq:balance0}
B_{\rm root}    = B_{\rm root} +  R_{\rm root} \, , \quad  
R_{\rm root}  = 0 \, , 
\end{align}
that is, the residual is added to the balance and then reset. 
For an active internal class, the update is 
\begin{align}
\label{eq:balance1}
\!\!B_i    = B_i +  R_i + w_i F_{p(i)}\, , \ 
B_{p(i)}   = B_{p(i)} -  w_i F_{p(i)} \, ,  \
R_i & = 0 \, . 
\end{align}
Here, the balance of class $i$ is increased by $w_i F_{p(i)}$,  and 
the balance of the parent $p(i)$ is reduced by the same amount. We refer to 
$w_i F_{p(i)}$ as the {\it quota} of class~$i$. 
Also, the residual~$R_i$ is added to the balance and then reset.  
Since the quota of an internal class depends on the fair quota of 
the parent class, the update of balances and computations of the quota is performed in a top down fashion. Without the rounding in~\eqref{eq:fair-quota}, 
every internal class  would have a zero balance 
after the update. With rounding, the remaining balance of a  class~$i \in \internalClasses_{\rm ac} \cup \{{\rm root}\}$ after the update of all its active child classes is bounded by $B_i < w^{\rm ac}_i$. Note that  the unit of $w^{\rm ac}_i$ is in bytes, since it is the remainder of the integer division in~\eqref{eq:fair-quota}. 

Before an active leaf class $i \in \leafClasses_{\rm ac}$ transmits in a round, it performs the update 
 \begin{align}
\label{eq:balance2}
B_i    = B_i  + w_i F_{p(i)}\, , \quad  
B_{p(i)}   = B_{p(i)} -  w_i F_{p(i)} \, . 
\end{align}


\begin{figure}[t]
  \centering
 	\includegraphics[width=0.6\columnwidth]{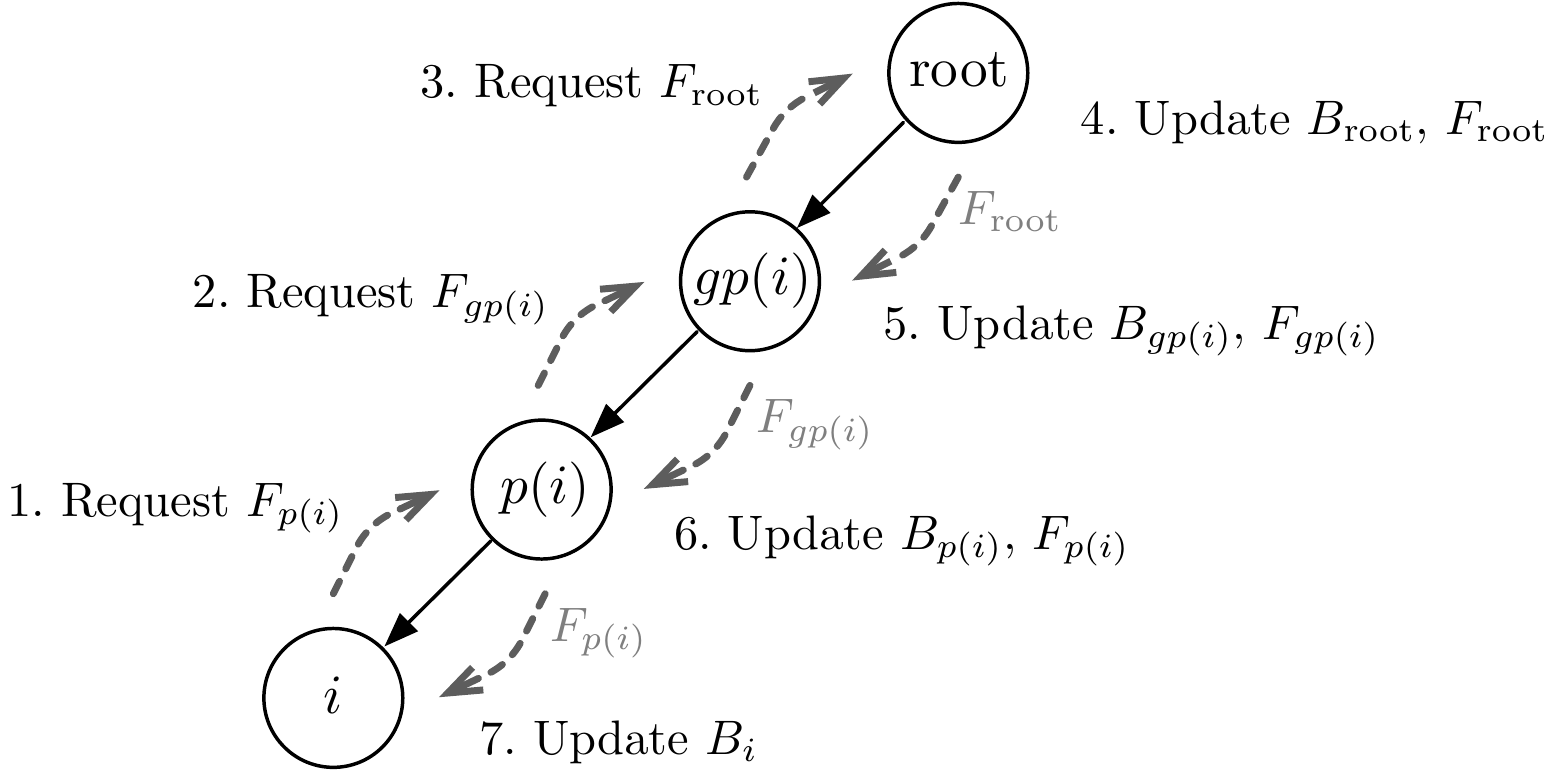}
	
  \caption{Updating fair quotas and balances.}
 \label{fig:update-quota}
\end{figure}
In the HLS Qdisc implementation, the update of  balances in~\eqref{eq:balance0}--\eqref{eq:balance2} 
and the computation of fair quotas in~\eqref{eq:fair-quota}
is initiated by the leaf classes, which is illustrated in  Fig.~\ref{fig:update-quota}. In the figure, node~$i$ represents an active leaf class. When this class is visited in the current round it requests the fair quota~$F_{p(i)}$ from 
its parent. If the parent has not previously computed its quota in the current round, it sends a request for the fair quota~$F_{gp(i)}$ 
to its own parent~$gp(i)$ (we use $gp(i)$ to denote the 
grandparent of class~$i$), and so forth. If the root is reached and the scheduler is in a main round, the balance~$B_{\rm root}$ and the fair quota~$F_{\rm root}$ is computed, and 
then~$F_{\rm root}$ is passed to $gp(i)$. In a surplus round the root returns $F_{\rm root}=0$. 
Next, the internal classes~$gp(i)$ and $p(i)$ use the fair quotas from their 
respective parent to update their balances and compute their own fair quotas. In the last step,  leaf class $i$  updates its balance. 
When the requests~(steps~1--3 in Fig.~\ref{fig:update-quota}) reach a class that already has computed its quota in the current round, no further upstream requests are made. In this fashion, each 
class~$i \in \internalClasses \cup \{{\rm root}\}$  updates its quota only once and updates its balance at most 
$1+ |{\rm child} (i)|$ times per round. 

When an active leaf class $i$ is visited by the round robin, it  updates its balance $B_i$ according to~\eqref{eq:balance2}  before  transmitting packets. 
If the packet at the head of the queue has length $L$ and 
$B_i \ge L$, the packet is transmitted, followed by the update 
\begin{align}
\label{eq:transmit}
B_i   = B_i -  L\, , \quad 
B_{\rm root}  = B_{\rm root} +  L \, . 
\end{align}
By increasing $B_{\rm root}$ for each transmitted packet, the root class accrues a balance  that will be distributed in the next main round.  
Class~$i$ can continue transmitting packets as long as it has 
a sufficient balance. 
If the packet at the head of the queue has size $L$ and $B_i < L$, 
the scheduler turns to the next class in the round robin. 
If a leaf class $i \in \L$ is served and has no more packets to transmit, 
it becomes {\it idle} and returns its balance $B_i$ to its parent 
with the update 
\begin{align}
\label{eq:inactive1}
R_{p(i)}  = R_{p(i)} + B_i \, , \quad  B_i   = 0  \, . 
\end{align}
An internal class becomes idle if all its children are idle.
An idle internal class $i \in \I$ returns its balance $B_i$ and residual $R_i$to its parent by computing 
\begin{align}
\label{eq:inactive2}
R_{p(i)}  = R_{p(i)} + B_i +R_i \, , \quad  B_i   = 0  \, , \quad  R_i   = 0 
 \, . 
\end{align}

Now we see the role of the residual. The residual $R_i$ of an internal 
class $i$ or the root collects  the returned balances from child classes that became idle in the current round. The rationale for not adding the returned balance of an idle child class  immediately  to 
the balance of the parent is to prevent the returned balance from being used in the current round. Doing so would favor leaf classes that are visited later in the round robin. 
By adding the residual to the balance only at the start of a new round,  we ensure that 
all descendants can obtain a portion of the unused balance.

HLS starts a new main round only if the sum of all quotas that has been distributed to  classes has been used for transmissions. Note that a class does not use up its full quota only if it became idle in the current round. This results in  the unused balance (`surplus') being added to the residual of the parent class. If this happens, the residuals accrued in a  round will be distributed to descendant classes in subsequent {\it surplus rounds}. 
The condition to start a surplus round is that at least one internal class $i$  
satisfies $B_i + R_i\ge w^{\rm ac}_{i}$ where the unit of~$w^{\rm ac}_i$ in this case is in bytes instead of unitless, meaning that the class computes a nonzero quota in \eqref{eq:fair-quota} using its balance and residual.
A surplus round operates just like a main round. First, all backlogged classes are marked as active followed by a complete round robin of active classes with the updates from \eqref{eq:fair-quota}--\eqref{eq:inactive2}. The only difference to a main round is that $F_{\rm root}$ is set to zero, meaning that no new quota is distributed from the root. 
If, at the end of a surplus round, there still exists an internal class~$j$ with~$B_j + R_j\ge w^{\rm ac}_j$ another surplus round is started. This continues, until no internal class satisfies the condition, in which case a new main round is started. 

With the updates of the balance counters in~\eqref{eq:balance0}--\eqref{eq:inactive2},  
the sum of  
balances and residuals  of all  classes satisfies 
the invariance  

\begin{equation}
\label{eq:invariant}
\sum_{i \in \allClasses} B_i + \sum_{i \in I \cup\{\rootClass\}} R_i \equiv Q^* \, .
\end{equation}
Since  balances are permits for transmission and the residuals are unused permits for 
transmission, maintaining the invariance ensures that the maximum amount of traffic transmitted in a round does not drift.

\section{Dimensioning of the Round Size}
\label{sec:RoundSize}

The HLS scheduler begins a new main round whenever $Q^*$ bytes have been transmitted. There are two considerations for the selection of $Q^*$. On the one hand, $Q^*$ should not be selected too large, otherwise, the scheduler 
reacts too slowly to changes of the  set of active leaf classes. 
On the other hand, if $Q^*$ is selected too small, the quotas that are 
passed down to  leaf classes may not allow the transmission of any packet. 
In this section we derive a sufficient condition for a lower bound on $Q^*$. 
Our implementation of HLS uses this lower bound to  adjust $Q^*$ dynamically at the start of a main or surplus rounds. 
We will simplify the derivation of the bound by rearranging 
the order in which HLS updates the balance  and transmit packets from classes. Specifically, 
we consider that at the begin of each main or surplus round, HLS updates the quotas and balances  for all classes, before transmitting any packets. As we will show in the next subsection, 
this modification does not alter the transmission schedule of HLS. 

\subsection{Replenishment Phase and Transmission Phase}
\label{sec:hls:phases}

Consider the class hierarchy from \reffig{fig:hierarchy}. Assume that only leaf classes $B2$, $B1$, and $A1$ are backlogged and visited in this order. 
The sequence  of  quota and balance updates and transmissions of these classes in HLS is illustrated in \reffig{fig:hls:replenish_order:1}. In the figure,  quota and balance updates of a class are shown as gray boxes, and  packet transmissions are shown as white boxes. The order follows from the description of HLS in  Sec.~\ref{sec:hls}. Following the recursive process illustrated in Fig.~\ref{fig:update-quota},  a visit of class $B2$ results in  updates first at the {\it root} class, then at $B$, and finally at $B2$. When $B1$ is visited next, 
only the quota and balance of $B1$ are updated, as the quotas of its ancestors ($B$ and {\it root}) have already been computed. When class $A1$ is visited, 
the quotas of $A$ and $A1$ are updated. 
As shown in Fig.~\ref{fig:update-quota}, in this fashion HLS alternates between  quota updates and  packet transmissions as it visits  active leaf classes in a round of the round robin. 

Now consider a modification to HLS, where each main and surplus round starts with a  \emph{replenishment phase}, in which the quotas and balances of all classes are updated according 
to \eqref{eq:fair-quota}--\eqref{eq:balance2}. 
The replenishment phase is then followed by a \emph{transmission phase}, which consists of a round robin that visits each active class and transmits from each 
active class, while updating the balance and residual according to \eqref{eq:transmit}--\eqref{eq:inactive2}. 
We will show that this modification does not alter the order of packet transmissions compared to the  HLS scheduler described in Sec.~\ref{sec:hls}.
\reffig{fig:hls:replenish_order:2} illustrates the updates and transmissions for the example.  
Here, HLS replenishes every active class in the replenishment phase and then visits each active leaf class for the packet transmission in the transmission phase.
We next show that rearranging the quota replenishments and packet transmissions in this manner does not change the behavior of HLS.

\begin{figure*}[ht]

\centering
	\subfigure[Order in HLS.]{
 	\includegraphics[width=0.65\columnwidth]{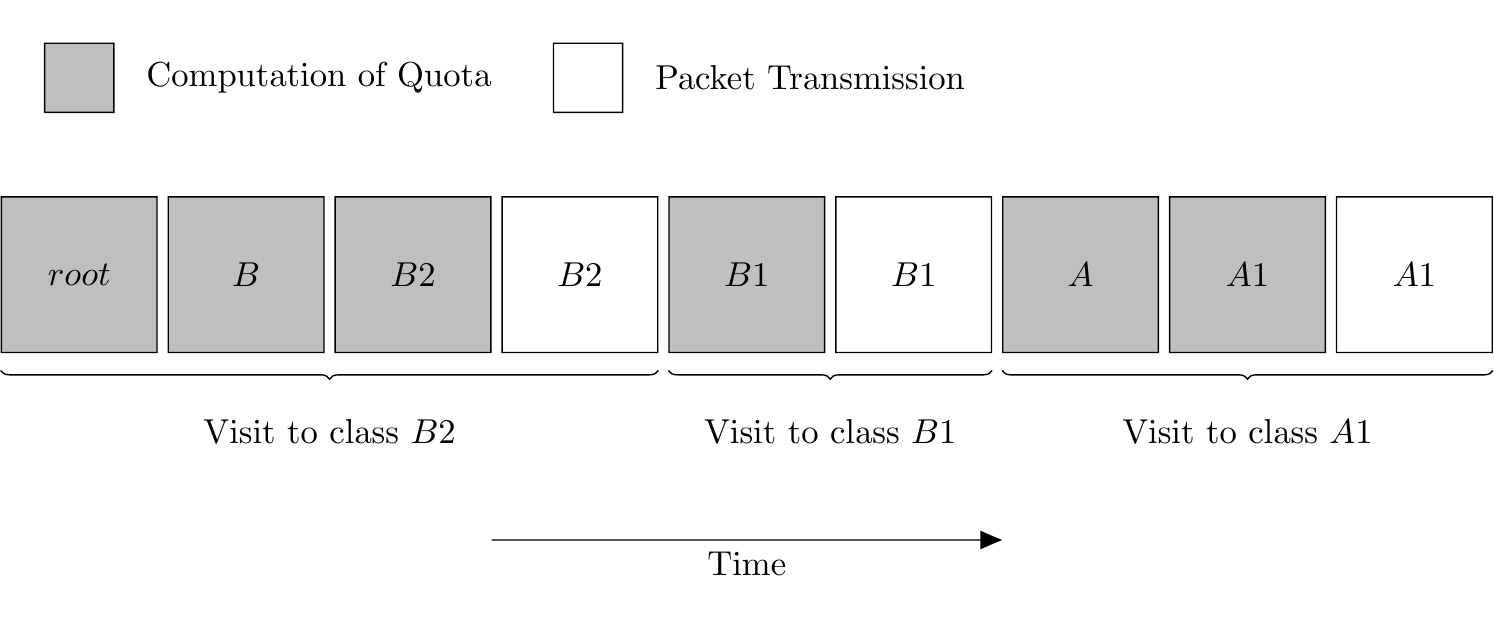}
 \label{fig:hls:replenish_order:1}
	}
	
\centering 	
	\subfigure[Order after rearrangement.]{
 	\includegraphics[width=0.65\columnwidth]{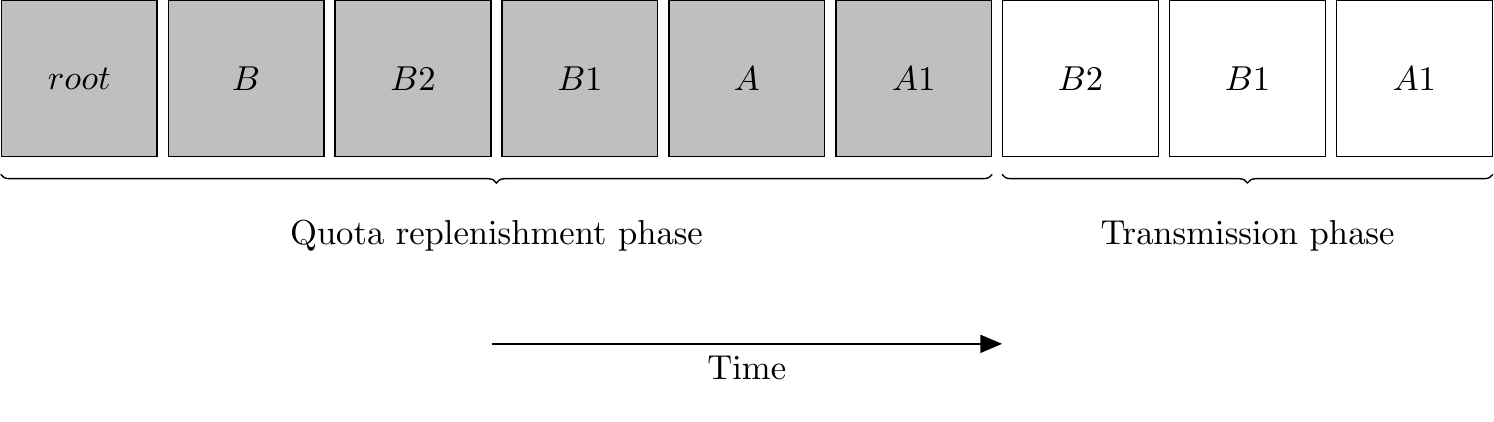}
	  \label{fig:hls:replenish_order:2}
	}   

\caption{Sequence of quota replenishments (gray) and transmissions (white) of classes.}
\end{figure*}

%

\begin{lemma}
\label{lemma:hls:phases}
The modified operation of  HLS with a replenishment phase and transmission phase as described above does not change the order of  packet transmissions in HLS.
\end{lemma}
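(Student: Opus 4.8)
The plan is to show that, within any single round, the sequence of packet transmissions is insensitive to how the quota/balance updates of \eqref{eq:fair-quota}--\eqref{eq:balance2} are interleaved with the transmissions of \eqref{eq:transmit}, and then to chain this across rounds. I would run an outer induction on the round index $m$ (counting main and surplus rounds alike), with induction hypothesis $H(m)$: after $m$ complete rounds, the entire scheduler state --- every balance $B_i$, every residual $R_i$, the queue contents, and the set of backlogged classes --- is identical under the original HLS of Sec.~\ref{sec:hls} and under the modified HLS with a separate replenishment phase and transmission phase. The base case $H(0)$ is just the common initial state.

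For the induction step I assume $H(m-1)$ and consider round $m$. Since the set $\activeClasses$ of active classes is fixed at the start of the round, it is the same in both variants, and hence so is every $w^{\rm ac}_i$ in \eqref{eq:weight-active-children}. I then claim that every fair quota $F_i$, and thus the quota $w_i F_{p(i)}$ each class receives, has the same value in both variants, which I prove by an inner induction on the depth of $i$. For the root: in a main round $F_{\rm root}$ is obtained from $B_{\rm root}+R_{\rm root}$ via \eqref{eq:balance0} and \eqref{eq:fair-quota}, and in \emph{both} variants this happens before any packet of round $m$ is transmitted --- in the original it is triggered by the recursion of \reffig{fig:update-quota} at the first active leaf visited, whose ancestor chain has not yet been touched --- so it uses the common start-of-round value; in a surplus round $F_{\rm root}=0$ by definition. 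For an internal class $k$, \eqref{eq:balance1} followed by \eqref{eq:fair-quota} makes $F_k$ a function of only the start-of-round values of $B_k$ and $R_k$, the constant $w_k$, $w^{\rm ac}_k$, and $F_{p(k)}$; since $k$ computes its quota exactly once per round (the recursion stops at the first ancestor that already has) and $F_{p(k)}$ agrees by the inner hypothesis, $F_k$ agrees too. The crucial structural fact here is that the only quantities a transmission in \eqref{eq:transmit} ever modifies are a leaf balance $B_i$ and $B_{\rm root}$, and neither is re-read by \eqref{eq:fair-quota}--\eqref{eq:balance2} after it matters; in particular $F_{\rm root}$ is pinned down before the first transmission, so no transmission in round $m$ can perturb any $F_i$, and the final post-replenishment balance of each internal class equals its start-of-round balance (after adding its residual and its own quota) minus the fixed sum of quotas of its active children, independent of the draw-down order.

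It follows that when an active leaf $i$ is visited, its balance available for transmission equals its start-of-round balance plus $w_i F_{p(i)}$ in both variants (for a leaf, \eqref{eq:balance2} is applied exactly once, immediately before $i$ transmits), and transmissions by other leaves touch only $B_{\rm root}$, never $B_i$; since the round robin visits the active leaves in the same fixed order, $i$ emits exactly the same packets in the same order in both variants, and hence so does the whole round. Reading off the end-of-round state: the same leaves exhaust their queues, so the same idle-return updates \eqref{eq:inactive1}--\eqref{eq:inactive2} are applied --- and an idle leaf's return writes only $R_{p(i)}$, which the parent already reset to $0$ when it replenished and does not consult again until the next round, so it is invisible to round $m$ in both variants --- the same bytes accrue to $B_{\rm root}$, the internal balances and residuals coincide, and the queues match, establishing $H(m)$; moreover the test $B_i+R_i\ge w^{\rm ac}_i$ that decides whether round $m+1$ is another surplus round or a new main round depends only on this post-round state, so that choice is the same as well. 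The step I expect to require the most care is the inner induction showing the fair quotas are transmission-independent --- specifically, pinning down that $F_{\rm root}$ is computed before any transmission of the round, and verifying that $B_{\rm root}$ and the leaf balances, the only state transmissions alter, never feed back into a fair-quota or quota-distribution computation within the same round.
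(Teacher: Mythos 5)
Your proof is correct, but it takes a genuinely different route from the paper. The paper argues locally: it considers an adjacent pair consisting of a packet transmission of some class~$i$ immediately followed by a quota replenishment of another class~$j$ in the original schedule, shows that the two operations commute (the transmission only alters $B_i$ and ancestor balances, while the replenishment of~$j$ either reuses already-cached fair quotas on common ancestors or reads balances the transmission never touched), and then bubbles all replenishments to the front of the round by repeated swaps. You instead run a global induction on the round index with the hypothesis that the entire scheduler state coincides at round boundaries, plus an inner induction on tree depth showing every fair quota $F_i$ is determined by start-of-round state alone --- the key observations being that $F_{\rm root}$ is pinned down before the first transmission of the round, that \eqref{eq:transmit} only writes leaf balances and $B_{\rm root}$, and that an idle child's return to $R_{p(i)}$ is not re-read until the next round because each class is replenished exactly once per round. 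Both arguments rest on the same structural facts about which state each update reads and writes; the paper's swap argument is shorter and avoids tracking the full state, while yours makes the invariants explicit and delivers a slightly stronger conclusion (identity of balances, residuals, and queues at every round boundary, and hence agreement on whether the next round is a main or surplus round), which the paper's version needs implicitly anyway to chain the swaps across rounds. Your self-identified point of care --- that $B_{\rm root}$ and the leaf balances, the only quantities transmissions modify, never feed back into a fair-quota computation within the same round --- is exactly the right crux and is handled correctly.
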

\begin{proof}
Consider the unmodified HLS scheduler, where the packet transmission of some class~$i$ is immediately followed by the quota replenishment of another class~$j$.
During the packet transmission of class~$i$, only the balances of class~$i$ and its ancestors are updated via \refeq{eq:balance2},~\refeq{eq:inactive1}, and \refeq{eq:inactive2}.
Suppose we swap the packet transmission of class~$i$ and the quota replenishment of class~$j$.
In that case, the only differences observed by the quota replenishment of class~$j$ are the balances of class~$i$ and its ancestors.
Upon computing the fair quota of an ancestor~$k$ of~$j$, if class~$k$ is also an ancestor of class~$i$, class~$k$ is already replenished, and HLS reuses~$\fairQuota k$, which is unchanged by the swapping.
If class~$k$ is not an ancestor of class~$i$, then its balance is not altered and the computed~$\fairQuota k$ remains the same.
So, the quota replenishment of class~$j$ is not affected by the swapping.	

Furthermore, as HLS replenishes class~$i$ before its packet transmission and the quota replenishment of class~$j$, the balance~$B_i$ of class~$i$ remains untouched by the swapping.
Therefore, we can swap the quota replenishment of class~$j$ and the packet transmission of class~$i$ without changing the resulting transmission order of packets.
By repeatedly swapping quota replenishments and packet transmissions of classes, we arrive at the modified HLS scheduler as described above. 
\end{proof}


For the remainder of Sec.~\ref{sec:Qstar} as well as Sec.~\ref{sec:fairness} and Sec.~\ref{sec:gap}, we assume that HLS rearranges the quota replenishments and the packet transmissions into the replenishment phase and the transmission phase. Also, we  use ~$\lmax i$ to denote the maximum packet size of a leaf class~$i\in\leafClasses$.

\subsection{Selection of $Q^*$}
\label{sec:Qstar}
We next present a lower bound on $Q^*$, the number of bytes 
that are transmitted in a main round and subsequent surplus rounds. 
The lower bound ensures that there is at least one packet that can be transmitted in a main round, thus ensuring that the HLS scheduler is work-conserving. As a remark, a surplus round without a transmission is not an issue, since it will be followed by a  main round.

\begin{theorem} \label{theorem:Qstar}
Setting 
\[
Q^* = \sum_{i \in \A} w_i  +  \sum_{i \in \L_{\rm ac}} L^{\rm max}_i  \, , 
\]
ensures that at least one packet can be transmitted in each main round. 
\end{theorem}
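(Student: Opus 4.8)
The plan is to argue on the rearranged form of HLS licensed by \reflem{lemma:hls:phases}, in which every main round first performs a replenishment phase that updates the balances and fair quotas of all active classes via \eqref{eq:fair-quota}--\eqref{eq:balance2}, and only then performs a transmission phase that visits the active leaf classes. I would fix an arbitrary main round and inspect the balances immediately after its replenishment phase, and show that at that instant some active leaf class $i\in\leafClasses_{\rm ac}$ has $\balance{i}>\lmax{i}$. This suffices: such a class is backlogged at the start of the round, so its head-of-queue packet has size at most $\lmax{i}<\balance{i}$; moreover, during the transmission phase the only operations that decrease the balance of an active leaf class are its own transmission \eqref{eq:transmit} and its becoming idle, and for a backlogged class neither can occur before that class is first visited, so $\balance{i}$ is still strictly above $\lmax{i}$ when $i$ is reached, and $i$ transmits at least one packet.

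The core of the proof is a byte-counting argument right after the replenishment phase. First, all residuals have by then been absorbed into balances: \eqref{eq:balance0} zeroes $\residual{\rootClass}$, \eqref{eq:balance1} zeroes $\residual{i}$ for each active internal class, and an idle internal class already carries zero residual; hence the invariance \eqref{eq:invariant} reduces to $\sum_{i\in\allClasses}\balance{i}=Q^*$. Second, by the property that the balance a class retains after replenishing all of its active children is below $w^{\rm ac}_i$ (the remainder of the integer division in \eqref{eq:fair-quota}), we have $\balance{i}<w^{\rm ac}_i$ for every $i\in\internalClasses_{\rm ac}\cup\{\rootClass\}$, while idle classes have zero balance and all balances are nonnegative. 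Separating the active leaf classes from $\internalClasses_{\rm ac}\cup\{\rootClass\}$ therefore gives
\[
\sum_{i\in\leafClasses_{\rm ac}}\balance{i}
\;=\; Q^* - \sum_{i\in\internalClasses_{\rm ac}\cup\{\rootClass\}}\balance{i}
\;>\; Q^* - \sum_{i\in\internalClasses_{\rm ac}\cup\{\rootClass\}} w^{\rm ac}_i \, .
\]
Since the parent of any active non-root class is itself active, definition \eqref{eq:weight-active-children} yields
\[
\sum_{i\in\internalClasses_{\rm ac}\cup\{\rootClass\}} w^{\rm ac}_i
\;=\; \sum_{i\in\internalClasses_{\rm ac}\cup\{\rootClass\}} \sum_{k\in\child{i}\cap\activeClasses} w_k
\;=\; \sum_{k\in\activeClasses} w_k \, ,
\]
because each active non-root class is counted exactly once, in the inner sum for its parent. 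Plugging in $Q^* = \sum_{i\in\activeClasses} w_i + \sum_{i\in\leafClasses_{\rm ac}}\lmax{i}$ gives $\sum_{i\in\leafClasses_{\rm ac}}\balance{i} > \sum_{i\in\leafClasses_{\rm ac}}\lmax{i}$, so by pigeonhole some $i\in\leafClasses_{\rm ac}$ satisfies $\balance{i}>\lmax{i}$, as needed.

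The part I expect to be the main obstacle is justifying the two facts that feed this chain: that immediately after the replenishment phase all residuals are zero, and that the floor operations in \eqref{eq:fair-quota} leave behind, in total across the root and all active internal classes, strictly fewer than $\sum_{i\in\activeClasses} w_i$ bytes. Both require tracking the updates \eqref{eq:balance0}--\eqref{eq:balance2} in top-down order and checking that no balance or residual is double counted or becomes negative. A minor additional point is that $Q^*$ may be re-dimensioned between rounds, so \eqref{eq:invariant} should be read with the value of $Q^*$ in force at the start of the main round being analyzed; with that understood, the remaining steps are routine. Surplus rounds need no separate treatment, since a surplus round is always followed by a main round.
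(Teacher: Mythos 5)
Your proposal is correct and follows essentially the same route as the paper's own proof: work in the rearranged (replenishment-then-transmission) form, use the invariant \eqref{eq:invariant} with zeroed residuals to get $\sum_{i\in\allClasses}B_i=Q^*$, bound the leftover balances of the root and active internal classes by $\sum_i w^{\rm ac}_i=\sum_{k\in\activeClasses}w_k$ via the floor-remainder property, and conclude by pigeonhole that some active leaf class can transmit. The extra details you supply (the telescoping identity for $\sum w^{\rm ac}_i$ and the observation that the chosen leaf's balance is untouched until it is visited) are left implicit in the paper but do not change the argument.
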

Note that {both} terms of the summation depend on the set of active leaf classes.

\begin{proof}
We can ensure that, in any main round, at least one packet can be transmitted 
by satisfying the  condition 
\begin{align}
\sum_{i \in \L_{\rm ac}} B_i \ge \sum_{i \in \L_{\rm ac}} L^{\rm max}_i \, , 
\label{eq:cond}
\end{align}
at the end of the replenishment phase of the round. 
Then, by the pigeon hole principle, there is at least one active leaf class~$i$ with $B_i \ge L^{\rm max}_i$. 

Consider the time at the end of the replenishment phase,  before any packet transmission takes place.  By Lemma~\ref{lemma:hls:phases}, we can perform  the  updates of \eqref{eq:balance0}--\eqref{eq:balance2} for all classes at once.  
Since the residuals of internal classes are set to zero during the updates, we obtain from \eqref{eq:invariant} 
that $\sum_{i \in \N} B_i = Q^*$. Since inactive classes have a balance of zero, positive balances appear only in active classes and the root. With this, we can write~\eqref{eq:cond} as 
\[
Q^*  \ge  \sum_{i \in \I_{{\rm ac}}\cup\{\rootClass\}} B_i  + \sum_{i \in \L_{\rm ac}} L^{\rm max}_i
\, .  
\]
Recall that, after updating the balances of all classes, the remaining balance 
of an  internal class or the root satisfies $B_i <  w^{\rm ac}_i$. Summing up we obtain 
\begin{align*}
\sum_{i \in \I_{{\rm ac}}\cup\{\rootClass\}} B_i  \le
\sum_{i \in \I_{{\rm ac}}\cup\{\rootClass\}} {w^{\rm ac}_i} =
 \sum_{i \in \activeClasses } w_i  \, . 
\end{align*}
Hence, by setting $Q^*$ as given in the theorem, 
we ensure that \eqref{eq:cond} is always satisfied, meaning that  there is at least one packet transmission in each main round.   \end{proof}

We take advantage of the theorem in the Linux Qdisc implementation, where we adjust~$Q^*$ dynamically to the set of active leaf {and internal} classes at the start of each main or surplus round, using the residual~$R_{\rm root}$. When a {leaf} class $i$ becomes active we set $R_{\rm root} = R_{\rm root} + L^{\rm max}_i {+\weight i}$, where $L^{\rm max}_i$ is set to the MTU of 1500 bytes.
When a {leaf} class $i$ becomes idle, we set $R_{\rm root} = R_{\rm root} - L^{\rm max}_i {- \weight i}$.
Similarly, when an internal class becomes active, we set~$R_{\rm root} = R_{\rm root} + \weight i$, and set~$R_{\rm root} = R_{\rm root} - w_i$ when class~$i$ becomes idle.
{These updates to the root class} may result in~$R_{\rm root} <0$ and, after the update of~\eqref{eq:balance0}, 
in $B_{\rm root} < 0$. In this case, we set the fair 
quota of the root to $F_{\rm root}=0$ for the next round. 

\section{Fairness Analysis}
\label{sec:fairness}

To evaluate how well HLS realizes an HMM fair allocation for time-variable traffic, we use a fairness metric that measures the deviation from the allocation of an ideal hierarchical bit-by-bit round-robin scheduler. 
The fairness metric is defined as follows. 

\begin{definition} \label{def:fairness_measure}
    A scheduling algorithm for a class hierarchy is HMM($\alpha$) fair if 
    in an arbitrary time interval~$[t_1, t_2]$ where any two sibling classes~$i$ and~$j$  are backlogged, 
\[
        \left|\frac{D_i(t_1, t_2)}{w_i} - \frac{D_j(t_1, t_2)}{w_j}\right| \le \alpha ,
\]    %
where~$D_i(t_1, t_2)$ is the amount of traffic that class~$i$ or its leaf descendants transmit in the time interval~$[t_1, t_2]$.
\end{definition}
The left-hand side of the equation is the weighted difference between the number of bytes that two classes~$i$ and~$j$ transmit. Since the difference is zero in an ideal hierarchical bit-by-bit round-robin scheduler, the bound $\alpha$ expresses how far a particular scheduling algorithm deviates from 
an ideal link sharing scheduler.

For our analysis of the fairness metric of HLS, we find it useful to use the concept of \emph{subtrees}.
The {\it subtree of class~$i$}, denoted by $\stree{i}$, consists of class $i$ and its descendants, that is, $\stree{i} = \desc{i}\cup\{i\}$. 
We also define the \emph{aggregate balance} of subtree~$\stree i$, denoted by 
~$\aggBalance i$, as  the sum of the balances and residuals of all classes in~$\stree i$, that is, 
\begin{align}
\aggBalance i = \sum_{j\in\stree{i}} \balance{j} + \residual{j} \, ,
	\label{def:aggr-balance} 
\end{align}
For a leaf class~$i$, we obviously get $\stree{i}=i$ and $\aggBalance i= B_i$.

In this section, we assume that $Q^*$ is set to the lower bound given in  Theorem~\ref{theorem:Qstar}.

\subsection{Bound on Fairness of HLS}
\label{subsec:fairness}

We now analyze the fairness metric of HLS. Throughout the analysis we assume that HLS operates as described in Sec.~\ref{sec:hls:phases}, that is, with alternating  replenishment and transmission phases. With Lemma~\ref{lemma:hls:phases}, such a scheduler generates the same transmission schedule as the original HLS. 

The following lemma expresses the number of bytes that a backlogged class~$i$ transmits in a given time interval in terms of the aggregate balance $\aggBalance{i}$. 

\begin{lemma}
	\label{lem:balance_over_time}
	If a non-root class~$i\in\internalClasses\cup\leafClasses$ is backlogged throughout a time interval~$(t_1, t_2)$, then
	\begin{equation}
	\label{nonRootDBound}
		D_i(t_1, t_2) = \aggBalance{i}(t_1) - \aggBalance{i}(t_2) + \weight{i}\sum_{r\in{\rm R}(t_1, t_2)} F^{(r)}_{\parent{i}}
	\end{equation}
	where~$\aggBalance{i}(t)$ is the value of~$\aggBalance{i}$ at time~$t$,~$R(t_1, t_2)$ is the set of main and surplus rounds that are started in the time interval~$(t_1, t_2)$, and~$\fairQuota[(r)]{i}$ is the fair quota of class~$i$ at round~$r$.\\
\end{lemma}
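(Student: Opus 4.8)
The plan is to account, event by event, for how the aggregate balance $\aggBalance i$ of the subtree $\stree i$ evolves over $(t_1,t_2)$, and to show that only two kinds of events change it: the quota $\weight i\fairQuota[(r)]{\parent i}$ that class~$i$ receives from its parent at the start of each round (an inflow into $\stree i$), and packet transmissions by leaf classes in $\stree i$ (an outflow from $\stree i$ to the root). Once this is established, summing the net effect over the interval gives $\aggBalance i(t_2)-\aggBalance i(t_1)=\weight i\sum_{r\in R(t_1,t_2)}\fairQuota[(r)]{\parent i}-D_i(t_1,t_2)$, and rearranging yields \eqref{nonRootDBound}. Throughout I would invoke Lemma~\ref{lemma:hls:phases} so that each round may be treated as a replenishment phase followed by a transmission phase, which keeps the per-round accounting clean.

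For the inflow, I would first note that, since class~$i$ is backlogged throughout $(t_1,t_2)$, it is active at the start of every round started in $(t_1,t_2)$, and therefore so is its parent $\parent i$; hence in each such round $\parent i$ computes a fair quota via \eqref{eq:fair-quota} and, in the replenishment phase, passes $i$ the amount $\weight i\fairQuota{\parent i}$ via \eqref{eq:balance1} (or \eqref{eq:balance2} when $i\in\leafClasses$), moving these bytes out of $B_{\parent i}$ — which lies outside $\stree i$ — and into $B_i$. If $\parent i=\rootClass$ and the round is a surplus round, then $\fairQuota{\rootClass}=0$, so the term is still well defined and equals the amount actually transferred. Every other byte movement in that replenishment phase — the residual-to-balance transfers and the quotas passed from $i$ and its descendants down to their own children — stays inside $\stree i$, so the net change to $\aggBalance i$ over the round's replenishment phase is exactly $\weight i\fairQuota[(r)]{\parent i}$; since a replenishment phase sits at the very start of its round, the rounds that contribute such an inflow during $(t_1,t_2)$ are precisely those in $R(t_1,t_2)$, giving total inflow $\weight i\sum_{r\in R(t_1,t_2)}\fairQuota[(r)]{\parent i}$.

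For the outflow, I would go through the remaining updates \eqref{eq:balance0}--\eqref{eq:inactive2} and classify each relative to $\stree i$. The root update \eqref{eq:balance0} touches no class of $\stree i$; for any $j\in\stree i$ with $j\neq i$, the updates \eqref{eq:balance1}--\eqref{eq:balance2} and the idle-return updates \eqref{eq:inactive1}--\eqref{eq:inactive2} only shuffle bytes among $j$, its parent, its children, and between $B_j$ and $R_j$, all of which lie in $\stree i$, so $\aggBalance i$ is unchanged — and $i$ itself never becomes idle in $(t_1,t_2)$, being backlogged. The sole exception is a transmission \eqref{eq:transmit} by a leaf class $j\in\stree i$, which removes $L$ bytes from $B_j$ and adds them to $B_{\rootClass}\notin\stree i$, decreasing $\aggBalance i$ by $L$; summing over all transmissions in $(t_1,t_2)$, the total outflow equals the number of bytes transmitted by $i$ or its leaf descendants, i.e. $D_i(t_1,t_2)$. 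Combining inflow and outflow gives $\aggBalance i(t_2)-\aggBalance i(t_1)=\weight i\sum_{r\in R(t_1,t_2)}\fairQuota[(r)]{\parent i}-D_i(t_1,t_2)$, which is \eqref{nonRootDBound}.

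The main obstacle is making the case analysis of \eqref{eq:balance0}--\eqref{eq:inactive2} genuinely exhaustive and getting the round-counting right at the interval endpoints: one must check that no event other than the parent-to-$i$ replenishment and the leaf-to-root transmission ever crosses the boundary of $\stree i$, and that each round of $R(t_1,t_2)$ contributes its replenishment of~$i$ exactly once. The edge cases worth spelling out are $i\in\leafClasses$ (where $\aggBalance i=B_i$ and \eqref{eq:balance2} and \eqref{eq:inactive1} replace \eqref{eq:balance1} and \eqref{eq:inactive2}) and $\parent i=\rootClass$ in surplus rounds (where $\fairQuota{\rootClass}=0$); none of this is deep, but this bookkeeping is the crux of the argument.
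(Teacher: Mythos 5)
Your proposal is correct and follows essentially the same route as the paper's proof: an exhaustive event-by-event accounting of changes to $\aggBalance{i}$, showing that only the per-round replenishment of class~$i$ (inflow $\weight{i}\fairQuota[(r)]{\parent{i}}$) and leaf transmissions within $\stree{i}$ (outflow totalling $D_i(t_1,t_2)$) cross the subtree boundary, while all other updates in \eqref{eq:balance0}--\eqref{eq:inactive2} conserve $\aggBalance{i}$. Your explicit treatment of the edge cases ($i\in\leafClasses$ and $\parent{i}=\rootClass$ in surplus rounds) is a welcome addition but does not change the argument.
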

\begin{proof}
	Consider a class~$i\in\internalClasses\cup\leafClasses$ that is continuously backlogged in time interval~$(t_1, t_2)$. The values of~$B_j$ and~$R_j$ for class~$j\in\stree i$ only change in one of the following scenarios: 
	\begin{enumerate}
		\item {Class~$j \in \leafClasses$ transmits a packet of size~$L$:}\\
		 The update for the transmission of a packet follows~\refeq{eq:transmit}, which decreases the value of~$\balance i$ by~$L$, which, in turn, decreases~$\aggBalance i$ by the same amount. So the value of~$A_i$ decreases from this scenario by the total number of bytes transmitted by any class~$j\in\stree i$ during the interval~$(t_1, t_2)$, that is, by~$D_i(t_1, t_2)$.
			
		\item {Quota replenishment of class~$i$ at the start of a main or surplus round:}\\
			For class $i$, the quota replenishment of a round~$r \in R(t_1, t_2) $ increases $\aggBalance{i}$ by~$\weight{i}\fairQuota[(r)]{\parent{i}}$ according to~\refeq{eq:balance1} or \refeq{eq:balance2}. So, the total increase of~$\aggBalance i$ in $(t_1, t_2)$ from this scenario is given by $\sum_{r\in R(t_1, t_2)}\weight i\fairQuota[(r)]{\parent i}$. 
			
		\item {Quota replenishment of class~$j\neq i$ at the start of a main or surplus round:}\\
			The quota replenishment of class~$j$ uses either \refeq{eq:balance1} or \refeq{eq:balance2}.
		In this scenario, the sum of~$\balance j + \residual j + \balance{\parent j}$ remains unchanged, and since~$\parent j\in\stree i$, $\aggBalance i$ also does not change its value.
		\item {Class~$j$ becomes inactive:}\\
		Since we assume that class~$i$ is active throughout the interval,~$j\neq i$. Following~\eqref{eq:inactive1} and~\eqref{eq:inactive2}, $B_j$ is set to zero and the balance of~$\parent j\in\stree i$ is increased by the same amount. This update does not change~$\aggBalance i$.
			
	\end{enumerate}
Considering the changes from every scenario, we have  
	\begin{equation*}
		\aggBalance{i}(t_2) = \aggBalance{i}(t_1) - D_i(t_1, t_2) + \sum_{r\in R(t_1, t_2)}\weight i\fairQuota[(r)]{\parent i}\,.
	\end{equation*}
	We then rearrange the terms to get \refeq{nonRootDBound}.
\end{proof}

Next, we present a set of lemmas that seek to bound the aggregate balance $\aggBalance{i}$.
\begin{lemma}
	\label{lem:max_b_star_time}
	For any time~$t$, let~$t'$ be the time at the end of the replenishment phase of the most recent main round prior to time~$t$. Then, for any child class~$i$ of the root class,
	\begin{equation*}
		\aggBalance i(t')\ge\aggBalance i(t)\,, 
	\end{equation*}
where $\aggBalance i(t')$ and $\aggBalance i(t)$, respectively, are the aggregate balance at times $t'$ and $t$.
\end{lemma}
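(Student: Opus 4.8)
The plan is to prove a monotonicity statement: once the replenishment phase of a main round has ended at time~$t'$, the aggregate balance $\aggBalance i$ of the subtree rooted at a child~$i$ of the root can only decrease or stay the same until the next main round, and by the choice of~$t'$ the time~$t$ lies in that window; hence $\aggBalance i(t)\le\aggBalance i(t')$.

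First I would reuse the event-by-event enumeration from the proof of \reflem{lem:balance_over_time} to catalogue the ways $\aggBalance i=\sum_{j\in\stree i}(\balance j+\residual j)$ can change. A transmission by a leaf class in~$\stree i$ decreases it by the packet size. The round-start quota replenishment of class~$i$ itself increases it by $\weight i\fairQuota{\parent i}=\weight i\fairQuota{\rootClass}$, since $\parent i=\rootClass$. The quota replenishment of any other class of~$\stree i$, and a class of~$\stree i$ other than~$i$ becoming idle, leave it unchanged, because in each of those cases amounts are merely shuffled among classes that all lie in~$\stree i$. Class~$i$ itself becoming idle can only decrease it, as its balance and residual are then handed up to $\rootClass\notin\stree i$. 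All remaining HLS bookkeeping --- the root's own update \eqref{eq:balance0}, the crediting of transmitted bytes to $\balance{\rootClass}$ in \eqref{eq:transmit}, and the dynamic adjustments of $\residual{\rootClass}$ that retune~$Q^*$ (described after \refthm{theorem:Qstar}) --- acts only on $\balance{\rootClass}$ and $\residual{\rootClass}$, which do not belong to~$\stree i$, so it leaves $\aggBalance i$ untouched. The conclusion of this step is that the only mechanism capable of increasing $\aggBalance i$ is the round-start replenishment of class~$i$, and it does so by exactly $\weight i\fairQuota{\rootClass}$ for the round concerned.

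Next I would pin down which rounds begin in the interval $(t',t)$: by the definition of~$t'$ as the end of the replenishment phase of the most recent main round before~$t$, no main round replenishment phase falls in this interval; what takes place after~$t'$ is the transmission phase of that main round, followed by zero or more surplus rounds, all completed before~$t$. Since HLS forces $\fairQuota{\rootClass}=0$ in every surplus round, each round-start replenishment of class~$i$ in $(t',t)$ adds $\weight i\cdot 0=0$ to $\aggBalance i$; combining with the previous step, $\aggBalance i$ is non-increasing on $(t',t)$, which gives $\aggBalance i(t')\ge\aggBalance i(t)$. The delicate point --- and the reason the hypothesis requires~$i$ to be a child of the root --- is that it is precisely $\fairQuota{\rootClass}$ that is pinned to zero in surplus rounds: for a class deeper in the hierarchy the incoming quota $\weight i\fairQuota{\parent i}$ with $\parent i\neq\rootClass$ may be strictly positive during a surplus round (that being exactly what surplus rounds are for), so the argument would not carry over. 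I do not expect a genuinely hard calculation here; the work lies entirely in the careful accounting of balance updates, which \reflem{lem:balance_over_time} already provides.
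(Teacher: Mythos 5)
Your proposal is correct and follows essentially the same route as the paper: an event-by-event accounting showing that the only mechanism that can increase $\aggBalance i$ is the round-start replenishment of class~$i$ itself, which contributes $\weight i\fairQuota{\rootClass}=0$ in the surplus rounds that are the only rounds beginning in $(t',t]$, while every other event leaves $\aggBalance i$ unchanged or decreases it. Your added remarks --- that the root's own bookkeeping in \eqref{eq:balance0}, \eqref{eq:transmit}, and the $Q^*$ adjustments touch only $\balance\rootClass$ and $\residual\rootClass\notin\stree i$, and that the hypothesis $i\in\child\rootClass$ is exactly what makes the incoming quota vanish in surplus rounds --- are accurate and slightly more explicit than the paper's treatment, but do not change the argument.
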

\begin{proof}
	Without loss of generality we only consider active classes. 	In the time interval~$(t', t]$,  the value of~$\balance j$ or~$\residual{j}$ for~$j\in\stree i$ changes only in one the following situations: 
	\begin{enumerate}
		\item  {Quota replenishment of a child of class~$j$ at the start of a surplus round:}\\
		As~$j\neq\rootClass$, HLS performs a quota replenishment on class~$k\in\child j$, updating~$\balance k$,~$\residual k$, and~$\balance j$ using either \refeq{eq:balance1} or \refeq{eq:balance2}. Since the sum~$\balance k + \residual{k} + \balance j$ does not change by the updates, neither does~$\aggBalance i$.
		
		\item  {Quota replenishment of class~$j$ at the start of a surplus round:}\\
			The quota replenishment of class~$j$ uses either \refeq{eq:balance1} or \refeq{eq:balance2}.
	Since $F_{root}=0$ in surplus rounds, the child classes of the root perform a quota replenishment only at the beginning of a main round, a time which is not in 
	the interval~$(t', t]$. Therefore, we have ~$j\neq i$.   
		For all other classes $j\in\stree i$, the sum of~$\balance j + \residual j + \balance{\parent j}$ remains unchanged, and so does~$\aggBalance i$.
	
			\item {Class~$j \in \leafClasses$ transmits a packet:}\\
		 The update for the transmission of a packet follows~\refeq{eq:transmit}, which decreases the value~$\balance i$, which, in turn, decreases ~$\aggBalance i$.

		\item {Class~$j$ becomes inactive:}\\
		If $j \neq i$, following~\eqref{eq:inactive1} and ~\eqref{eq:inactive2}, $B_j$ is set to zero and the balance of ~$\parent j$ is increased by the same amount. This update does not change~$\aggBalance i$. If $j = i$, then~$A_i$ becomes zero. This follows 
		from~\eqref{eq:inactive1} and~\eqref{eq:inactive2}, and the fact that $p(i) \notin \stree{i}$. 

	\end{enumerate}
	The claim follows since in all cases, the value of~$\aggBalance i$ either remains the same or decreases.
\end{proof}

\reflem{lem:max_b_star_time} implies that the maximum value of $\aggBalance i$ occurs at the end of the replenishment phase of the main round.

\begin{lemma}
	\label{lem:end_of_round_leaf_bound}
	At the end of a main or surplus round, each leaf class~$i\in\leafClasses$ satisfies
	\begin{equation*}
		0 \le \balance{i} {<} \lmax{i}.
	\end{equation*}
\end{lemma}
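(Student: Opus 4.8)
\emph{Proof proposal.} The plan is to prove the two bounds separately, working with the modified HLS of \reflem{lemma:hls:phases} so that each round consists of a replenishment phase followed by a transmission phase, and each active leaf class is visited exactly once in the transmission phase, after its balance has been set in the replenishment phase.

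For the lower bound $B_i\ge 0$, I would observe that the balance of a leaf class is only ever decreased by a packet transmission via~\eqref{eq:transmit} or by the reset in~\eqref{eq:inactive1} when the class goes idle. A packet of size $L$ is transmitted only when $B_i\ge L$, so the update in~\eqref{eq:transmit} leaves $B_i\ge 0$, and the reset leaves $B_i=0$; every other update to $B_i$, namely the replenishment~\eqref{eq:balance2}, only increases it. Hence $B_i\ge 0$ holds at every instant, in particular at the end of a round.

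For the upper bound I would fix a leaf class $i$ and a round, and split on the state of $i$ at the end of that round. If $i$ is idle at the end of the round (either inactive throughout, or having gone idle during the round), then $B_i=0<\lmax i$, since an idle leaf class keeps the zero balance assigned by~\eqref{eq:inactive1} and is not replenished. In the remaining case $i$ is backlogged at the end of the round; then during its single visit in the transmission phase the scheduler transmits from $i$ until the head-of-queue packet has size $L>B_i$ (this must occur, since $i$ does not go idle), and $L\le\lmax i$ gives $B_i<\lmax i$ at that moment. It then remains to check that $B_i$ is not modified again before the end of the round: the transmission phase uses only~\eqref{eq:transmit}--\eqref{eq:inactive2}, and for a leaf class $i$ that is backlogged and already visited, $i$ neither transmits again (one visit per round), nor goes idle (it is backlogged), nor appears as a parent $p(j)$ in~\eqref{eq:inactive1} or~\eqref{eq:inactive2} (a leaf class has no children).

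The main obstacle, and the step I would argue most carefully, is exactly this last claim that $B_i$ is frozen between $i$'s visit and the end of the round; it is where the single-visit-per-round property of the round robin and the fact that leaf classes have no children are both used. The rest is bookkeeping over the balance-update equations, and the argument is identical for main and surplus rounds since the behavior of a leaf class does not depend on the value of $F_{\rm root}$.
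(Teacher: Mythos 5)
Your proposal is correct and follows essentially the same argument as the paper's proof: a visited leaf class ends its visit either with $B_i < L_i \le \lmax{i}$ or with $B_i = 0$ upon going idle, and otherwise-inactive classes have zero balance. The extra care you take with the lower bound and with showing that $B_i$ is not modified between the visit and the end of the round is detail the paper leaves implicit, but it does not change the approach.
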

\begin{proof}
	We only need to consider classes that are active at the start of the main or surplus round. When HLS visits class~$i$ in the transmission phase, HLS transmits the packet at the head of the buffer from class~$i$ as long as~$L_i \le \balance{i}$ where~$L_i$ is the size of the packet at the head of the buffer.
	So, the visit of class~$i$ ends when~$\balance{i} < L_i$ or if class $i$ becomes idle. In the first case, the claim follows from $ L_i \le \lmax{i}$. 
	In the second case, we have $\balance{i}=0$ due to~\eqref{eq:inactive1} and~\eqref{eq:inactive2}. 
\end{proof}

\begin{lemma}
	\label{lem:bound_before_replenishment}
	At the start of a main round, before the replenishment phase, each leaf class~$i\in\leafClasses$ satisfies 
	\begin{equation*}
		0 \le \aggBalance{i}(t) {<} \lmax{i},
	\end{equation*}
	and  each internal class~$i\in\internalClasses$ satisfies 
	\begin{equation*}
		0\le\aggBalance i(t) \le \sum_{j\in {\desc{i}}}\weight{j} + \sum_{j\in\ldesc{i}} \lmax{j}\,,
	\end{equation*}
\end{lemma}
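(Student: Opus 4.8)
The plan is to track the aggregate balance $\aggBalance i$ backward from the end of the preceding round to the start of the current main round, using the fact (by Lemma~\ref{lemma:hls:phases}) that each round consists of a replenishment phase followed by a transmission phase, so that ``the start of a main round, before the replenishment phase'' coincides with ``the end of the previous round (main or surplus)'' — unless this is the very first round, in which case every balance and residual is zero except $B_{\rm root}=Q^*$, so the bounds hold trivially. So I would assume a previous round exists and argue at the moment right after the previous round finishes its transmission phase.

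For a leaf class $i\in\leafClasses$, we have $\stree i = \{i\}$, hence $\aggBalance i = B_i$, and the claim $0\le B_i < \lmax i$ is exactly the statement of Lemma~\ref{lem:end_of_round_leaf_bound} applied at the end of that previous round. (If $i$ became idle during that round, $B_i=0$, still within the bound.) So the leaf case is immediate.

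For an internal class $i\in\internalClasses$, I would proceed by induction on the height of $i$ in the subtree, i.e., bounding $\aggBalance i$ in terms of the aggregate balances of its children. Write $\aggBalance i = B_i + R_i + \sum_{k\in\child i}\aggBalance k$ at the start of the main round. The non-negativity $\aggBalance i \ge 0$ follows since every $B_j, R_j\ge 0$ (balances and residuals are never driven negative for non-root classes; leaf balances are non-negative by Lemma~\ref{lem:end_of_round_leaf_bound}, internal-class residuals accumulate only non-negative returned balances, and internal-class balances satisfy $0\le B_i < w^{\rm ac}_i$ after replenishment and are only decremented by quota hand-downs that keep them at the post-replenishment value or zero when the class goes idle). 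For the upper bound, at the end of the previous round each child $k$ of $i$ satisfies the inductive hypothesis: $\aggBalance k < \lmax k$ if $k$ is a leaf, and $\aggBalance k \le \sum_{j\in\desc k}w_j + \sum_{j\in\ldesc k}\lmax j$ if $k$ is internal; and the class $i$'s own ``local'' residue $B_i + R_i$ must be bounded. The key quantitative input is the post-replenishment bound $B_i < w^{\rm ac}_i \le w_i$ for internal classes, combined with the fact that $R_i$ at the end of a round, having just been zeroed at that round's start and then replenished only by balances returned from children that went idle, is bounded by what those children held — which is itself covered by the children's aggregate-balance bounds (a returned idle child's $B_k+R_k$ is absorbed into $R_i$, but $\aggBalance i$ as a whole does not increase under an inactivation, per Lemma~\ref{lem:max_b_star_time}'s case analysis). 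Summing $\sum_{k\in\child i}\aggBalance k$ with the inductive bounds telescopes into $\sum_{k\in\child i}\bigl(\sum_{j\in\desc k}w_j + \sum_{j\in\ldesc k}\lmax j\bigr) = \sum_{j\in\desc i\setminus\child i^{\text{(internal part)}}}\cdots$; more cleanly, $\bigcup_{k\in\child i}\stree k = \desc i$, so the child contributions sum to $\sum_{j\in\desc i}w_j + \sum_{j\in\ldesc i}\lmax j$ minus the $w_i$-level term, and adding the local $B_i+R_i\le w_i$ (the $w_i$ term) closes the bound to exactly $\sum_{j\in\desc i}w_j + \sum_{j\in\ldesc i}\lmax j$.

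The main obstacle will be pinning down the bound on $B_i+R_i$ at the start of the main round cleanly. The subtlety is that $R_i$ collects balances from children that went idle \emph{during} the previous round, and those balances could in principle exceed $w_i$; the resolution is that $\aggBalance i$ is conserved across such inactivations (Lemma~\ref{lem:max_b_star_time}, case 4: inactivation does not increase $\aggBalance i$), so whatever flows into $R_i$ left some child's aggregate balance, and the whole-subtree accounting still holds. I would therefore phrase the internal-class argument directly in terms of $\aggBalance i$ rather than tracking $B_i$ and $R_i$ separately: show $\aggBalance i$ at the start of the current main round is at most $\aggBalance i$ at the end of the previous round's replenishment phase (monotone decrease during the previous round's transmission phase and any surplus rounds, again by Lemma~\ref{lem:max_b_star_time}), and then bound that post-replenishment value using $B_i < w^{\rm ac}_i$ for every internal class in $\stree i$ together with $B_j < \lmax j$ (just-replenished) or the residuals being zero right after replenishment. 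That post-replenishment snapshot gives $\aggBalance i \le \sum_{\text{internal }j\in\stree i} w^{\rm ac}_j + \sum_{j\in\ldesc i}\lmax j \le \sum_{j\in\desc i} w_j + \sum_{j\in\ldesc i}\lmax j$, since $\sum_{\text{internal }j\in\stree i} w^{\rm ac}_j = \sum_{j\in\desc i}w_j$ (each non-root descendant of $i$ contributes its weight once, as an active child of its parent). Finally, I note the leaf bound $\lmax i$ rather than $w_i+\lmax i$: for a leaf $i$ the right-replenishment balance was already consumed down to $B_i<\lmax i$ by the end of the previous round, which is why the leaf case uses the tighter constant.
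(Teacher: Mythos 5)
Your leaf case is correct and matches the paper: the start of a main round coincides with the end of the previous (main or surplus) round, so Lemma~\ref{lem:end_of_round_leaf_bound} gives $0\le\balance i=\aggBalance i<\lmax i$ directly. The internal case, however, has a genuine gap. You correctly isolate the crux --- bounding $\balance i+\residual i$ at the start of the main round --- but none of the mechanisms you invoke delivers that bound. The paper's key observation, which you never state, is that HLS begins a new \emph{main} round only after all surplus rounds have terminated, i.e., only when \emph{no} internal class $j$ satisfies $\balance j+\residual j\ge\weight[ac]j$ (that inequality is precisely the trigger for another surplus round). Hence at the start of a main round \emph{every} internal class $j\in\stree i$ satisfies $\balance j+\residual j<\weight[ac]j\le\sum_{k\in\child j}\weight k$; summing this over $\stree i\cap\internalClasses$ and combining with Lemma~\ref{lem:end_of_round_leaf_bound} over $\stree i\cap\leafClasses$ yields the claimed bound immediately, with no induction on height needed. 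Without this observation the per-class bound is simply false at intermediate times: a leaf child that goes idle early can return nearly its entire quota to $\residual i$, making $\balance i+\residual i$ far exceed $\weight[ac]i$ at the end of a round --- which is exactly the event that forces a surplus round.

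The substitute arguments you propose do not close this hole. First, the ``post-replenishment snapshot'' route fails because immediately after replenishment a leaf class holds its freshly assigned quota $\weight jF_{p(j)}$, so $\balance j<\lmax j$ does not hold there; that bound is valid only at the \emph{end} of the transmission phase, which is what Lemma~\ref{lem:end_of_round_leaf_bound} asserts. Second, Lemma~\ref{lem:max_b_star_time} is stated, and is only true, for children of the root: for a deeper class $i$, the aggregate balance $\aggBalance i$ \emph{increases} during a surplus round whenever $p(i)$ redistributes accumulated residual (i.e., $F_{p(i)}>0$ even though $F_{\rm root}=0$), so the monotone decrease from the last main-round replenishment that you rely on does not hold in general. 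Third, conservation of $\aggBalance i$ across inactivations controls the subtree total, not the individual term $\balance i+\residual i$ that your induction requires. Replacing all of this with the surplus-round termination condition makes your decomposition of $\aggBalance i$ over $\stree i$ go through essentially as in the paper.
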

Note that the unit of~$\weight j$ in the equation above is in bytes.

\begin{proof} Consider the time at the start of a main round, which immediately follows the end of the previous main or surplus round. 
	For $i\in\leafClasses$, the lemma 
	follows from \reflem{lem:end_of_round_leaf_bound}.
	For a class  $i \in \I$, at the start of the a round, it must hold that 
	\begin{equation}
		\label{lem:bound_on_balance}
		\balance{i} + \residual{i} < \weight[ac]i\,, 
	\end{equation}
	since otherwise next round will be a surplus round (instead of a main round). Recall that $\balance{i} + \residual{i} \ge \weight[ac]i$ at the end of a main or surplus round is the condition to start a new surplus round. 
	It then follows from~\eqref{eq:weight-active-children} that 
	\begin{align}
		\balance{i} + \residual i
		\le \sum_{k\in\child{i}} \weight{k}\,.\label{eq:beginBound2}
	\end{align}
Then we derive 
	\begin{align*}
		\aggBalance{i} 
		&= \sum_{j\in\stree{i}} \left( \balance{j} + \residual{j} \right)\\
		&= \sum_{j\in\stree{i}\cap\internalClasses} \left(\balance{j} + \residual{j}\right) + \sum_{j\in\stree{i}\cap\leafClasses} \balance{j} \\
		&{<} \sum_{j\in\stree{i}\cap\internalClasses}\sum_{k\in\child{j}}\weight{k} + \sum_{j\in\stree{i}\cap\leafClasses} \lmax{j}\\
		&\le \sum_{j\in{\desc{i}}}\weight{j} + \sum_{j\in\ldesc{i}} \lmax{j}.
	\end{align*}
	The first line uses the definition in~\eqref{def:aggr-balance}.
	We then split the sum over the flows in $\stree{i}$ into internal and leaf classes in the second line, where we use that leaf classes have no residual.
	The first term in the third line follows from \refeq{eq:beginBound2}, and the second term follows by applying \reflem{lem:end_of_round_leaf_bound}.
	We arrive at the last line by rearranging the sums.
\end{proof}

\newcommand{\mainRoundBalanceLimit}[1]{{\tilde A_{#1}}}
We now define~$\mainRoundBalanceLimit i$ as  an upper bound on~$\aggBalance i$ at the beginning of a main round. With~\reflem{lem:bound_before_replenishment}, we have for every  non-root class~$i$,
\begin{equation*}
	\mainRoundBalanceLimit i = \begin{cases}
		\lmax{i} - 1, & i\in\leafClasses,\\
		\sum_{j\in{\desc{i}}}\weight{j} + \sum_{j\in\ldesc{i}} \lmax{j}, & i\in\internalClasses\,.
	\end{cases}
\end{equation*}

We can use this bound to obtain an upper bound on~$\aggBalance i$ for a child class~$i$ of the root class, which holds at all times. 

\begin{lemma}
	\label{lem:bound_balance_rc}
	Every  class~$i\in\child{\rootClass}$  satisfies at all times the bound 
	\begin{equation}
		\label{eq:bound_balance_rc}
		\aggBalance i\le \mainRoundBalanceLimit i + \weight i\max_{j\in\child\rootClass} \left\{\frac{{\weight j + }\mainRoundBalanceLimit j}{\weight j}\right\}\, .
	\end{equation}
\end{lemma}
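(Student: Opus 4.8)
The plan is to combine \reflem{lem:max_b_star_time}, which pins the maximum of $\aggBalance i$ to the end of the replenishment phase of a main round, with the invariance~\refeq{eq:invariant}, which controls the root balance redistributed at the start of that round. Fix a time, let $t'$ be the end of the replenishment phase of the most recent main round before it, and recall that by \reflem{lem:max_b_star_time} it suffices to bound $\aggBalance i(t')$. If $i$ is inactive during that main round, then all of $\stree i$ has zero balance and residual, so $\aggBalance i(t')=0$; hence assume $i$ is active, and let $s$ be the start of that round, before any update. Going through the balance updates~\refeq{eq:balance0}--\refeq{eq:balance2} one checks that every change to $\aggBalance i$ during the replenishment phase cancels, except that class~$i$ receives the quota $\weight i\fairQuota{\rootClass}$ from the root, so
\[
\aggBalance i(t') = \aggBalance i(s) + \weight i\,\fairQuota{\rootClass}\,,
\]
and by \reflem{lem:bound_before_replenishment} we have $\aggBalance i(s)\le\mainRoundBalanceLimit i$.

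Next I would bound $\fairQuota{\rootClass}=\bigl\lfloor \balance{\rootClass}/\weight[ac]{\rootClass}\bigr\rfloor\le \balance{\rootClass}/\weight[ac]{\rootClass}$. At time~$s$, after~\refeq{eq:balance0} has moved the root residual into $\balance{\rootClass}$, the invariance~\refeq{eq:invariant} reads $\balance{\rootClass}+\sum_{k\in\child{\rootClass}}\aggBalance k(s)=Q^*$, since the subtrees $\{\stree k\}_{k\in\child{\rootClass}}$ partition $\allClasses\setminus\{\rootClass\}$ and inactive children contribute nothing. Writing $W=\weight[ac]{\rootClass}=\sum_{k\in\child{\rootClass}\cap\activeClasses}\weight k$ and substituting,
\[
\aggBalance i(t')\le \aggBalance i(s)+\frac{\weight i}{W}\Bigl(Q^*-\sum_{k\in\child{\rootClass}\cap\activeClasses}\aggBalance k(s)\Bigr)\,.
\]
The key manoeuvre is that $\aggBalance i(s)$ appears twice, with coefficient $+1$ and, inside the sum, with coefficient $-\weight i/W$. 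Isolating the $k=i$ term, the net coefficient of $\aggBalance i(s)$ is $1-\weight i/W\ge 0$ (as $i$ is an active child, $\weight i\le W$), so $\aggBalance i(s)$ may be replaced by $\mainRoundBalanceLimit i$; each remaining $\aggBalance k(s)$ is nonnegative with a nonpositive coefficient and is dropped, giving
\[
\aggBalance i(t')\le \mainRoundBalanceLimit i\Bigl(1-\frac{\weight i}{W}\Bigr)+\frac{\weight i}{W}\,Q^*\,.
\]

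Finally I would decompose $Q^*=\sum_{k\in\activeClasses}\weight k+\sum_{k\in\leafClasses_{\rm ac}}\lmax k$ along the subtrees of the active children of the root; by the same estimate used to prove \reflem{lem:bound_before_replenishment}, the part of $Q^*$ contained in $\stree k$ is at most $\weight k+\mainRoundBalanceLimit k$, up to a per-leaf-child correction stemming from $\mainRoundBalanceLimit k=\lmax k-1$. Since $W=\sum_{k\in\child{\rootClass}\cap\activeClasses}\weight k$, the mediant inequality $\frac{\sum_k a_k}{\sum_k b_k}\le\max_k\frac{a_k}{b_k}$ (valid as each $\weight k>0$) then gives $Q^*/W\le\max_{k\in\child{\rootClass}}\{(\weight k+\mainRoundBalanceLimit k)/\weight k\}$, and substituting this back together with $1-\weight i/W\le 1$ yields~\refeq{eq:bound_balance_rc}. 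I expect this last step to be the main obstacle: one must reconcile that per-leaf-child correction — together with the floor in~\refeq{eq:fair-quota} and the integrality of balances and weights — with the precise constant in~\refeq{eq:bound_balance_rc}, so that it is absorbed rather than surfacing as an extra additive term. The earlier reduction (via \reflem{lem:max_b_star_time} and \reflem{lem:bound_before_replenishment}) and the treatment of the doubled $\aggBalance i(s)$ term are routine once arranged as above.
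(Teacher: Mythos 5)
Your proposal follows essentially the same route as the paper's proof: reduce to the end of the replenishment phase of the most recent main round via \reflem{lem:max_b_star_time}, bound $\aggBalance i$ there by $\tilde A_i + \weight i \fairQuota{\rootClass}$ using \reflem{lem:bound_before_replenishment}, bound $\fairQuota{\rootClass}$ by $Q^*/\weight[ac]{\rootClass}$ via the invariance \refeq{eq:invariant}, and finish by decomposing $Q^*$ over the subtrees of the root's active children and applying the mediant inequality --- your extra bookkeeping of the $-\tfrac{\weight i}{W}\aggBalance i(s)$ term is relaxed away at the end and recovers exactly the paper's intermediate bound $\tilde A_i + \weight i\, Q^*/\weight[ac]{\rootClass}$. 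The per-leaf-child off-by-one you flag as the remaining obstacle is real but equally present (and silently absorbed) in the paper's own chain of inequalities bounding $\aggBalance{\rootClass}(t')$, so it is not a defect specific to your argument.
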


\begin{proof}
	Consider a child class~$i$ of the root class. Let~$t'$ be the time right before the replenishment phase of a main round and~$t''$ be the time when the quota replenishment is completed. We use   
$\aggBalance i(t')$ and $\aggBalance i(t'')$ to denote the aggregate balance of class~$i$ at these times.  We derive 
	\begin{align}\label{eq:aggBalance_i}
		\aggBalance i(t'')
		&= \aggBalance i(t') + \weight i\fairQuota{\rootClass}\\
		&\le \mainRoundBalanceLimit i + \weight i\left\lfloor\frac{\balance\rootClass(t')}{\weight[ac]{\rootClass}}\right\rfloor\\
		&\le \mainRoundBalanceLimit i + \weight i\frac{\aggBalance \rootClass(t')}{\weight[ac]{\rootClass}}\,.
	\end{align}
	The first line indicates added quota during the replenishment phase.  The second line uses~\reflem{lem:bound_before_replenishment} and the definition of~$\fairQuota{\rootClass}$.
	We then relax the floor function.
	In the last line, we use ~$\balance\rootClass(t') \le \aggBalance \rootClass(t')$ and drop the floor function. 
	
	Next we obtain a bound on $\aggBalance \rootClass(t')$.
	\begin{align*}
		\aggBalance \rootClass(t')
		&= \sum_{j\in\activeClasses}\weight j + \sum_{j\in\leafClasses_{\rm ac}}\lmax{j}\\
		&= \sum_{\substack{c\in\activeClasses\\c\in\child{\rootClass}}}
		 \sum_{\substack{j\in\stree c\\j\in\activeClasses}} \weight j 
		 + \sum_{c\in\child{\rootClass}} \sum_{\substack{j\in\leafClasses_{\rm ac}\\j\in\stree{c}}}\lmax{j}\\
		&\le \sum_{\substack{c\in\activeClasses\\c\in\child{\rootClass}}}\left(\sum_{\substack{j\in\activeClasses\\j\in\stree{c}}}\weight j+\sum_{\substack{j\in\leafClasses_{\rm ac}\\j\in\stree{c}}}\lmax j\right)\\
		&= \sum_{\substack{c\in\activeClasses\\c\in\child{\rootClass}}}\left(\sum_{j\in\stree{c}}\weight j+\sum_{j\in\ldesc c}\lmax j\right) \\
		&\le \sum_{\substack{c\in\activeClasses\\c\in\child{\rootClass}}}{\weight c +} \mainRoundBalanceLimit c\,.
			\end{align*}
	The first line follows because the total quota in the hierarchy is given by~$Q^*$, which we  assumed is set to the lower bound given in Theorem~\ref{theorem:Qstar}.
We split the summation in the second line.
	For the first term of the second line, note that for each~$j\in\internalClasses_{\rm ac}\cup\leafClasses_{\rm ac}$, there is exactly one {(active)} class~$c\in\child\rootClass\cap\activeClasses$ that     is an ancestor of~$j$ ($c \in \anc j$) or is itself class~$j$ ($c=j$).
	Conversely, every active class in~$\stree c$ is also in~$\internalClasses_{{\rm ac}}\cup\leafClasses_{{\rm ac}}$.
	Therefore, the first terms in the first and the second lines are equivalent.
	The second term in the second line follows from similar considerations. 
	In the third line, we  relax the sum  and combine similar terms, and then and rearrange the sums in the fourth line.
	In the last line, we  apply the definition of~$\mainRoundBalanceLimit c$.
	
	With the  bound on $\aggBalance \rootClass(t')$ we continue to derive  
	\begin{align*}
		\frac{\aggBalance \rootClass(t')}{\weight[ac]\rootClass}
		&\le \frac{\sum_{j\in\activeClasses \cap \child{\rootClass}}{\weight j +} \mainRoundBalanceLimit{j}}{\sum_{j\in\activeClasses \cap \child{\rootClass}}\weight j}\\
		&\le \max_{j\in\child\rootClass\cap\activeClasses}\frac{{\weight j +} \mainRoundBalanceLimit j}{\weight j}\\
		&\le \max_{j\in\child\rootClass}\frac{{\weight j +}\mainRoundBalanceLimit j}{\weight j}.
	\end{align*}
	 {Using this inequality, }we obtain for \eqref{eq:aggBalance_i} that 
	\begin{equation}
	\label{eq:bound_balance_rc_i}
		\aggBalance i(t'') \le \mainRoundBalanceLimit i + \weight i\max_{j\in\child\rootClass}\frac{{\weight j + }\mainRoundBalanceLimit j}{\weight j}\,.
	\end{equation}
	Then for any time~$t$, let time~$t'$ be the time at the end of the replenishment phase of the most recent main round prior to time~$t$. From Lemma~\ref{lem:max_b_star_time} and~\eqref{eq:bound_balance_rc_i}, it follows that
	\begin{equation*}
		\aggBalance{i}(t) \le \aggBalance{i}(t') \le \mainRoundBalanceLimit i + \weight i\max_{j\in\child\rootClass}\frac{{\weight j + }\mainRoundBalanceLimit j}{\weight j}\,,
	\end{equation*}
	satisfying~\eqref{eq:bound_balance_rc}.
\end{proof}

We continue with computing an upper bound of~$\aggBalance i$ for classes other than the children of the root class.
We first define 
	\begin{equation*}
		\rc i = \begin{cases}
			i,&i\in\child\rootClass,\\
			\rc {\parent i},&\text{otherwise},
		\end{cases}
	\end{equation*}
which is the ancestor of class~$i$ that is a child of the root class (if $i \not\in {\rm child} ({\rm root})$), or class $i$ itself (if $i \in {\rm child} ({\rm root})$). With this definition we can provide a bound on the aggregate balance of classes that holds at all times. 

\newcommand{\aggBalanceLimit}[1]{{\bar A_{#1}}}
\begin{lemma}
	\label{lem:bound_balance}
	For any non-root class~$i\in\internalClasses\cup\leafClasses$, the bound 
	\begin{equation}
		\label{eq:bound_balance}
		\aggBalance i \le \underbrace{\mainRoundBalanceLimit{\rc i} + \weight{\rc i}\max_{j\in\child\rootClass} \left\{\frac{{\weight j + }\mainRoundBalanceLimit j}{\weight j}\right\}}_{\aggBalanceLimit i:=}
	\end{equation}
	holds at all times. 
\end{lemma}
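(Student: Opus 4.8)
The plan is to reduce the general statement to the special case already proved in Lemma~\ref{lem:bound_balance_rc}. The starting point is the elementary observation that, for any non-root class~$i$, the subtree of~$i$ sits inside the subtree of its root-child ancestor, $\stree{i}\subseteq\stree{\rc i}$ (with equality precisely when $i\in\child{\rootClass}$, in which case the claimed bound is literally that of Lemma~\ref{lem:bound_balance_rc}). Provided every balance and residual of a non-root class is nonnegative at all times, the definition~\eqref{def:aggr-balance} then yields the pointwise monotonicity $\aggBalance{i}\le\aggBalance{\rc i}$, simply because the right-hand sum contains the same nonnegative summands as the left one plus additional nonnegative terms. Since $\rc i\in\child{\rootClass}$, Lemma~\ref{lem:bound_balance_rc} bounds $\aggBalance{\rc i}$ by exactly the quantity $\aggBalanceLimit i$ defined in~\eqref{eq:bound_balance}, and since both statements hold at all times, chaining them closes the proof in one line.

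What remains is to verify the nonnegativity invariant: $\balance j\ge 0$ and $\residual j\ge 0$ for every non-root class~$j$ throughout the execution. I would establish this by walking through the updates~\eqref{eq:balance0}--\eqref{eq:inactive2}. A residual of an internal class only ever receives the nonnegative returned balances of idle children via~\eqref{eq:inactive1}--\eqref{eq:inactive2} and is otherwise reset to zero, so $\residual j\ge 0$ always. A leaf balance is decremented only in~\eqref{eq:transmit}, and only when the guard $\balance i\ge L$ holds, so it never drops below zero. A non-root internal class~$j$ has its balance decremented only by the parent-side updates in~\eqref{eq:balance1} and~\eqref{eq:balance2}, each time by the quota $\weight k\fairQuota{j}$ of one active child~$k$; summed over all active children the total decrement equals $\weight[ac]j\fairQuota j=\weight[ac]j\lfloor\balance j/\weight[ac]j\rfloor\le\balance j$ by~\eqref{eq:fair-quota}, so $\balance j$ stays nonnegative. (The root is deliberately excluded, since the dynamic adjustment of $Q^*$ in Sec.~\ref{sec:Qstar} may drive $\balance{\rootClass}$ negative; the lemma concerns non-root classes only.)

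The argument is short because all the quantitative content already resides in Lemma~\ref{lem:bound_balance_rc}. The only thing one has to be careful about is that the floor in the fair-quota definition~\eqref{eq:fair-quota} is exactly what prevents an internal class from handing out more balance to its children than it holds; this is the crux of the nonnegativity invariant and hence of the monotonicity step $\aggBalance{i}\le\aggBalance{\rc i}$, so that is where I expect the writing to require the most care.
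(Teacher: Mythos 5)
Your proposal is correct and follows essentially the same route as the paper's proof: establish $B_j\ge 0$ and $R_j\ge 0$ for all non-root classes, deduce $\aggBalance{i}\le\aggBalance{\rc i}$ from the definition of the aggregate balance, and apply Lemma~\ref{lem:bound_balance_rc} to $\rc i\in\child{\rootClass}$. The paper asserts the nonnegativity invariant in a single sentence, whereas you verify it explicitly via the floor in~\eqref{eq:fair-quota} and the transmission guard; that added detail is sound and only elaborates the same argument.
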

As indicated in the lemma, we will denote the bound on the right hand side by $\aggBalanceLimit i$.  
\begin{proof}	
Due to ~\eqref{eq:balance0}--\eqref{eq:inactive2}, we have $\balance i \ge 0$ and $\residual i \ge 0$ for each non-root class~$i\in\internalClasses\cup\leafClasses$.  Then, we can conclude 
from~\eqref{def:aggr-balance}  that
$
		A_i \le A_{\rc i}$. 	Since~$\rc i\in\child\rootClass$, the claim follows by applying Lemma~\ref{lem:bound_balance_rc} to~$A_{\rc i}$.
\end{proof}

We now express an upper bound on the fairness metric for HLS.

\begin{theorem}
	\label{thm:hls-fm}
	HLS is HMM($\alpha^{\rm(HLS)}$)-fair with~$\alpha^{\rm(HLS)} = \max_{i\in\allClasses, j\in\sib{i}} \alpha^{\rm(HLS)}_{ij}$, where
	\begin{equation*}
		\alpha^{\rm(HLS)}_{ij} = \frac{\aggBalanceLimit i}{\weight i} + \frac{\aggBalanceLimit j}{\weight j}
	\end{equation*}
\end{theorem}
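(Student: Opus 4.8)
The plan is to obtain the fairness bound as an easy corollary of \reflem{lem:balance_over_time} and the time-uniform bound on the aggregate balance in \reflem{lem:bound_balance}, so that the remaining work is mostly bookkeeping. Fix an interval $[t_1,t_2]$ throughout which two sibling classes $i$ and $j$ are backlogged. First I would apply \reflem{lem:balance_over_time} separately to $i$ and to $j$, giving $D_i(t_1,t_2) = \aggBalance{i}(t_1) - \aggBalance{i}(t_2) + \weight{i}\sum_{r\in R(t_1,t_2)}\fairQuota[(r)]{\parent{i}}$ and the analogous identity for $j$. Dividing each by the respective weight and subtracting, the crucial observation is that $\parent{i}=\parent{j}$ because $i$ and $j$ are siblings, hence $\fairQuota[(r)]{\parent{i}}=\fairQuota[(r)]{\parent{j}}$ for every round $r$, and the index set $R(t_1,t_2)$ of main/surplus rounds started in the interval depends on the interval only, not on the class. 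Consequently the entire fair-quota summation cancels, leaving $\tfrac{D_i(t_1,t_2)}{\weight{i}}-\tfrac{D_j(t_1,t_2)}{\weight{j}} = \tfrac{\aggBalance{i}(t_1)-\aggBalance{i}(t_2)}{\weight{i}} - \tfrac{\aggBalance{j}(t_1)-\aggBalance{j}(t_2)}{\weight{j}}$.

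Next I would bound each aggregate-balance term. From \refeq{def:aggr-balance} together with the fact that every update in \refeq{eq:balance0}--\refeq{eq:inactive2} keeps $\balance{k}\ge 0$ and $\residual{k}\ge 0$ for every non-root class $k$, we have $\aggBalance{i}(t)\ge 0$ at all times; and \reflem{lem:bound_balance} gives $\aggBalance{i}(t)\le\aggBalanceLimit{i}$ at all times. Thus $-\aggBalanceLimit{i}\le\aggBalance{i}(t_1)-\aggBalance{i}(t_2)\le\aggBalanceLimit{i}$, and likewise for $j$. Substituting the worst-case signs into the displayed difference yields $\tfrac{D_i(t_1,t_2)}{\weight{i}}-\tfrac{D_j(t_1,t_2)}{\weight{j}} \le \tfrac{\aggBalanceLimit{i}}{\weight{i}}+\tfrac{\aggBalanceLimit{j}}{\weight{j}} = \alpha^{\rm(HLS)}_{ij}$; running the same estimate with $i$ and $j$ exchanged bounds the negated difference by the same (symmetric) quantity $\alpha^{\rm(HLS)}_{ij}$. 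Hence $\bigl|\tfrac{D_i(t_1,t_2)}{\weight{i}}-\tfrac{D_j(t_1,t_2)}{\weight{j}}\bigr|\le\alpha^{\rm(HLS)}_{ij}\le\alpha^{\rm(HLS)}$, and taking the maximum over all classes and their siblings proves HMM($\alpha^{\rm(HLS)}$)-fairness.

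I expect the argument to be short, because the genuinely hard part --- establishing a bound on $\aggBalance{i}$ that holds uniformly over time --- is already done in \reflem{lem:max_b_star_time} through \reflem{lem:bound_balance}. The only points that need care here are (i) verifying that the cancellation is legitimate, i.e. that $R(t_1,t_2)$ and $\fairQuota[(r)]{\parent{i}}$ are genuinely common to both siblings (which they are, by the definitions in \reflem{lem:balance_over_time}), and (ii) reconciling the half-open interval $(t_1,t_2)$ of \reflem{lem:balance_over_time} with the closed interval $[t_1,t_2]$ of \refdef{def:fairness_measure}; this is routine, since $D_i$ is nondecreasing and changes only at discrete packet completions, so the bound derived on $(t_1,t_2)$ extends to $[t_1,t_2]$. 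No further obstacle is anticipated.
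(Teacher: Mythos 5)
Your proposal is correct and follows essentially the same route as the paper's proof: apply Lemma~\ref{lem:balance_over_time} to both siblings, cancel the shared fair-quota term since $\parent{i}=\parent{j}$, and bound the remaining aggregate-balance differences via $0\le\aggBalance{i}\le\aggBalanceLimit{i}$ from Lemma~\ref{lem:bound_balance}. The only addition is your explicit remark on reconciling the open versus closed interval, which the paper leaves implicit.
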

\begin{proof}
	Starting with Definition~\ref{def:fairness_measure}, we derive for two sibling classes $i$ and $j$ as follows:
	\begin{align*}
		\left|\frac{D_i(t_1, t_2)}{\weight i} - \frac{D_j(t_1, t_2)}{\weight j}\right|
		&= \Bigg|\frac{\aggBalance i(t_1) - \aggBalance i(t_2)}{\weight i} - \frac{\aggBalance j(t_1) - \aggBalance j(t_2)}{\weight i}\Bigg|\\
		&\le \left|\frac{\aggBalance i(t_1) - \aggBalance i(t_2)}{\weight i}\right| + \left|\frac{\aggBalance j(t_1) - \aggBalance j(t_2)}{\weight i}\right|\\
		&\le \frac{\aggBalanceLimit i}{\weight i} + \frac{\aggBalanceLimit j}{\weight j}\,.
	\end{align*}
	In the first term, by  applying \reflem{lem:balance_over_time} to~$D_i(t_1, t_2)$ and~$D_j(t_1, t_2)$, the rightmost term in \eqref{nonRootDBound} cancels out since $i$ and $j$ have the same parent node. The second line arrives from the property of the absolute function that~$|x - y| \le |x| + |y|$.
	We arrive at the last line by applying \reflem{lem:bound_balance} and the fact that~$\aggBalance i \ge 0$.
\end{proof}

\begin{corollary}
	\label{col:hls-fm-flat}
	For a flat hierarchy, i.e., a hierarchy with no internal class, the value of~$\alpha^{\rm(HLS)}_{ij}$ becomes
	\begin{equation*}
		\alpha^{\rm(HLS)}_{ij} = \frac{\lmax{i} - 1}{\weight i} + \frac{\lmax{j} - 1}{\weight j} + 2 + 2\max_{k\in\leafClasses}\frac{\lmax k - 1}{\weight k}\,.
	\end{equation*}
\end{corollary}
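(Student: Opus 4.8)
The plan is to specialize Theorem~\ref{thm:hls-fm} to the case $\internalClasses=\emptyset$ and simplify. In a flat hierarchy every non-root class is a leaf class and the unique child of the root, so $\child{\rootClass}=\leafClasses$ and, for every class $i$, the ancestor of $i$ that is a child of the root is $i$ itself, i.e.\ $\rc i = i$. Consequently the quantity $\mainRoundBalanceLimit{i}$ defined after Lemma~\ref{lem:bound_before_replenishment} takes its leaf-class value $\mainRoundBalanceLimit i = \lmax i - 1$ for every class, and likewise $\mainRoundBalanceLimit j = \lmax j - 1$ for every $j$ that appears in the maximum of Lemma~\ref{lem:bound_balance_rc}.

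Next I would substitute these identities into the bound $\aggBalanceLimit i$ from Lemma~\ref{lem:bound_balance}, obtaining
\[
\aggBalanceLimit i = (\lmax i - 1) + \weight i \max_{k\in\leafClasses}\frac{\weight k + \lmax k - 1}{\weight k}.
\]
Writing the fraction inside the maximum as $1 + \frac{\lmax k - 1}{\weight k}$ and pulling the constant out of the maximum gives $\aggBalanceLimit i = (\lmax i - 1) + \weight i\bigl(1 + \max_{k\in\leafClasses}\frac{\lmax k - 1}{\weight k}\bigr)$, hence
\[
\frac{\aggBalanceLimit i}{\weight i} = \frac{\lmax i - 1}{\weight i} + 1 + \max_{k\in\leafClasses}\frac{\lmax k - 1}{\weight k}.
\]
Plugging this and the analogous expression for $j$ into $\alpha^{\rm(HLS)}_{ij} = \frac{\aggBalanceLimit i}{\weight i} + \frac{\aggBalanceLimit j}{\weight j}$ from Theorem~\ref{thm:hls-fm} and collecting terms, the two ``$+1$'' contributions combine into $+2$ and the two copies of the maximum combine into $2\max_{k\in\leafClasses}\frac{\lmax k - 1}{\weight k}$, which is exactly the claimed formula.

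There is essentially no hard step here; the corollary is a direct algebraic simplification of Theorem~\ref{thm:hls-fm}. The only point that needs a little care is justifying that in a flat hierarchy the relevant definitions genuinely collapse as claimed --- in particular that $\rc i = i$ for all non-root $i$ and that the maximum in Lemma~\ref{lem:bound_balance_rc} ranges over all leaf classes (so it does not depend on which classes are active at a given instant). Both follow immediately from $\internalClasses=\emptyset$ together with the fact that the bound of Lemma~\ref{lem:bound_balance_rc} is stated to hold at all times with the maximum taken over all of $\child{\rootClass}$.
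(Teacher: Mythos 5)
Your derivation is correct and is exactly the direct specialization of Theorem~\ref{thm:hls-fm} (via Lemma~\ref{lem:bound_balance} with $\rc i = i$ and $\mainRoundBalanceLimit i = \lmax i - 1$) that the paper intends; the paper itself states the corollary without a written proof precisely because it reduces to this algebra. The simplification $\frac{\weight k + \lmax k - 1}{\weight k} = 1 + \frac{\lmax k - 1}{\weight k}$ and pulling the constant out of the maximum are both valid, so nothing is missing.
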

This bound is identical to the bound for the non-hierarchical  DRR scheduler with quantum $2 + 2\max_{k\in\leafClasses}\frac{\lmax k - 1}{\weight k}$~bytes~\cite{DRR}, which is no greater than~$2\lmax{}$ where~$\lmax{} = \max_{i\in\leafClasses}\lmax{i}$ is the maximum packet size for the entire scheduler.
For comparison, consider the fairness metric of the HDRR scheduler, which is derived in~\cite{HDRR1}. This is the only available bound available for hierarchical round-robin schedulers. 

\begin{theorem}
	\label{thm:hdrr-fm}
	HDRR is HMM($\alpha^{\rm(HDRR)}$)-fair with $\alpha^{\rm(HDRR)} = \max_{i\in\allClasses, j\in\sib{i}} \alpha^{\rm(HDRR)}_{ij}$, where
	\begin{equation*}
		\alpha^{\rm(HDRR)}_{ij} = \frac{\lmax{}}{\weight i} + \frac{\lmax{}}{\weight j} + Q
	\end{equation*}
	where~$Q$ is the quantum of the HDRR.
\end{theorem}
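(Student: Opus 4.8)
\smallskip
\noindent\emph{Proof proposal.}
Since \refthm{thm:hdrr-fm} merely restates the HDRR fairness bound established in~\cite{HDRR1}, the plan is to reconstruct the deficit-round-robin analysis of~\cite{DRR} applied to the flattened round-robin list that HDRR maintains. Recall that HDRR expands the class hierarchy into a single round-robin list in which each leaf class~$i$ occupies a number of slots proportional to its (effective) weight~$\weight i$, every slot carrying the common quantum~$Q$, and that HDRR keeps one deficit counter, say $\delta_i$, per leaf class. Two ingredients drive the argument: (i) a per-class deficit-counter invariant, and (ii) a bound on how far apart the amounts of service opportunity granted to two continuously backlogged classes can be --- the counterpart of the ``round-count differs by at most one'' fact for plain DRR. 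Structurally this mirrors the HLS analysis above, with $\delta_i$ playing the role of the aggregate balance~$\aggBalance i$ of \reflem{lem:balance_over_time} and the per-pass quantum credit playing the role of the HLS quota~$\weight i\,\fairQuota{\parent i}$.

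For ingredient~(i), I would establish that, whenever the cursor sits between visits, a backlogged leaf class~$i$ satisfies $0\le\delta_i<\lmax{i}\le\lmax{}$ --- with $\lmax{}$ the overall maximum packet size, which is why a single $\lmax{}$ appears in the bound --- and that $\delta_i$ is reset to~$0$ when class~$i$ idles. Now fix sibling classes~$i$ and~$j$ that are continuously backlogged throughout~$(t_1,t_2)$, and let $\rho_i\ge 0$ be defined so that $\weight i\,Q\,\rho_i$ is the total quantum credited to~$i$ during the interval. By the usual accounting of~\cite{DRR},
\[
D_i(t_1,t_2)=\weight i\,Q\,\rho_i+\delta_i(t_1)-\delta_i(t_2)\,,
\]
so that $\bigl|D_i(t_1,t_2)/\weight i-Q\rho_i\bigr|<\lmax{}/\weight i$, and symmetrically for~$j$.

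Ingredient~(ii) is the claim $|\rho_i-\rho_j|\le 1$. Each full pass of the cursor over the currently active list traverses exactly~$\weight i$ slots of class~$i$ and exactly~$\weight j$ slots of class~$j$, and since neither class ever idles in~$(t_1,t_2)$ none of their slots is skipped; writing the quantum credited to~$i$ as $P\,\weight i\,Q$ plus a partial-pass term in $[0,\weight i\,Q]$, where $P$ is the number of complete passes, shows that $\rho_i$ and $\rho_j$ both lie in the unit interval $[P,P+1]$. Combining with the two displays from~(i) and the triangle inequality,
\[
\left|\frac{D_i(t_1,t_2)}{\weight i}-\frac{D_j(t_1,t_2)}{\weight j}\right|\le Q\,|\rho_i-\rho_j|+\frac{\lmax{}}{\weight i}+\frac{\lmax{}}{\weight j}\le\frac{\lmax{}}{\weight i}+\frac{\lmax{}}{\weight j}+Q=\alpha^{\rm(HDRR)}_{ij}\,.
\]
For sibling pairs that involve internal classes I would invoke the aggregation $D_i=\sum_{l\in\ldesc{i}}D_l$ together with the fact that HDRR's flattening makes the total slot count --- hence the per-pass credit --- of an internal class equal to~$\weight i\,Q$, so that the same computation carries over; this bookkeeping I would confine to a remark. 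Taking the maximum over $i\in\allClasses$ and $j\in\sib{i}$ yields~$\alpha^{\rm(HDRR)}$.

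The step I expect to be the main obstacle is making ingredient~(ii) watertight: it rests on the precise construction of HDRR's flattened list --- that per-pass slot multiplicities are \emph{exactly} proportional to the effective weights, and that the active-list discipline keeps every slot of a backlogged class in the rotation --- and on cleanly handling boundary effects, namely $t_1$ or $t_2$ falling in the middle of a visit or of a packet transmission, and the cursor skipping \emph{other} classes that idle and later rejoin. A secondary point is verifying that the aggregate $D_i$ of an internal class inherits the nominal per-pass credit~$\weight i\,Q$, which is precisely the property that makes HDRR realize HMM fairness in the first place.
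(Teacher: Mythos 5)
The paper does not prove \refthm{thm:hdrr-fm} at all: it is quoted verbatim from~\cite{HDRR1} (``the fairness metric of the HDRR scheduler, which is derived in~\cite{HDRR1}''), so there is no in-paper argument to compare yours against. Judged on its own, your reconstruction follows the natural route --- the DRR accounting identity $D_i = (\text{visits})\cdot Q + d_i(t_1)-d_i(t_2)$, the between-visit invariant $0\le d_i<\lmax{i}$, and the ``normalized visit counts of two never-idle siblings lie in a common unit interval'' step --- and for two \emph{leaf} siblings this assembles correctly into $\lmax{}/\weight i+\lmax{}/\weight j+Q$.

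The genuine gap is the internal-class case, which you defer to ``a remark'' but which does not carry over. For an internal class~$i$, $D_i(t_1,t_2)=\sum_{l\in\ldesc{i}}D_l(t_1,t_2)$ and each leaf descendant carries its \emph{own} deficit counter, so the boundary error you control by $\delta_i(t_1)-\delta_i(t_2)$ becomes $\sum_{l\in\ldesc{i}}\bigl(d_l(t_1)-d_l(t_2)\bigr)$, which is only bounded by $|\ldesc{i}|\cdot\lmax{}$, not by $\lmax{}$. Your argument therefore yields $\alpha^{\rm(HDRR)}_{ij}=|\ldesc{i}|\lmax{}/\weight i+|\ldesc{j}|\lmax{}/\weight j+Q$ for internal siblings, which is strictly weaker than the stated bound whenever an internal class has more than one leaf descendant; nothing in the proposal closes this factor. (In the paper's numerical example the maximum happens to be attained at a leaf pair, which hides the discrepancy.) Two secondary points you flag but should not underestimate: the invariant $0\le d_i<\lmax{i}$ fails \emph{during} a visit (right after the quantum is added, $d_i$ can reach $\lmax{i}-1+Q$), so arbitrary $t_1,t_2$ as required by \refdef{def:fairness_measure} need explicit handling; and the claim that an internal class's per-pass slot count equals $\weight i$ requires exact additivity $\weight i=\sum_{j\in\child{i}}\weight j$, whereas the paper only assumes superadditivity.
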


To compare the bounds of HLS and HDRR, consider the hierarchy in Fig.~\ref{fig:hierarchy}, where we set the weight of a class to its rate guarantee.  
The maximum packet size of each leaf class~$i$ is set to~$\lmax{} = 1500$~bytes. For HDRR, we also set its quantum~$Q$ to the maximum packet size of~$1500$~bytes.
Computing the bounds we obtain 
\begin{align*}
	& \alpha^{\rm(HLS)} = 103.5~\text{bytes}\,, \\
	& \alpha^{\rm(HDRR)} = 1522.5~\text{bytes}\,.
\end{align*}
Here, HLS clearly has a better fairness metric. In general, due to the very different operation of HLS and HDRR, it is not feasible to show that HLS always has a better fairness metric. In fact, for deep hierarchies, the fairness metric of HDRR can be better than that of HLS. 

 \section{Transmission Gap}
 \label{sec:gap}

For hierarchical round-robin scheduling algorithms, we can define a second performance metric  which expresses the elapsed time between  visits of a given class by the  scheduler.   
The transmission gap of a round-robing scheduling algorithm expresses the delay incurred by a 
packet at the head of the transmission buffer at the end of a visit of its class. The transmission gap expresses how long this packet has to wait until the round-robin scheduler 
returns to its class. Note that transmission gap also 
provides a bound on the delay until a class that becomes backlogged is visited by the scheduler. 
As before,  we assume that~$Q^*$ is set to the lower bound given in  Theorem~\ref{theorem:Qstar}.

\newcommand{\transmissionGap}[2][]{{\gamma^{#1}_{#2}}}
\begin{definition}
	A round-robin based scheduling algorithm has a transmission gap of~$\transmissionGap{}$ if after visiting an arbitrary (leaf) class $i$ at time~$t$, which remains backlogged after the visit, the next visit to class~$i$ is guaranteed to occur at or before time ~$t + \transmissionGap{}$.
\end{definition}

The transmission gap  and packet delay are related in the sense that, if~$\transmissionGap{}$ is the transmission gap for a scheduling algorithm, and~$d^{\max}$ is the maximum delay experienced by a packet, then~$d^{\max} \ge \transmissionGap{}$.

For HLS, the transmission gap is as given in the following theorem.  
\begin{theorem}
	\label{thm:bound_tgap_hls}
	An HLS scheduler with  link capacity~$C$ has a transmission gap of
	\begin{equation}
		\label{eq:bound_tgap_hls}
		\transmissionGap[\rm(HLS)]{} = \frac{2}{C} \left( \sum_{i\in\internalClasses\cup\leafClasses}\weight i + \sum_{i\in\leafClasses} \lmax i \right) \,.	
	\end{equation}
\end{theorem}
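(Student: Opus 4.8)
The plan is to bound, in bytes, the traffic HLS transmits between the end of a visit to a backlogged leaf class~$i$ and the next visit to~$i$, and then divide by the link rate~$C$. Throughout I would work with the phase-decomposed HLS of \reflem{lemma:hls:phases}, in which every main or surplus round first performs a replenishment phase (updating all quotas and balances) and then a transmission phase (a round robin that transmits from each active class); by that lemma the order of packet transmissions is unchanged.

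The central step is to show that every round, main or surplus, transmits at most~$Q^*$ bytes, where~$Q^*$ is the round-size value used in that round. At the end of the replenishment phase the residuals of all internal classes and of the root have been folded into the balances by \refeq{eq:balance0}--\refeq{eq:balance1}, so, exactly as in the proof of \refthm{theorem:Qstar}, the invariance \refeq{eq:invariant} gives $\sum_{k\in\allClasses}\balance k = Q^*$ at that instant. In the ensuing transmission phase each active leaf class~$j\in\leafClasses_{\rm ac}$ is visited once, and~$\balance j$ is not modified before that visit: transmissions of other leaf classes only move balance into~$\balance{\rootClass}$ via \refeq{eq:transmit}, and a class becoming idle only moves balance into a residual via \refeq{eq:inactive1}--\refeq{eq:inactive2}, while~$j$, being a leaf, is never the parent of an idling class. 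Since~$j$ transmits only while~$\balance j$ covers the head-of-line packet, it transmits at most~$\balance j$ bytes during its visit; summing over active leaf classes, the round transmits at most $\sum_{j\in\leafClasses_{\rm ac}}\balance j \le \sum_{k\in\allClasses}\balance k = Q^*$. Finally, by \refthm{theorem:Qstar}, $Q^* = \sum_{k\in\activeClasses}\weight k + \sum_{j\in\leafClasses_{\rm ac}}\lmax j \le \sum_{k\in\internalClasses\cup\leafClasses}\weight k + \sum_{j\in\leafClasses}\lmax j$, since $\activeClasses\subseteq\internalClasses\cup\leafClasses$ and $\leafClasses_{\rm ac}\subseteq\leafClasses$; this bound is uniform over all rounds, even though~$Q^*$ is re-dimensioned at round boundaries.

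Next I would show that two consecutive visits to~$i$ straddle exactly one round boundary. If~$i$ is visited in round~$r$ and is still backlogged after that visit, then~$i$ is backlogged at the start of round~$r+1$, hence active there, and therefore visited once in round~$r+1$; no further round can intervene. The elapsed interval from the end of~$i$'s visit in round~$r$ to the start of~$i$'s visit in round~$r+1$ decomposes into (i)~the remainder of round~$r$'s transmission phase (the active classes scheduled after~$i$), (ii)~round~$r+1$'s replenishment phase, which transmits nothing, and (iii)~the prefix of round~$r+1$'s transmission phase that precedes~$i$. By the per-round byte bound, the bytes transmitted in (i) are at most the round size of round~$r$ and those in (iii) are at most the round size of round~$r+1$, each of which is at most $\sum_{k\in\internalClasses\cup\leafClasses}\weight k + \sum_{j\in\leafClasses}\lmax j$. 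Dividing this total byte count by~$C$ shows that the next visit to~$i$ occurs at or before $t + \frac{2}{C}\bigl(\sum_{i\in\internalClasses\cup\leafClasses}\weight i + \sum_{i\in\leafClasses}\lmax i\bigr)$, which is the claimed transmission gap.

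The main obstacle is the per-round byte bound: it rests on the phase decomposition to freeze a leaf's balance between replenishment and its own visit, on the invariance \refeq{eq:invariant} to pin the post-replenishment total balance to~$Q^*$, and on bounding the dynamically adjusted~$Q^*$ by a single round-independent quantity. A secondary point to verify is that the replenishment phase contributes no transmissions and that, for a class that stays backlogged, round~$r+1$ is literally the next round in which it is visited, so that nothing is transmitted between the two visits beyond what is accounted for in (i)--(iii).
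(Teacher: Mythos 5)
Your proposal is correct and follows essentially the same route as the paper's proof: bound the bytes transmitted in any single round by the post-replenishment total balance $Q^*$ (via the invariance and the fact that each active leaf transmits at most its balance), relax the dynamically dimensioned $Q^*$ to the round-independent quantity $\sum_{i\in\internalClasses\cup\leafClasses}\weight i + \sum_{i\in\leafClasses}\lmax i$, and then observe that two consecutive visits to a backlogged class span at most two full rounds. The paper packages the last step as a chain of inequalities on the round start, visit, and end times rather than your three-piece interval decomposition, but the content is identical.
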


\begin{proof}
	We assume that the computation time in the quota replenishment phase is negligible and that HLS transmits packets for the entire duration of a main or surplus round.
	
	Consider a main or surplus round~$r$. Let~$t^{{(r)}}_{\rm s}$ be the time at the start of round~$r$ after its replenishment phase, and~$t^{{(r)}}_{\rm e}$ be the time at the end of round~$r$.
	Let~$L_i^{(r)}$ be the number of bytes that a leaf class~$i$ transmits in round~$r$, and let~$L^{(r)}$ be the total number of bytes that HLS transmits in round~$r$.
	That is{,}
	\begin{equation*}
		L^{(r)} = \sum_{i\in\leafClasses^{(r)}_{\rm ac}}L_i^{(r)}\,,
	\end{equation*}
	where~$\leafClasses^{(r)}_{\rm ac}$ is the set of active leaf classes at the beginning of round~$r$.
	Finally, let~$Q^*(t)$ and~$\balance i(t)$ be the values of~$Q^*$ and~$\balance i$ at time~$t$, respectively.
	
	In the interval~$(t^{{(r)}}_{\rm s}, t^{{(r)}}_{\rm e})$, the balance~$\balance i$ for each active leaf class~$i$ is only updated using~\eqref{eq:transmit} due to a packet transmission or using~\eqref{eq:inactive1} when class~$i$ becomes idle. If class~$i$ remains active after the visit in round~$r$, it follows that
	\begin{equation*}
		L_i^{(r)} = \balance i(t^{{(r)}}_{\rm s}) - \balance i(t^{{(r)}}_{\rm e})\,.
	\end{equation*}
	{Otherwise,} class~$i$ becomes idle at some time~$t_{\rm idle} \le t^{{(r)}}_{e}$ {when the visit ends, and}
	\begin{equation*}
		L_i^{(r)} = \balance i(t^{{(r)}}_{\rm s}) - \balance i(t_{\rm idle})\,.
	\end{equation*}
	In both cases,~$L_i^{(r)} \le B_i(t^{{(r)}}_s)$.
	With  ~$Q^*$ set to the lower bound from Theorem~\ref{theorem:Qstar} and with ~$B_i (t) \ge 0$ and~$R_i (t) \ge 0$ at all times $t$, we obtain
	\begin{align*}
	\label{eq:hls-tgap-0}
		L^{(r)}
		&= \sum_{i\in\leafClasses_{\rm ac}} L_i^{(r)} \\
		&\le \sum_{i\in\leafClasses_{\rm ac}} B_i(t^{{(r)}}_{\rm s}) \\
		&\le Q^*(t^{{(r)}}_{\rm s})\\
		&\le \sum_{i\in\internalClasses\cup\leafClasses}\weight i + \sum_{i\in\leafClasses} \lmax i \,.
	\end{align*}
	With the link capacity~$C$  and due to our assumption that HLS transmits packets for the entire duration of round~$r$, we get 
	\begin{equation}
		\label{eq:hls-tgap-i}
		t^{{(r)}}_{\rm e} - t^{{(r)}}_{\rm s} = \frac{L^{(r)}}{C} \le \frac{1}{C} \left( \sum_{i\in\internalClasses\cup\leafClasses}\weight i + \sum_{i\in\leafClasses} \lmax i \right)\,.
	\end{equation}
	
	Consider a class~$i$ that HLS finishes visiting at time~$t^{(r_1)}_v$ in some main or surplus round~$r_1$ and remains backlogged after the visit.
	Let~$r_2$ denote the (main or surplus) round that follows  round~$r_1$.
	Since class~$i$ is backlogged after the visit in round~$r_1$, HLS visits class~$i$ again in  round~$r_2$.
	Let the time that HLS visits class~$i$ in round~$r_2$ be~$t^{(r_2)}_v$.

	From the definition of~$t^{(r)}_s$ and~$t^{(r)}_e$, and from our assumption that the computation time of each replenishment phase is negligible, we obtain the relationship
	\begin{equation*}
		t^{(r_1)}_s \le t^{(r_1)}_e = t^{(r_2)}_s \le t^{(r_2)}_e\,.
	\end{equation*}
	Furthermore, since the time~$t^{(r_1)}_v$ is a time within the transmission phase of round~$r_1$, and~$t^{(r_2)}_v$ is a time within the transmission phase of round~$r_2$, it follows that
	\begin{equation*}
		t^{(r_1)}_s \le t^{(r_1)}_v \le t^{(r_1)}_e \quad\text{and}\quad t^{(r_2)}_s \le t^{(r_2)}_v \le t^{(r_2)}_e\,.
	\end{equation*}
	We then combine the inequalities together to obtain
	\begin{equation}
		\label{eq:hls-tgap-iii}
		t^{(r_1)}_s \le t^{(r_1)}_v \le t^{(r_1)}_e = t^{(r_2)}_s \le t^{(r_2)}_v \le t^{(r_2)}_e\,.
	\end{equation}

	We now compute the upper bound for the duration between the visits to class~$i$ in round~$r_1$ and~$r_2$. That is,
	\begin{align*}
		t^{(r_2)}_v - t^{(r_1)}_v
		&\le t^{(r_2)}_e - t^{(r_1)}_s\\
		&\le \left(t^{(r_2)}_e - t^{(r_2)}_s\right) - \left(t^{(r_1)}_e - t^{(r_1)}_s\right)\\
		&\le \frac{2}C  \left( \sum_{i\in\internalClasses\cup\leafClasses}\weight i + \sum_{i\in\leafClasses} \lmax i \right)\\
		&= \gamma^{\rm(HLS)}\,.
	\end{align*}
	We arrive at the first two lines by applying~\eqref{eq:hls-tgap-iii}.
	The third and last lines follow from~\eqref{eq:hls-tgap-i} and the definition of~$\gamma^{\rm(HLS)}$, respectively.
	As such, after HLS finishes visiting class~$i$ at time~$t^{(r_1)}_v$, HLS guarantees to visit class~$i$ again at time~$t^{(r_2)}_v \le t^{(r_1)}_v + \gamma^{\rm(HLS)}$.

\end{proof}

For comparison, we also compute the transmission gap of  the HDRR scheduler from \cite{HDRR1}.
Similar to DRR, HDRR keeps track of a deficit counter~$d_i$ for each leaf class~$i$.
Initially, the value of~$d_i$ for each leaf class~$i$ is set to zero. 
When HDRR visits a leaf class~$i$ for transmission, it increases~$d_i$ by a fixed quantum~$Q$ and transmits a packet at the head of the queue for that class.
At each transmission for a packet of size~$L$, HDRR subtracts~$L$ from~$d_i$.
The visit to class~$i$ stops when~$d_i < L$ where~$L$ is the size of the packet at the head of the queue.
If class~$i$ becomes idle, HDRR sets~$d_i$ to zero.
Note that these are the only scenarios where the value of~$d_i$ changes.
We now compute the number of bytes that HDRR may transmit for a class~$i$ given the number of times that HDRR visits that class.
\begin{lemma}
	\label{lem:hdrr-visit-size}
	Given an HDRR scheduler with quantum Q. If a leaf class~$i$ is visited~$k$ times in a time interval~$(t_1, t_2)$ and HDRR does not visit class~$i$ at time~$t_1$ or~$t_2$, then
	\begin{equation}
		\label{eq:hdrr-visit-size-0}
		kQ - \lmax i \le D_i(t_1, t_2) \le kQ + \lmax i,
	\end{equation}
\end{lemma}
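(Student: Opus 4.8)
The plan is to track the deficit counter~$d_i$ of class~$i$ across the interval~$(t_1,t_2)$ and to convert the three ways in which $d_i$ can change into a single accounting identity. By the stated operation of HDRR, during~$(t_1,t_2)$ the counter~$d_i$ is incremented by exactly~$Q$ at each of the~$k$ visits, decremented by the size of every transmitted packet (a total of~$D_i(t_1,t_2)$ bytes), and reset to~$0$ whenever class~$i$ empties its queue. Writing~$\delta\ge 0$ for the total amount of deficit discarded by such idle-resets during the interval, conservation of the counter gives
\begin{equation*}
	d_i(t_2) = d_i(t_1) + kQ - D_i(t_1,t_2) - \delta\,,
\end{equation*}
which I would rearrange to $D_i(t_1,t_2) = kQ + d_i(t_1) - d_i(t_2) - \delta$.

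The next step is to bound~$d_i$ at the two endpoints. Since HDRR does not visit class~$i$ at time~$t_1$ (nor at~$t_2$), that instant lies either while class~$i$ is idle, in which case $d_i=0$, or strictly between the end of one visit of class~$i$ and the start of the next; a visit of class~$i$ ends only when~$d_i$ drops below the size of the head-of-queue packet (which is at most~$\lmax i$) or when the queue empties (then $d_i=0$). Hence $0 \le d_i(t_1) < \lmax i$ and, likewise, $0 \le d_i(t_2) < \lmax i$.

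The upper bound then follows by discarding the nonnegative terms~$d_i(t_2)$ and~$\delta$ from the identity: $D_i(t_1,t_2) \le kQ + d_i(t_1) \le kQ + \lmax i$. For the lower bound I would use that no idle-reset of class~$i$ occurs when class~$i$ stays backlogged throughout~$(t_1,t_2)$, so that $\delta=0$ in that regime; this is the natural hypothesis here (paralleling the HLS analysis), and it is in fact needed, since a class that repeatedly receives a single small packet, drains, and is reset can transmit far fewer than~$kQ-\lmax i$ bytes over~$k$ visits. With $\delta=0$ we obtain $D_i(t_1,t_2) = kQ + d_i(t_1) - d_i(t_2) \ge kQ - d_i(t_2) \ge kQ - \lmax i$, which is~\eqref{eq:hdrr-visit-size-0}.

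The step I expect to be the main obstacle is the endpoint bookkeeping --- arguing that at~$t_1$ and~$t_2$ the counter~$d_i$ is not in a transient mid-visit state and pinning its value to the range~$[0,\lmax i)$ --- together with isolating the continuous-backlog condition that forces~$\delta=0$ for the lower half of the bound. Once the accounting identity and the endpoint bounds are established, both inequalities are immediate arithmetic.
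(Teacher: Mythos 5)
Your proof follows the same route as the paper's: an accounting identity for the deficit counter $d_i$ over $(t_1,t_2)$, followed by pinning $d_i(t_1)$ and $d_i(t_2)$ to the range $[0,\lmax i]$ using the fact that neither endpoint lies inside a visit. The endpoint bookkeeping you worried about is handled in the paper exactly as you propose (case split on whether class $i$ has been visited before $t_1$, and on whether it went idle at the end of its last visit), so that part is not an obstacle.

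Where you differ is the $\delta$ term. The paper's proof lists only two ways $d_i$ changes --- visits (each adding $Q$) and transmissions (subtracting packet sizes) --- and writes $d_i(t_2) = d_i(t_1) + kQ - D_i(t_1,t_2)$ with no correction for idle-resets, even though the paper's own description of HDRR states that $d_i$ is set to zero when class $i$ becomes idle. You are right that this reset discards accumulated deficit, that the identity therefore needs a $-\delta$ term with $\delta \ge 0$, and that the lower bound $D_i(t_1,t_2) \ge kQ - \lmax i$ can genuinely fail for a class that repeatedly transmits one small packet, drains, and is reset. So your version is the more careful one: the upper bound survives unconditionally (since $\delta \ge 0$ only helps), while the lower bound requires continuous backlog. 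This caveat is harmless downstream --- the transmission-gap theorem for HDRR uses only the upper bound $D_j(t_1,t_2) \le |V_j|Q + \lmax j$ --- but as stated, the two-sided lemma is only correct under the additional hypothesis you identified, and your proof makes that explicit where the paper's does not.
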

\begin{proof}
	Let~$d_i(t)$ be the value of~$d_i$ at time~$t$. 
	In HDRR, the deficit counter~$d_i$ for class~$i$ only changes when:
	\begin{enumerate}
		\item {HDRR visits class~$i$}:\\
			Each visit increases~$d_i$ by a fixed quantum~$Q$.
			Since there are~$k$ visits in the time interval~$(t_1, t_2)$, the total increase is~$kQ$.
		\item {Class~$i$ transmits a packet}:\\
			When class~$i$ transmits a packet of size~$L$, it subtracts~$L$ from~$d_i$.
			In $(t_1, t_2)$, $d_i$ is therefore decreased by  the number of bytes that class~$i$ transmits in the interval, which is given by~$D_i(t_1, t_2)$.
	\end{enumerate}
	Combining the two cases, we obtain
	\begin{equation}
		\label{eq:hdrr-visit-size-i}
		d_i(t_2) = d_i(t_1) + kQ - D_i(t_1, t_2)\,.
	\end{equation}
	Since HDRR does not visit class~$i$ at time~$t_1$, there are two possible scenarios:
	\begin{enumerate}
		\item
			{HDRR does not visit class~$i$ at all prior to time~$t_1$}:\\
			In this case, the value of~$d_i(t_1)$ equals its initial value, which is zero.
		\item
			{HDRR visits class~$i$ at least once before time~$t_1$}:\\
			Let~$t'$ be the time right after the most recent visit to class~$i$ prior to time~$t_1$, and so~$d_i(t_1) = d_i(t')$ because there is no change to~$d_i$ in the interval~$(t', t_1)$.
			If class~$i$ becomes idle after the visit ends at time~$t'$, then~$d_i(t') = 0$.
			Otherwise, class~$i$ remains backlogged after time~$t'$, and it holds that~$0 \le d_i(t') < L$ where~$L$ is the size of the head of the queue at time~$t'$.
	\end{enumerate}
	In both cases, it follows that
	\begin{equation}
		\label{eq:hdrr-visit-size-ii}
		0 \le d_i(t_1) \le \lmax i\,.
	\end{equation}
	By using the same consideration for~$t_2$, we obtain
	\begin{equation}
		\label{eq:hdrr-visit-size-iii}
		0 \le d_i(t_2) \le \lmax i\,.
	\end{equation}
	We then combine~\eqref{eq:hdrr-visit-size-i},~\eqref{eq:hdrr-visit-size-ii}, and~\eqref{eq:hdrr-visit-size-iii} to arrive at~\eqref{eq:hdrr-visit-size-0}.
\end{proof}

In order to compute the transmission gap of HDRR, we also make use a result from~\cite{HDRR1}.
\begin{lemma}[\hspace{-0.01em}\cite{HDRR1}]
	\label{lem:hdrr-visit-frequency}
	For an HDRR scheduler, if class~$i$ is backlogged, then class~$i$ is visited at least once every~$\beta_i$ visits where
	\begin{equation*}
		\beta_i = \prod_{j\in\anc{i}}\sum_{k\in\child j}\weight k\,.
	\end{equation*}
\end{lemma}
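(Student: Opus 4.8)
The plan is to exploit the recursive structure of the HDRR flat schedule described in \cite{HDRR1}. Recall that HDRR flattens the hierarchy by installing, at the root and at every internal class~$j$, a weighted-round-robin (WRR) pointer over $\child{j}$ in which child~$k$ carries weight~$\weight{k}$; because the weights are positive integers satisfying $\weight{j}\ge\sum_{k\in\child{j}}\weight{k}$, every child has weight at least~$1$. One leaf visit of HDRR amounts to advancing the root's pointer one step, then advancing the pointer of the selected child, and so on down to a leaf, so ``$\beta_i$ leaf visits'' is exactly ``$\beta_i$ advances of the root's pointer,'' and a ``visit to class~$i$'' (leaf or internal) is the event in which this chain of pointer advances passes through~$i$. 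Writing $W_j:=\sum_{k\in\child{j}}\weight{k}$ for an internal class~$j$, the target identity is $\beta_i=\prod_{j\in\anc{i}}W_j$, which I would prove by induction on the depth of~$i$ in the hierarchy, built on a single per-level gap fact about WRR.

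The per-level building block is: in the WRR over $\child{j}$ with integer weights summing to~$W_j$, each child~$c$ is selected at least once in every $W_j$ consecutive advances of~$j$'s pointer. This is the step where the precise interleaving rule of \cite{HDRR1} matters, and it is the main obstacle. With the interleaved/stride slot assignment that HDRR uses, a child of weight $\weight{c}\ge 1$ out of total weight~$W_j$ has consecutive selections at most $\lceil W_j/\weight{c}\rceil\le W_j$ apart, so every window of $W_j$ consecutive advances contains a selection of~$c$. (If HDRR instead skips idle siblings, the effective total weight only decreases and the gap can only shrink; and if ``visits'' are meant to count every slot of the static flat schedule, the statement is simply the combinatorial max-gap bound for that schedule.)

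For the base case $i\in\child{\rootClass}$ we have $\anc{i}=\{\rootClass\}$ and $\beta_i=W_{\rootClass}$; since each leaf visit is one advance of the root's pointer, the building block applied at the root gives the claim. For the inductive step, let~$i$ be deeper, assume the claim for $\parent{i}$, and use $\anc{i}=\anc{\parent{i}}\cup\{\parent{i}\}$ to write $\beta_i=\beta_{\parent{i}}\cdot W_{\parent{i}}$. Take any window of $\beta_i$ consecutive leaf visits and split it into $W_{\parent{i}}$ consecutive sub-windows of $\beta_{\parent{i}}$ leaf visits each; by the inductive hypothesis each sub-window contains a visit to~$\parent{i}$, so the window contains at least $W_{\parent{i}}$ visits to~$\parent{i}$. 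Each such visit is one advance of~$\parent{i}$'s pointer, and since that pointer advances only during a visit to~$\parent{i}$, these visits are consecutive advances of it; hence the last $W_{\parent{i}}$ of them form $W_{\parent{i}}$ consecutive advances of~$\parent{i}$'s pointer, which by the building block applied at~$\parent{i}$ select the child~$i$ at least once. That selection of~$i$ lies inside the original window, so the window contains a visit to~$i$. This closes the induction and yields $\beta_i=\prod_{j\in\anc{i}}W_j$, matching the bound of \cite{HDRR1}.
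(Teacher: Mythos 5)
The paper never proves this lemma: it is imported verbatim from \cite{HDRR1} (note the citation in the lemma header and the sentence ``we also make use a result from~\cite{HDRR1}'' preceding it), so there is no in-paper argument to compare yours against. Judged on its own terms, your reconstruction has the right skeleton. The factorization $\beta_i=\beta_{\parent i}\cdot W_{\parent i}$ with $W_j=\sum_{k\in\child j}\weight k$, the induction on depth, and the window argument --- cut $\beta_i$ consecutive leaf visits into $W_{\parent i}$ sub-windows of length $\beta_{\parent i}$, extract at least $W_{\parent i}$ visits to $\parent i$, observe that these are consecutive advances of $\parent i$'s pointer because that pointer moves only during a visit to $\parent i$ --- is exactly how one would derive the product formula, and the bookkeeping there is handled correctly. (One should also note, as you implicitly do, that backloggedness of $i$ propagates to all of $\anc i$, so the required visits to $\parent i$ actually occur.)

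The one load-bearing step you do not actually establish is the per-level building block: that every $W_j$ consecutive advances of $j$'s pointer select each backlogged child at least once. For a static interleaved schedule of period $W_j$ in which child $c$ occupies $\weight c\ge 1$ slots, this is immediate --- any $W_j$ consecutive slots cover one full period cyclically --- and your $\lceil W_j/\weight c\rceil\le W_j$ max-gap bound gives the same conclusion. But HDRR is not a static schedule: it carries deficit counters and skips and reinserts idle classes, and whether the max-gap property survives that dynamics is precisely the content of the result being cited. Your parenthetical (``the effective total weight only decreases and the gap can only shrink'') is an assertion rather than an argument; it is plausible if skipped entries are not counted as visits, but it is the step where the precise semantics of ``visit'' in \cite{HDRR1} must be pinned down. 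Since you flag this dependency explicitly, I would describe the proposal as a correct reduction of the lemma to a per-node interleaving property that still has to be verified against the HDRR construction itself, rather than a complete self-contained proof.
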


\begin{theorem}
	\label{thm:bound_tgap_hdrr}
	An HDRR with link capacity~$C$ and quantum~$Q$ has a transmission gap of
	\begin{equation*}
		\transmissionGap[\rm \rm(HDRR)]{} = \frac{1}{C} \left(\max_{i\in\leafClasses}(\beta_i - 1)Q + \sum_{j\in\leafClasses}\lmax j \right)\,,
	\end{equation*}
where $\beta_i$ is as given in Lemma~\ref{lem:hdrr-visit-frequency}.
\end{theorem}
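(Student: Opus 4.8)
The plan is to follow the same template as the proof of \refthm{thm:bound_tgap_hls}: fix a leaf class~$i$ that HDRR finishes visiting at time~$t$ and that remains backlogged right after the visit, bound the total number of bytes HDRR transmits before it returns to class~$i$, and then convert bytes into elapsed time by dividing by~$C$. As in the HLS case, I would assume that HDRR transmits at the full link rate~$C$ whenever it is active, so that the time between the two visits equals the number of bytes transmitted in between divided by~$C$.

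First I would observe that, since class~$i$ is backlogged immediately after the visit at time~$t$ and is never served again until its next visit, it stays backlogged throughout the interval between the two visits (packets only leave class~$i$ when it is visited). Hence \reflem{lem:hdrr-visit-frequency} applies to class~$i$, and I would translate its statement ``class~$i$ is visited at least once every~$\beta_i$ visits'' into the claim that at most~$\beta_i - 1$ visits to classes other than~$i$ occur strictly between the visit at time~$t$ and the next visit to class~$i$: if the two consecutive $i$-visits occupy positions~$p$ and~$q$ in the global sequence of visits, then the window of~$\beta_i$ consecutive visits starting at position~$p+1$ must contain an $i$-visit, forcing~$q \le p + \beta_i$, so the visits at positions~$p+1,\dots,q-1$ number at most~$\beta_i - 1$. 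Let~$k_j$ denote the number of times a leaf class~$j\neq i$ is visited in this interval, so $\sum_{j\in\leafClasses,\,j\neq i} k_j \le \beta_i - 1$.

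Next I would bound the bytes contributed by each such class. Taking~$t_1 = t$ (the instant the visit to class~$i$ ends) and~$t_2$ equal to the start of the next visit to class~$i$ --- both instants lie on the boundary of a visit to class~$i$, and since visits to distinct classes do not overlap in time, no class~$j\neq i$ is being visited at~$t_1$ or~$t_2$ --- I can apply \reflem{lem:hdrr-visit-size} to each~$j\neq i$ to get~$D_j(t_1, t_2) \le k_j Q + \lmax j$. Class~$i$ transmits nothing in the open interval~$(t_1, t_2)$ since it is not visited there. Summing over all leaf classes~$j\neq i$, the total traffic transmitted in~$(t_1, t_2)$ is at most $Q\sum_{j\neq i} k_j + \sum_{j\neq i}\lmax j \le (\beta_i - 1)Q + \sum_{j\in\leafClasses}\lmax j$. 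Dividing by~$C$ and maximizing over all leaf classes~$i$ gives $t_2 - t_1 \le \frac{1}{C}\bigl(\max_{i\in\leafClasses}(\beta_i-1)Q + \sum_{j\in\leafClasses}\lmax j\bigr) = \transmissionGap[\rm(HDRR)]{}$, which is the claimed transmission gap.

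The main obstacle is getting the counting around \reflem{lem:hdrr-visit-frequency} exactly right, in particular making the ``$\beta_i - 1$ intervening visits'' argument airtight even though the set of active leaf classes may grow or shrink during the interval --- which is precisely why the final sum ranges over all of~$\leafClasses$ rather than only the active classes, and why the per-class bound from \reflem{lem:hdrr-visit-size} (which allows~$k_j = 0$) is the right tool. A secondary, more routine point is checking the non-visit-at-the-endpoints hypothesis of \reflem{lem:hdrr-visit-size} for every~$j\neq i$, which I would dispatch via the observation that~$t_1$ and~$t_2$ are chosen to coincide with the completion and the start of visits to class~$i$ and that visits do not overlap in time.
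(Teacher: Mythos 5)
Your proposal is correct and follows essentially the same route as the paper's proof: both fix the interval between consecutive visits to class~$i$, bound the intervening visits by~$\beta_i-1$ via \reflem{lem:hdrr-visit-frequency}, bound the per-class traffic via \reflem{lem:hdrr-visit-size}, and divide by~$C$ using the fact that the link stays busy. Your extra care in unpacking the ``at least once every~$\beta_i$ visits'' statement and in checking the endpoint hypotheses of \reflem{lem:hdrr-visit-size} only makes explicit steps the paper takes for granted.
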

\begin{proof}
	For any backlogged leaf class~$i$, let~$t_1$ be the time at the end of a visit to class~$i$ where class~$i$ remains backlogged after the visit, and let~$t_2$ be the time of the subsequent visit to class~$i$ after time~$t_1$.
	Let the~$V$ be the set of visits between~$t_1$ and~$t_2$, and~$M$ be the set of classes that are visited in the time interval~$(t_1, t_2)$. Furthermore, let~$V_j\subset V$ be the set of visits to class~$j\in M$ in the interval~$(t_1, t_2)$. From the definition,~$\{V_j\}_{j\in M}$ is pairwise disjoint, and
	\begin{equation*}
		V = \bigcup_{j\in M} V_j\,.
	\end{equation*}

	Let~$D(t_1, t_2)$ be the number of bytes that HDRR transmits in the interval~$(t_1, t_2)$.
	Since HDRR has the link capacity of~$C$ and it is guaranteed to be backlogged in the time interval~$(t_1, t_2)$ due to class~$i$ being backlogged in the interval, it follows that
	\begin{equation}
		\label{eq:bound_tgap_hdrr-i}
		D(t_1, t_2) = C(t_2 - t_1)\,.
	\end{equation}
	Consider,
	\begin{align*}
		D(t_1, t_2)
		&= \sum_{j\in M}D_j(t_1, t_2) \\
		&\le \sum_{j\in M}|V_j|Q + \lmax j\\
		&= |V|Q + \sum_{j\in M} \lmax j\\
		&\le (\beta_i-1)Q + \sum_{j\in\leafClasses} \lmax j\,.
	\end{align*}
	The first line comes from the definition of~$D(t_1, t_2)$ and the fact that HDRR only visits classes in~$M$ in the interval~$(t_1, t_2)$.
	We then apply Lemma~\ref{lem:hdrr-visit-size} in the second line and rearrange the summation in the third line.
	We then apply Lemma~\ref{lem:hdrr-visit-frequency} and the fact that~$M\subseteq \leafClasses$ in the last line.
	By applying~\eqref{eq:bound_tgap_hdrr-i} and the definition of~$\gamma^{\rm HDRR}$, we obtain
	\begin{align*}
		t_2
		&\le t_1 + \frac{(\beta_i - 1)Q + \sum_{j\in\leafClasses}\lmax j}C\\
		&\le t_1 + \gamma^{\rm(HDRR)}\,.
	\end{align*}
	Therefore, after class~$i$ is visited at time~$t_1$, it is guaranteed to be visited again at time~$t_2 \le t_1 + \gamma^{\rm(HDRR)}$.
\end{proof}

For the the class hierarchy in Fig.~\ref{fig:hierarchy}, with a maximum packet size~$\lmax{} = 1500$~bytes for all classes, and a link with rate~$C = 1$~Gbps, we obtain %
\begin{align*}
	& \transmissionGap[\rm(HLS)]{}
	= 0.146~\text{ms}\,, \\
	&	\transmissionGap[\rm(HDRR)]{}  = 3600~\text{ms}\,.
\end{align*}
We observe that HLS has a significantly smaller transmission gap compared to HDRR.
The difference of the transmission gaps is exacerbated with a larger class hierarchy. 
Consider a class hierarchy consisting of a complete binary tree with~$l$ levels where each left child class has its weight set to~$3$ and the right child class has its weight set to~7.
We again assume ~$C = 1~\text{Gbps}$ and~$\lmax{}=1500$~bytes. 
Table~\ref{table:gap-compare} shows the transmission gaps of HLS and HDRR for the 
range $4 \le l \le 11$. It is apparent that the transmission gap of HDRR becomes unsustainable for large class hierarchies. 
\begin{table}[h]
\centering
\begin{tabular}{l|c|c|c|c|c|c|c|c}
  \\[-1em]
  $l$ & 4 & 5 & 6 & 7 & 8 & 9 & 10 & 11\\[2pt]
 \hline
   \\[-1em]
 $|\leafClasses|$ & 8 & 16 & 32 & 64 & 128 & 256 & 512 & 1024 \\[2pt]
 \hline
  \\[-1em]
$\gamma^{\rm (HLS)}$ (in ms) & 0.096 & 0.19 & 0.39 & 0.77 & 1.55 & 3.09 & 6.18 & 12.37 \\[2pt]
 \hline \\[-1em]
 $\gamma^{\rm (HDRR)}$ (in ms)& 12 & 120 & 1200 & 12,000 & {120,002} & {1,200,003} & {12,000,006} & {120,000,012}
\end{tabular}

\vspace{4pt}

\caption{Transmission gaps of HLS and HDRR for a binary tree hierarchy with $l$ levels.}
\label{table:gap-compare}
\end{table}

\section{Evaluation}
\label{sec:eval}

We have implemented HLS as a kernel module in  Linux  kernel 4.15.0-101-generic {\cite{HLS-github}. A description of 
the implementation is available in \cite{Natchanon-thesis}.}   Here 
we present measurement experiments of 
the HLS Qdisc in Linux and compare them with measurements of the existing link sharing schedulers in Linux, CBQ and HTB. 
The experiments are conducted on Emulab~\cite{emulab}, a network testbed for network experiments.

\subsection{Experimental Setup}
\label{subsec:exp-setup}
The topology of the experiments involves three Linux servers as shown in Fig.~\ref{fig:exp_topo},
designated as \emph{traffic generator}, \emph{scheduler}, and \emph{traffic sink}.
Each server is a Dell PowerEdge R430 with two 2.4 GHz 8-Core CPUs,
64 GB RAM, a dual-port/quad-port 1GbE PCI-Express NICs, and a dual-port/quad-port Intel X710 10GbE PCI-Express NICs. 
The servers run Ubuntu 18.04LTS. 

\begin{figure}[!h]
    \centering
    \includegraphics[width=0.7\columnwidth]{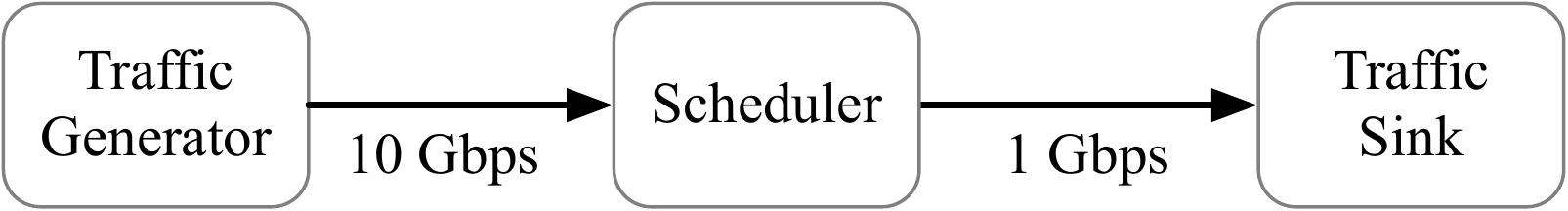}
    \caption{Network topology for experiments.}
    \label{fig:exp_topo}
\end{figure}

The traffic generator and  the scheduler are connected by a 10~Gbps Ethernet link, and the scheduler and  the traffic sink are connected by  a 1~Gbps Ethernet link. Routing tables of all servers are set up statically so that all traffic is routed from the traffic generator to the 
traffic sink. 
The link sharing schedulers are configured at the egress of 
the 1~Gbps interface at the  scheduler node. The traffic generator uses FIFO 
scheduling.  
With this setup we can saturate the outgoing link at the scheduler without overloading its CPUs. 

The traffic generator sends UDP/IPv4 datagrams with a length of 1000 bytes, 
where destination port numbers are mapped to classes at the scheduler node.
Our graphs plot the transmission rates of traffic classes using jumping windows  with  length  $0.2$~s. 
The rate at which the traffic generator sends packets is such that 
it ensures that each active leaf class is permanently backlogged at the egress 
of the scheduler node.


\subsection{Experiment 1: Validation of HMM fairness}
\label{subsec:exp1}


In this experiment, which is a scaled version of an experiment in \cite{CBQ,floyd98}, we show that HLS quickly converges to an HMM fair allocation when the set of active flows changes. 
The class hierarchy of the experiment is as shown in Fig.~\ref{fig:hierarchy} for a 1~Gbps link, but with the following rate guarantees:

\begin{center}
\begin{tabular}{l | c c c c c c}
Class: 			&  	$A$ & $B$ & $A1$ & $A2$ & $B1$  & $B2$\\
\hline 
Rate guarantee:	& 	700	& 300 & 300	 & 400  & 100  & 200 
\end{tabular}
\end{center}


In the experiment, all leaf classes are initially active and transmit packets with a fixed packet size of 1000~B. At certain time intervals, one class becomes idle, in the following sequence:

\begin{center}
\begin{tabular}{l | c c c}
Interval (in seconds): &  $[4,7]$ & $[11,14]$ & $[19,22]$ \\
\hline 
Inactive class: 			& 		$B1$	& $A1$		& $A2$ 
\end{tabular}
\end{center}

\begin{figure}

	\subfigure[HMM fair allocation.]{
 	\includegraphics[width=0.46\columnwidth]{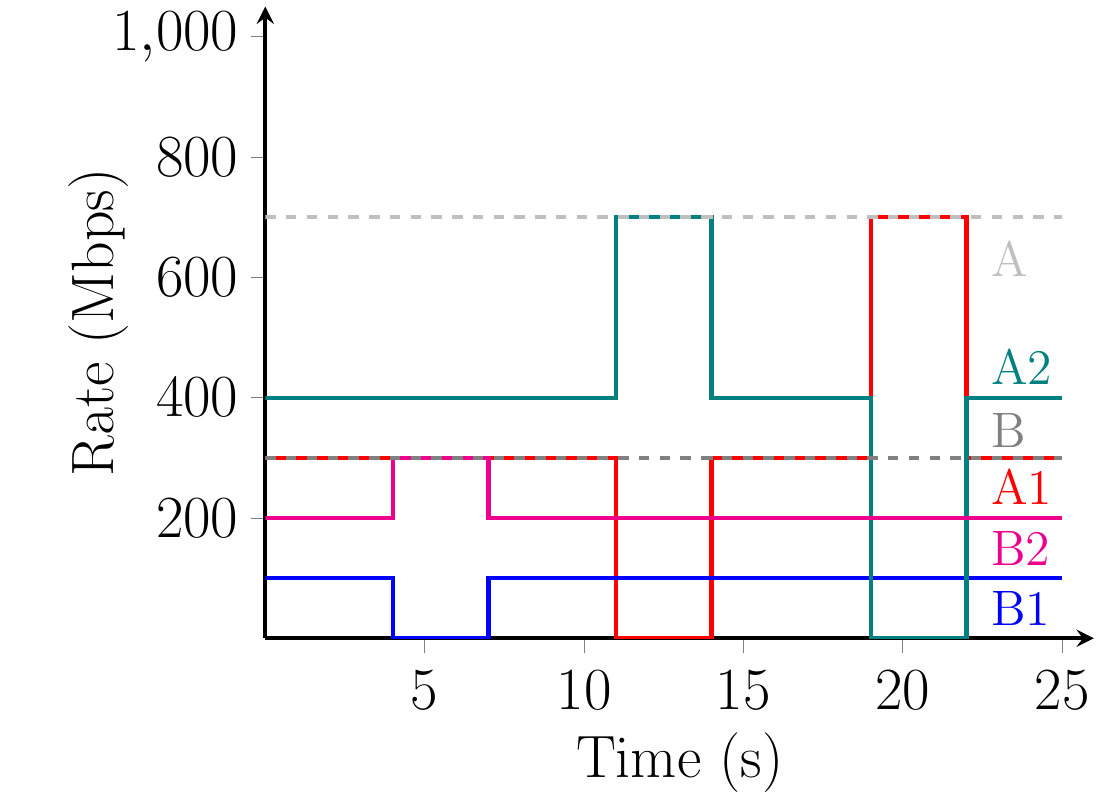}
	\label{fig:cbq-validate-fairness}
	}   
\subfigure[HLS Qdisc (Linux).]{
 	\includegraphics[width=0.46\columnwidth]{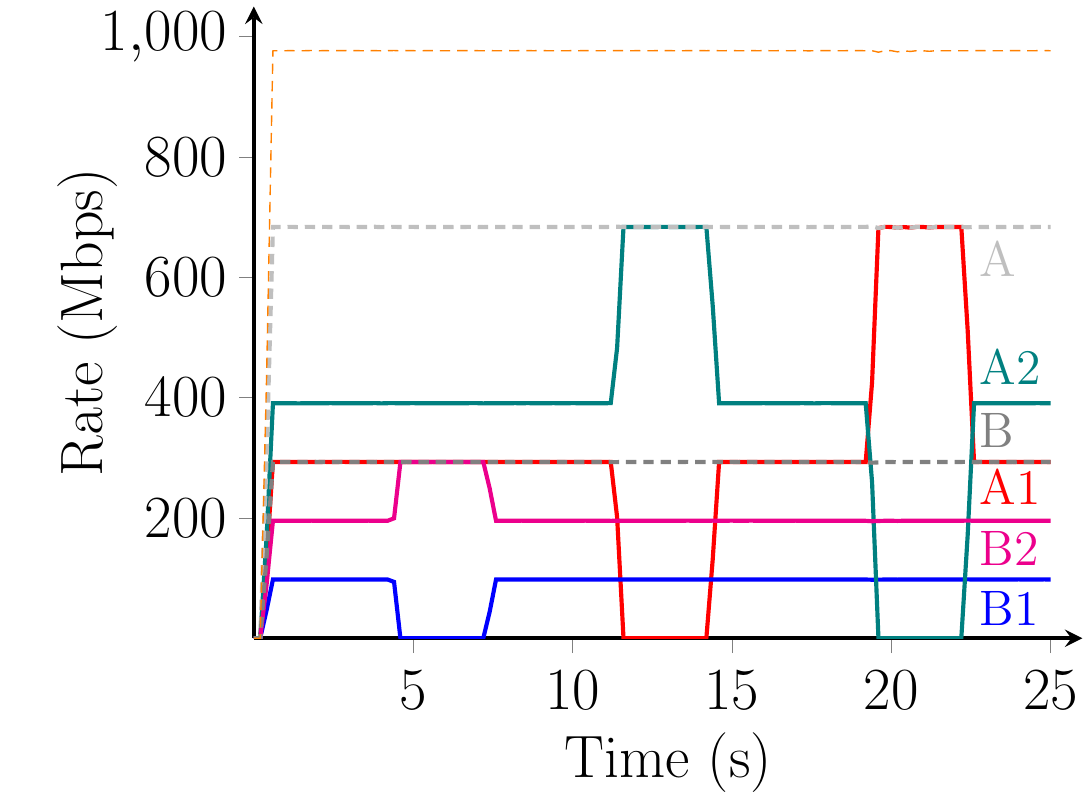}
\label{fig:hls-validate}
}

  \caption{Experiment 1: HMM fairness. }
 \label{fig:Exp1-validate}
%
%

\centering
	\subfigure[CBQ qdisc (Linux).]{
 	\includegraphics[width=0.46\columnwidth]{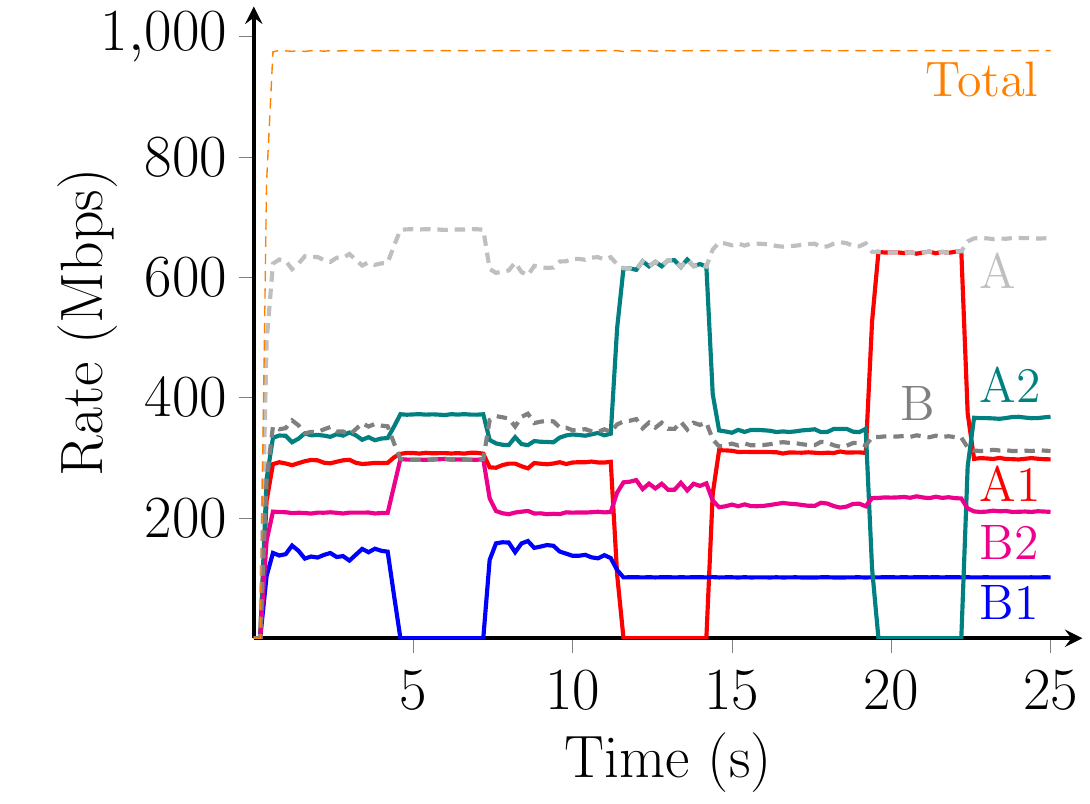}
	\label{fig:cbq-validate-emulab}
	}
%
	\subfigure[CBQ {\it ns2} simulation (formal link sharing).]{
 	\includegraphics[width=0.46\columnwidth]{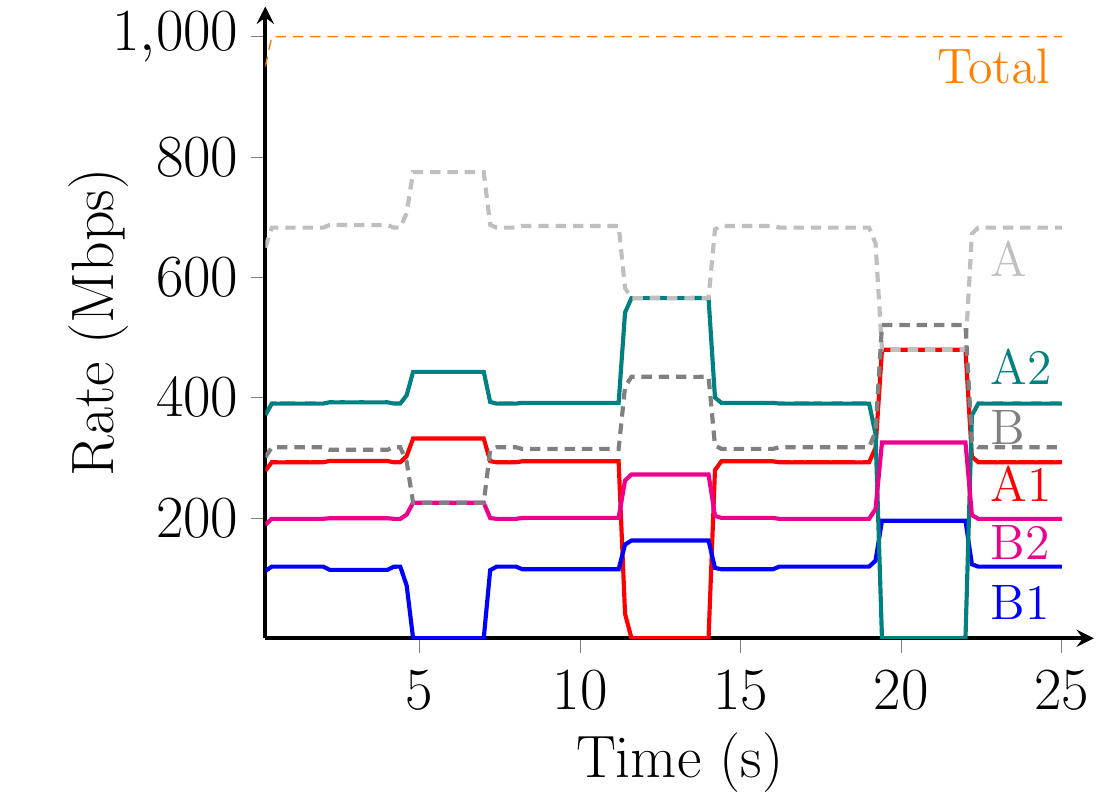}
	\label{fig:cbq-validate-ns2-formal}
	}   

\centering
%

	\subfigure[CBQ {\it ns2} simulation (top-level).]{
 	\includegraphics[width=0.46\columnwidth]{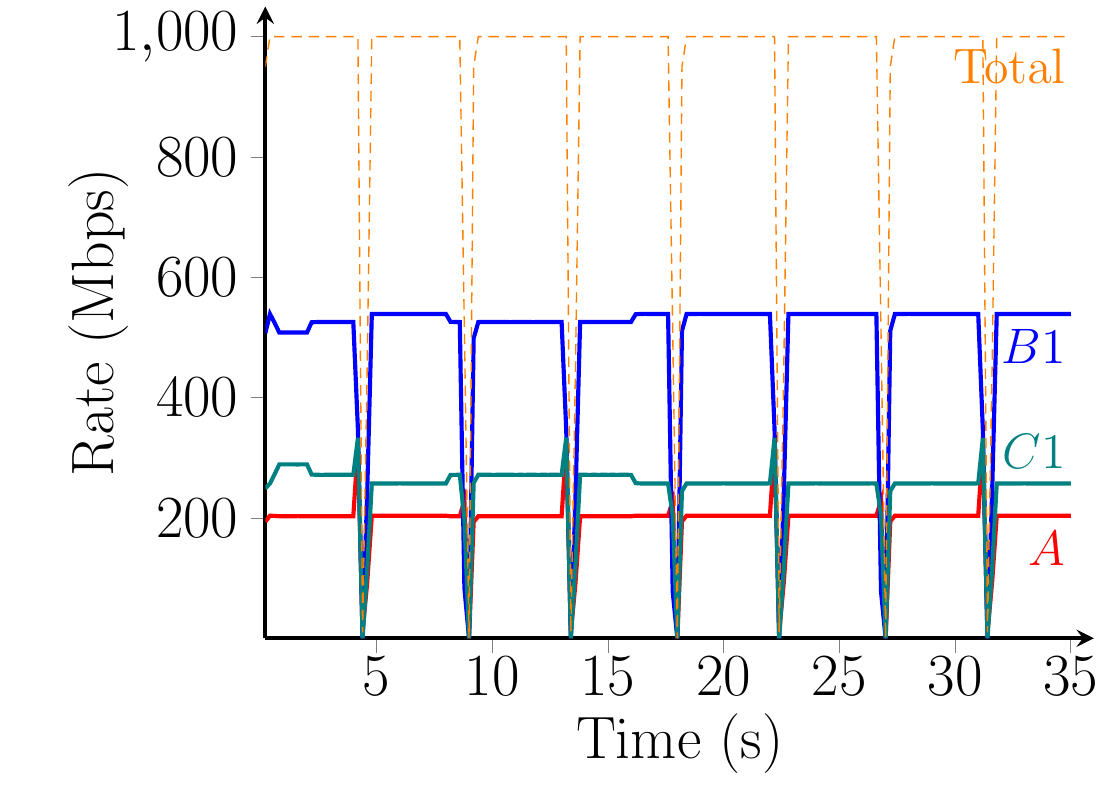}
	\label{fig:cbq-validate-ns2-toplevel}
	} 
	\subfigure[HTB qdisc (Linux).]{
 	\includegraphics[width=0.46\columnwidth]{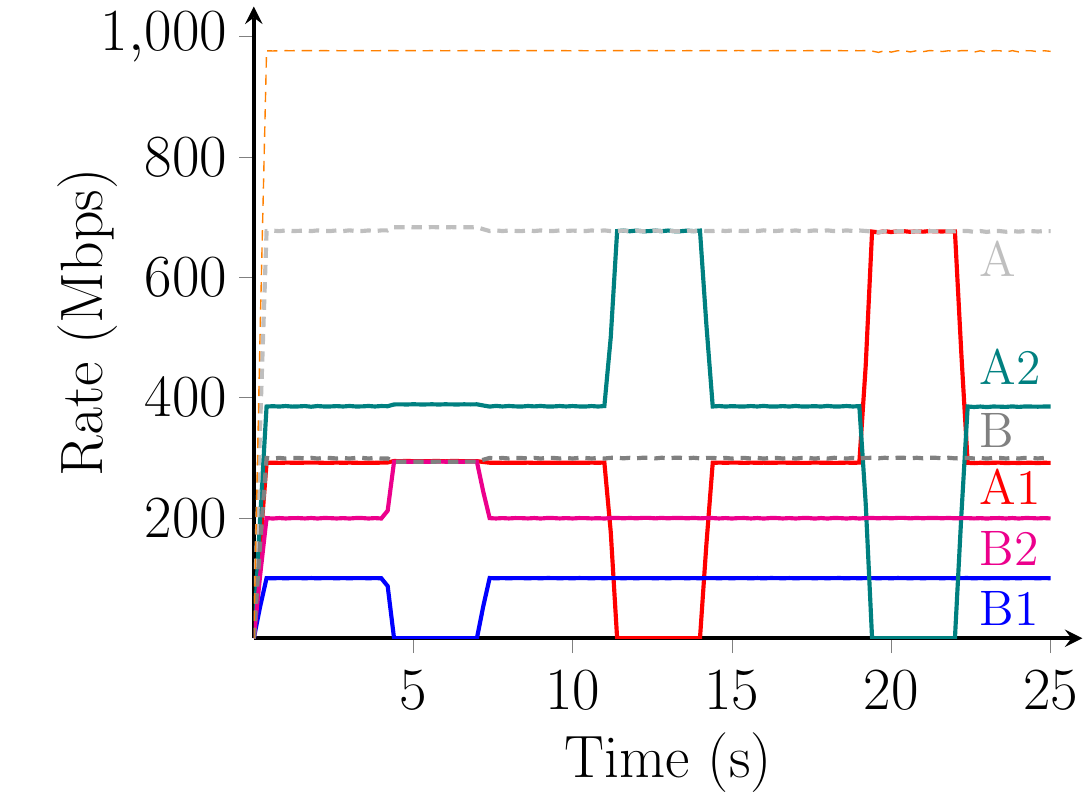}
 \label{fig:htb-validate}
}  
	
  \caption{Experiment 1: HMM fairness (CBQ and HTB).}
\end{figure}

\medskip
The throughput of the classes is shown in Fig.~\ref{fig:Exp1-validate}. The plot for each class is labeled.  
Dashed gray lines show the aggregate traffic of  the internal classes~$A$ and~$B$. The dashed line with label `Total' indicates the aggregate traffic from all classes, which is at or close to the link capacity of 1~Gbps.

Fig.~\ref{fig:cbq-validate-fairness} shows the HMM fair allocations from Theorem~\ref{theo:share-hier}. 
The measured rates for HLS in Fig.~\ref{fig:hls-validate} show that HLS  satisfies  
HMM fairness for all classes at all times. 
When all leaf classes are active, they each obtain their class guarantees. 
If one class drops out, the sibling class consumes the guarantee of 
its sibling. 

Figs.~\ref{fig:cbq-validate-emulab}--\ref{fig:cbq-validate-ns2-toplevel} show the 
results of this experiments for CBQ. Fig.~\ref{fig:cbq-validate-emulab} has the measurements of 
the CBQ Qdisc from Emulab. Figs.~\ref{fig:cbq-validate-ns2-formal} and~\ref{fig:cbq-validate-ns2-toplevel} show the results of {\it ns2} simulations. 
using the  variants  formal link sharing and top-level. 
The data shows that each variant satisfies the guarantees of leaf classes 
at all times. The guarantees of the internal classes $A$ and $B$ are satisfied when all leaf classes are sending traffic. When this is not the case, 
the throughput of one of the internal classes  may fall below its guarantee,   while the other class grabs the remaining bandwidth. Interestingly, 
the Linux qdisc implementation shows smaller violations of  
rate guarantees than the {\it ns2} simulations. 

Lastly, we present measurements of the HTB Qdisc in 
Fig.~\ref{fig:htb-validate}. For this experiment, HTB satisfies the allocation of 
HMM fairness.

\subsection{Experiment 2: Isolating class guarantees}
\label{subsec:exp2}


This experiment illustrate the need for isolating class guarantees, and the inability of the existing link sharing schedulers CBQ and HTB to realize isolation between classes. 
The experiment uses the class hierarchy from Fig.~\ref{fig:hierarchy} (in Sec.~\ref{sec:intro}). In addition to the guarantees shown in the figure, we vary the guarantees of classes $A1, A2, B1, B2$ 
to evaluate three scenarios, labeled as `L', `M', `H', which stands for low, medium, and high differences between the  guarantees. The guarantees 
in the scenarios (in Mbps) are as follows: 

\begin{center}
\begin{tabular}{l | c c c c | c c c}
Class: &  $A1$ & $A2$ & $B1$  & $B2$ &  	$A$ & $B$ & $C$ \\
\hline 
 Low `L' 			& 		140	& 160	& 140	& 160 & 	300	& 300 & 400 \\
 Medium  `M' 	& 		100	& 200	& 100	& 200 & 	300	& 300 & 400\\
 High `H' 		& 		60		& 240	& 60	& 240 & 	300	& 300 & 400
\end{tabular}
\end{center}

\smallskip
The `M' scenario corresponds to the  guarantees shown in Fig.~\ref{fig:hierarchy}. The guarantees of classes $A$, $B$, and $C$ are the same in all three scenarios, and are as shown in Fig.~\ref{fig:hierarchy}.

In the experiment, three leaf classes ($A1, B2, C$) generate traffic.  In the middle of the experiment, in the interval $[10,20]~s$, class $C$ pauses transmissions. 
Since leaf classes $A1$ and $B2$ compete with each other at the level 
of their respective parent classes $A$ and $B$, their allocation 
should be determined by the guarantees of the parents. 
If this is the case, $A1$ and $B2$ receive the same allocation in all three scenarios.

\begin{figure*}[!t]

\centering
	\subfigure[HLS Low (`L') scenario.]{
 	\includegraphics[width=0.31\columnwidth]{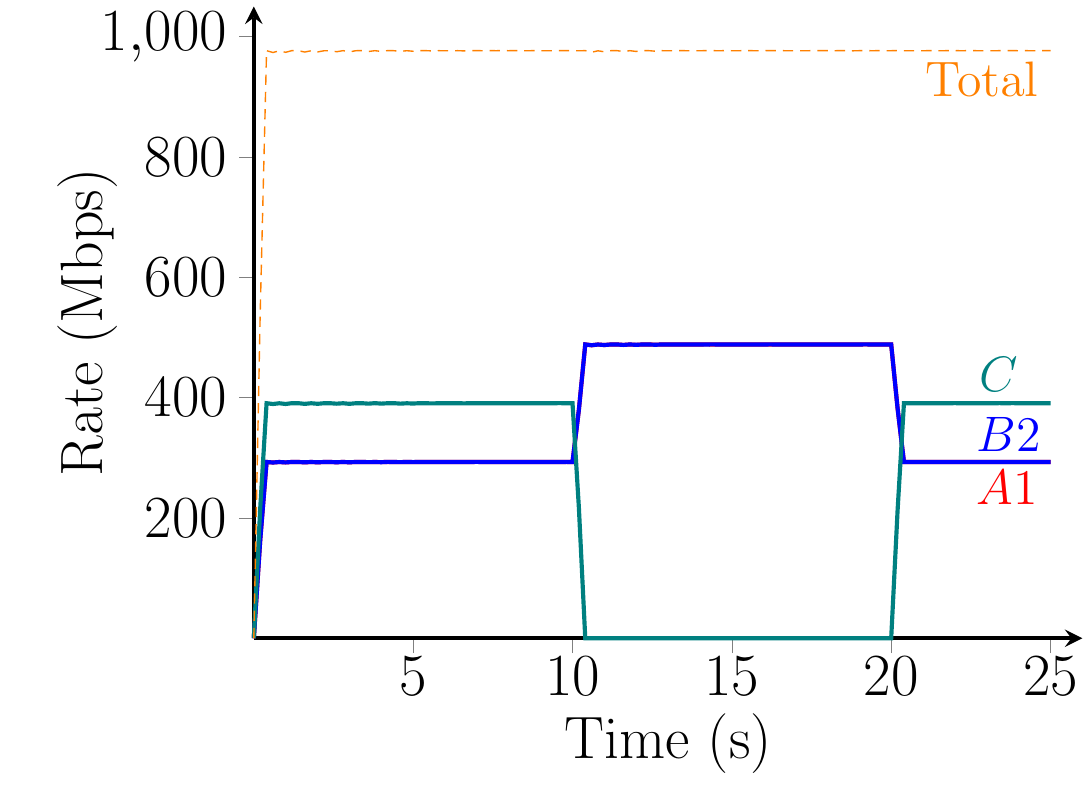}
	\label{fig:hls-unfair-low}
	}
%
	\subfigure[HLS Middle (`M') scenario.]{
 	\includegraphics[width=0.31\columnwidth]{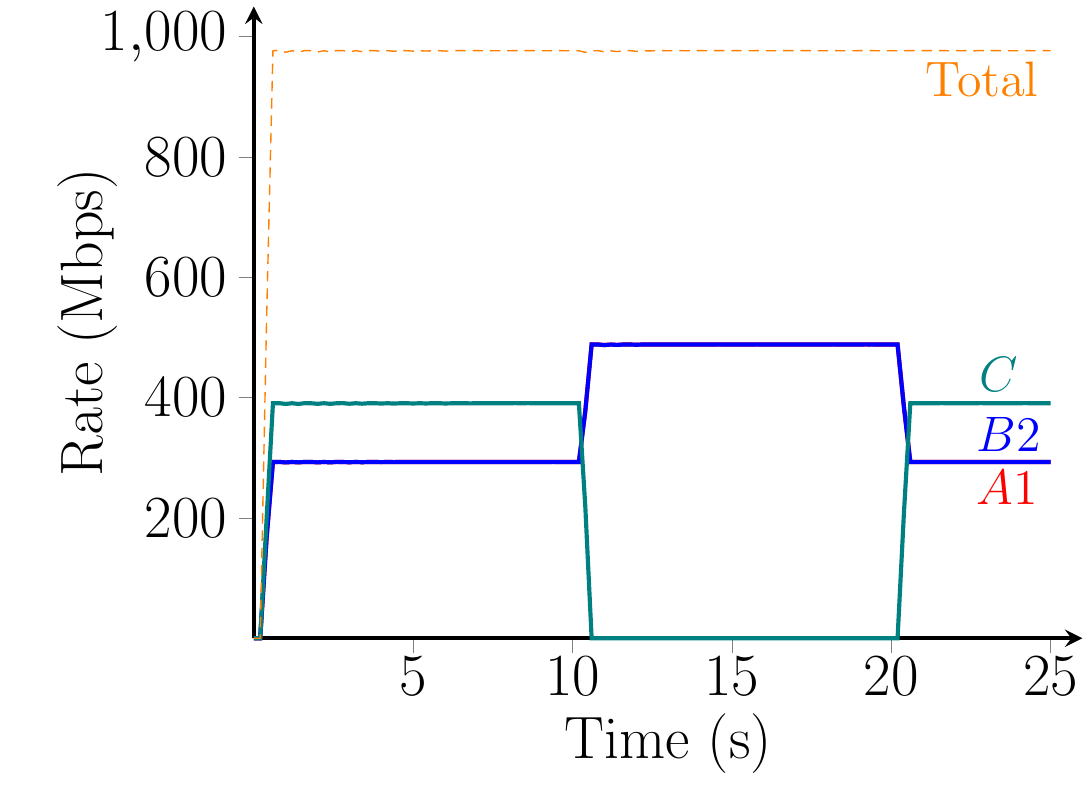}
	\label{fig:hls-unfair-middle}
	}   
%
\subfigure[HLS  High (`H') scenario.]{
 	\includegraphics[width=0.31\columnwidth]{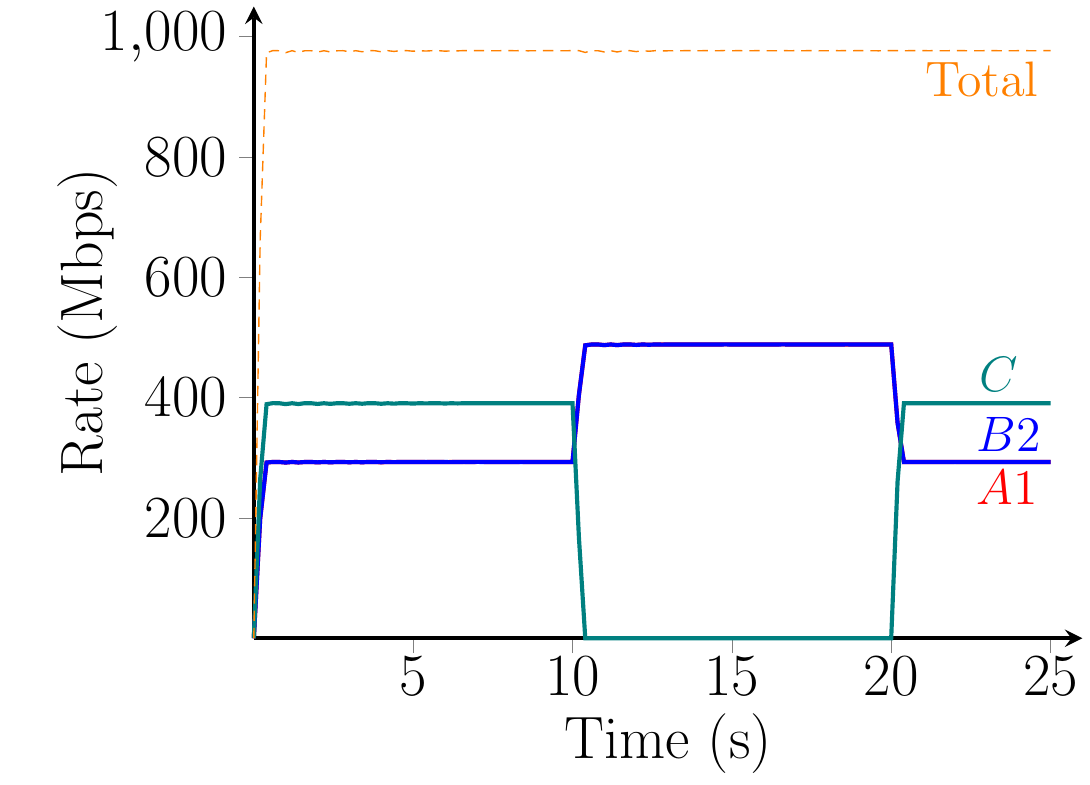}
	\label{fig:hls-unfair-high}
	}   

\centering
	\subfigure[HTB Low (`L') scenario.]{
 	\includegraphics[width=0.31\columnwidth]{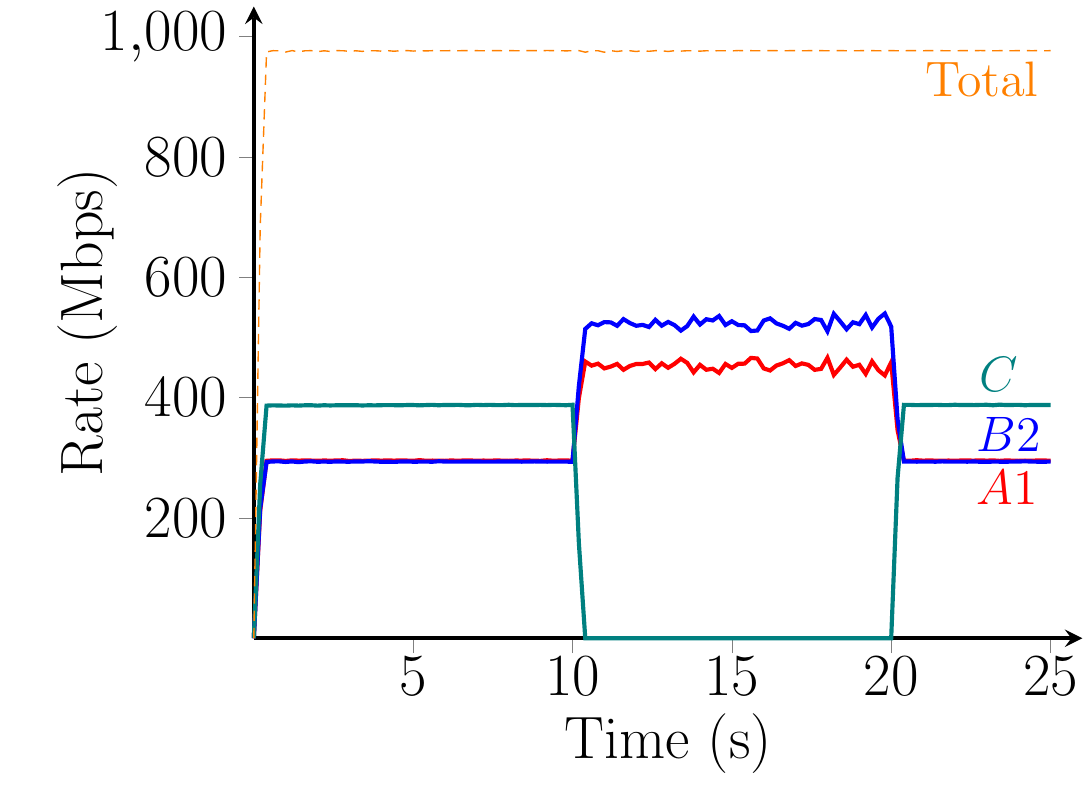}
	\label{fig:htb-unfair-low}
	}
\centering
	\subfigure[HTB Middle (`M') scenario.]{
 	\includegraphics[width=0.31\columnwidth]{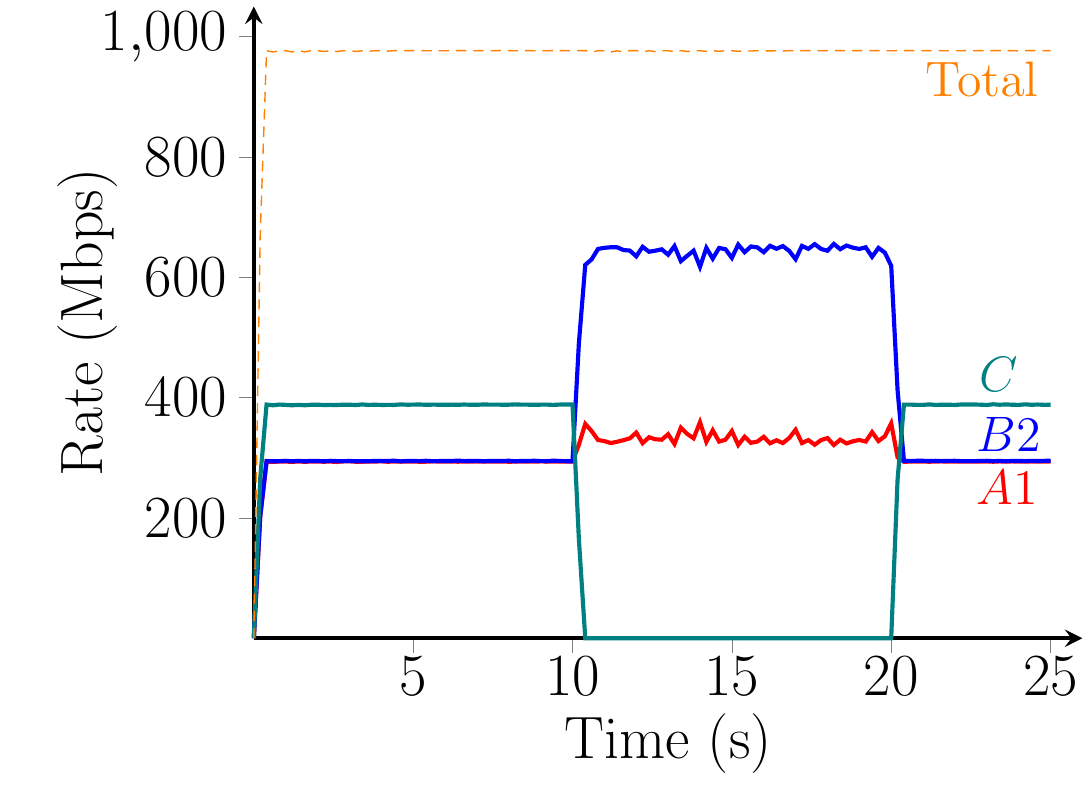}
	\label{fig:htb-unfair-middle}
	}   
%
\subfigure[HTB High (`H') scenario.]{
 	\includegraphics[width=0.31\columnwidth]{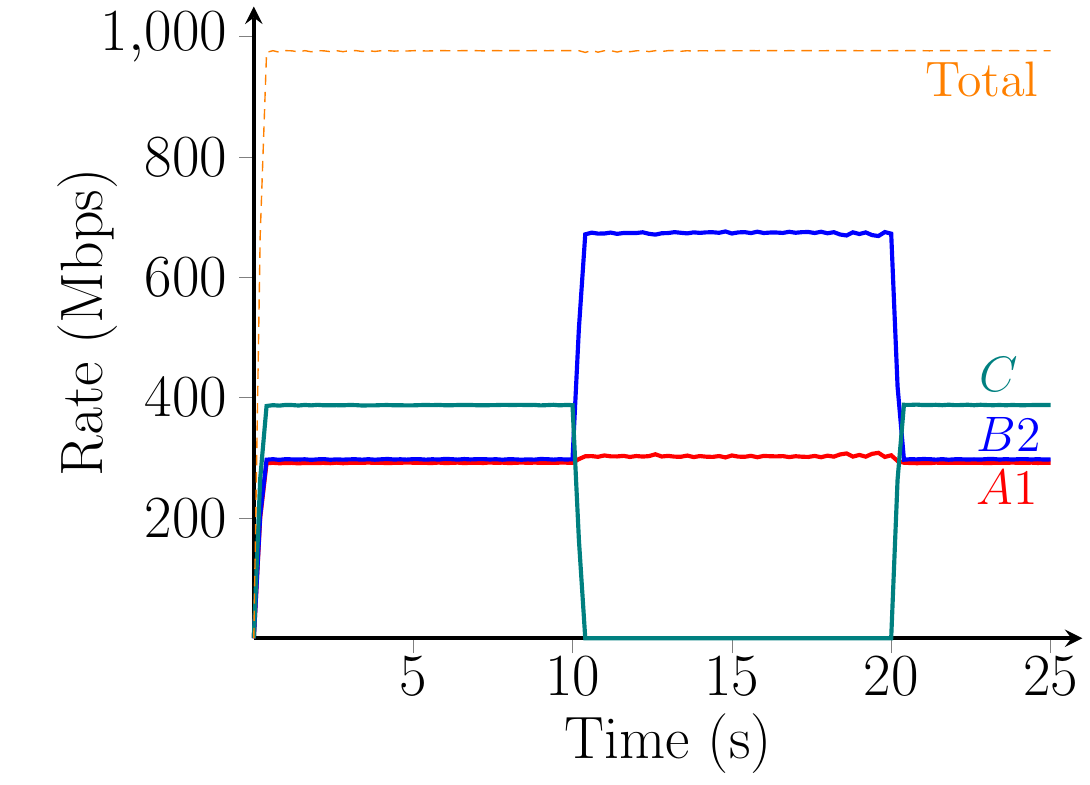}
	\label{fig:htb-unfair-high}
	}   

 \centering
	\subfigure[CBQ  Low (`L') scenario.]{
 	\includegraphics[width=0.31\columnwidth]{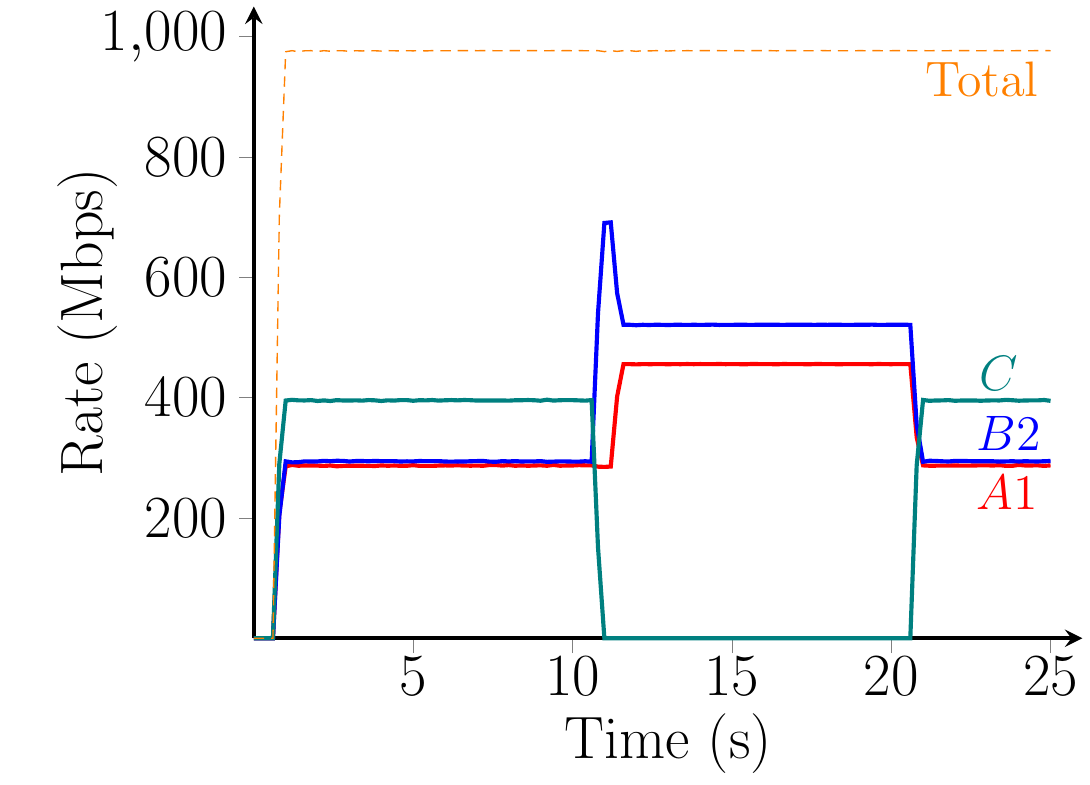}
	\label{fig:cbq-unfair-low}
	}
%
	\subfigure[CBQ Medium (`M') scenario.]{
 	\includegraphics[width=0.31\columnwidth]{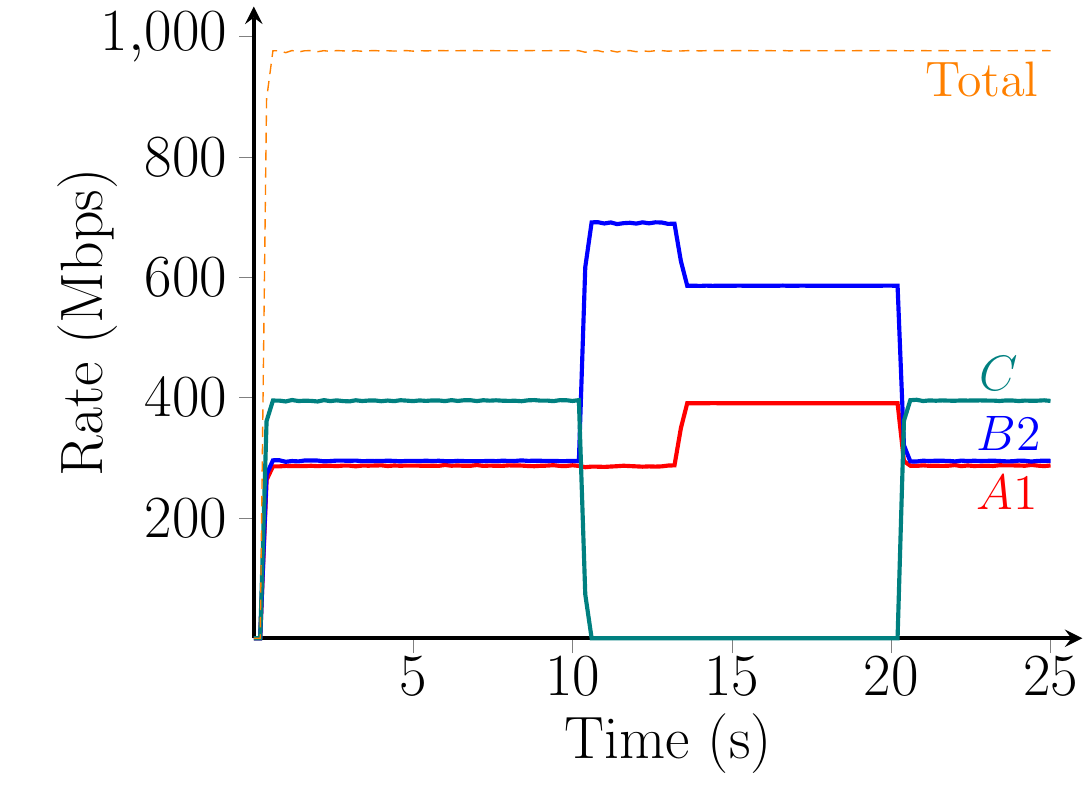}
	\label{fig:cbq-unfair-middle}
	}   
%
\subfigure[CBQ High (`H') scenario.]{
 	\includegraphics[width=0.31\columnwidth]{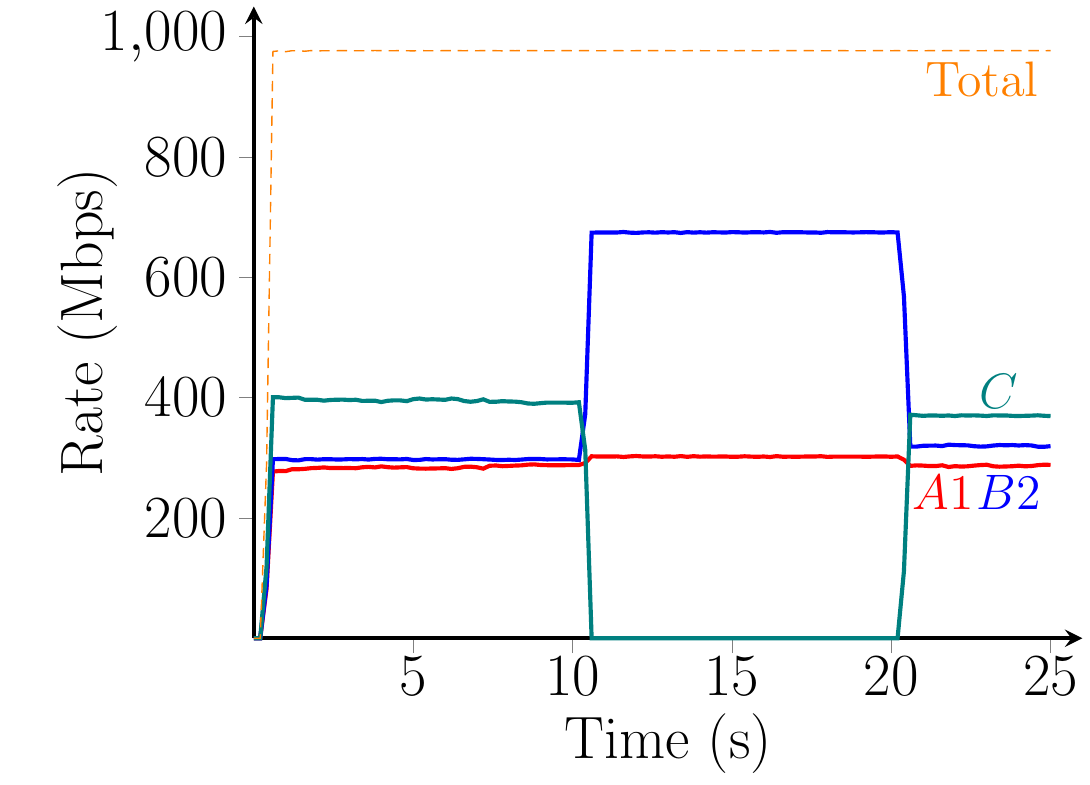}
	\label{fig:cbq-unfair-high}
	}

  \caption{Experiment 2:  {\rm HLS is HMM fair. With HTB and CBQ,  
  class $B$ can increase its allocation by bundling its guarantee and traffic 
  in one child class.}}
 \label{fig:Exp2-unfair}
\end{figure*}

Fig.~\ref{fig:hls-unfair-low}--\ref{fig:hls-unfair-high} show the measured throughput of the HLS Qdisc. In all three scenarios, the throughput of active classes corresponds to the HMM fair allocation. When all three classes are active (in $[0,10]~s$ and $[20,25]~s$) they split the allocation in 
the  ratio $3:3:4$, according to the guarantees of classes~$A, B, C$. 
When class~$C$ drops out, $A1$ and $B2$ split the capacity evenly, since 
$A$ and $B$ have the same guarantee. 

The second row of graphs in Fig.~\ref{fig:Exp2-unfair} presents measurements 
of the HTB Qdisc. 
First note that, for all scenarios the minimum rate guarantees of  internal and leaf classes are maintained at all times. 
When all three classes are active, they have the same allocation as HLS. 
However, when class $C$ drops out, 
classes $A1$ and $B2$ do not split the freed up link capacity evenly. 
Instead, the throughput 
appears to depend on the guarantees  of the active leaf classes $A1$ and $B2$. As seen in Figs.~\ref{fig:htb-unfair-middle} and~\ref{fig:htb-unfair-high}, by increasing the guarantee of class $B2$ and decreasing that of $A1$, the allocation becomes more lopsided.

The throughput in the scenarios under CBQ, depicted in the last row of graphs  
in Fig.~\ref{fig:Exp2-unfair}, shows that CBQ allocates rates in a similar fashion as HTB. That is, 
sharing of bandwidth at the level of internal classes does not respect the guarantees of 
the internal classes. As with HTB, the allocation  appears to be again determined by the guarantees of the leaf classes. 

The observed link sharing of CBQ and HTB indicates a lack of isolation between classes in the hierarchy.  
Here, class~$B$ can manipulate its allocation by bundling its traffic in a single descendant class, at the cost of class $A$.  The HMM fair allocation of HLS does not allow this to happen.  

\newpage 
\begin{center}
\begin{figure}[t]
\centerline{
\begin{tikzpicture}[-triangle 45,sibling distance=10pt, 
  every node/.style = {shape=circle, draw, align=center},level 1/.style={sibling distance=25mm,align=center,},level 2/.style={sibling distance=15mm}]
  \node {root \\ $1000$}
    child { node[fill=gray!20, ] {$A$ \\ 200}} 
    child { node[sibling distance=1em] {$B$ \\ 250} 
        child { node [fill=gray!20, sibling distance=5em]{$B1$ \\ 240} }
        child { node[sibling distance=5em] {$B2$ \\ 10} } 
		child [missing] } 
 child { node {$C$ \\ 250} 
		child [missing]
        child { node [fill=gray!20, sibling distance=5em]{$C1$ \\ 50} }
        child { node[sibling distance=5em] {$C2$ \\ 200} } }
           child { node {$D$ \\ 300}} ; 
\end{tikzpicture}
}
\caption{Class hierarchy in Experiment 3.}
\label{fig:exp3-hierarchy}
\end{figure}
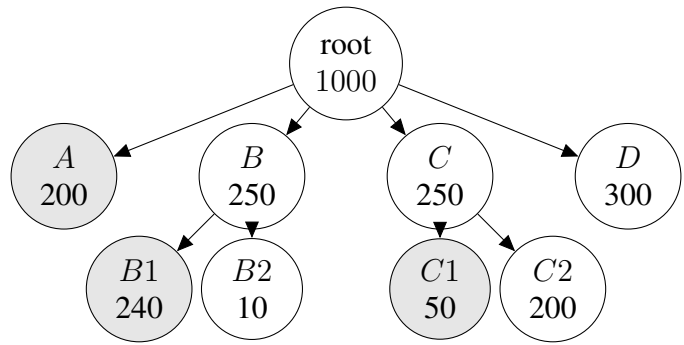
\end{center}

\begin{figure*}[ht]

  \centering
	\subfigure[CBQ qdisc.]{
 	\includegraphics[width=0.31\columnwidth]{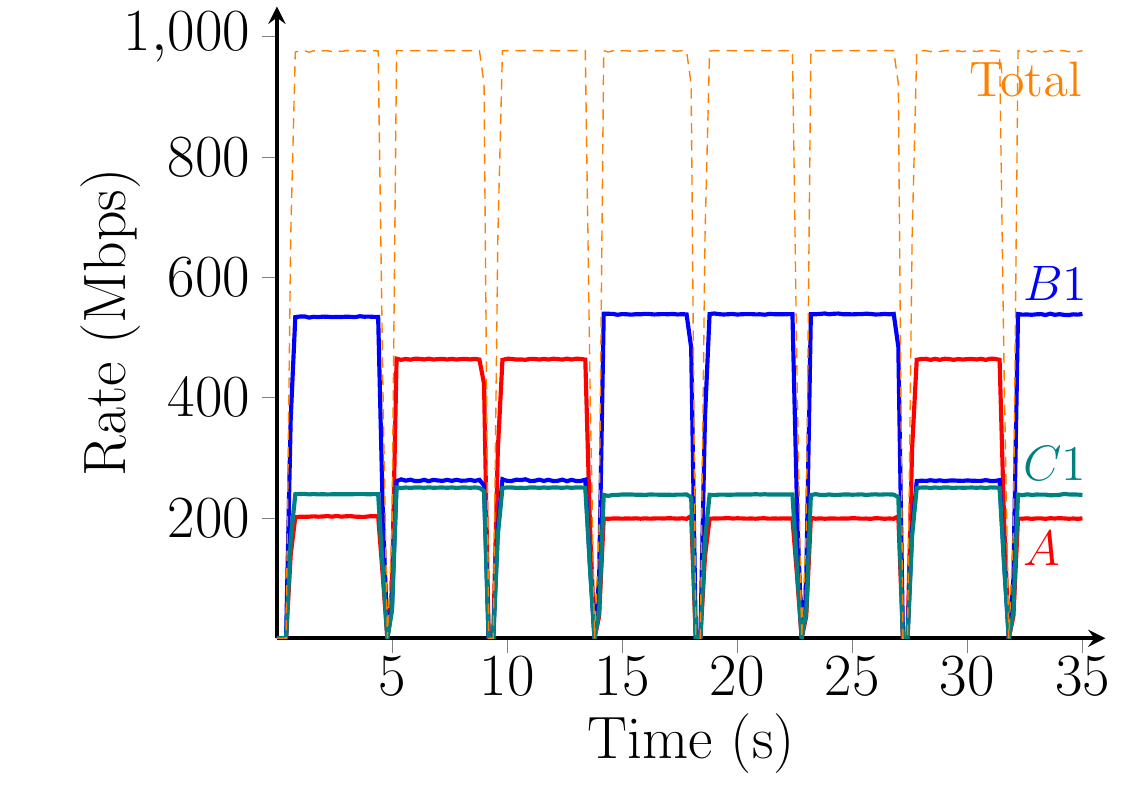}
	\label{fig:cbq-bistabililty}
	}
%
	\subfigure[HTB qdisc.]{
 	\includegraphics[width=0.31\columnwidth]{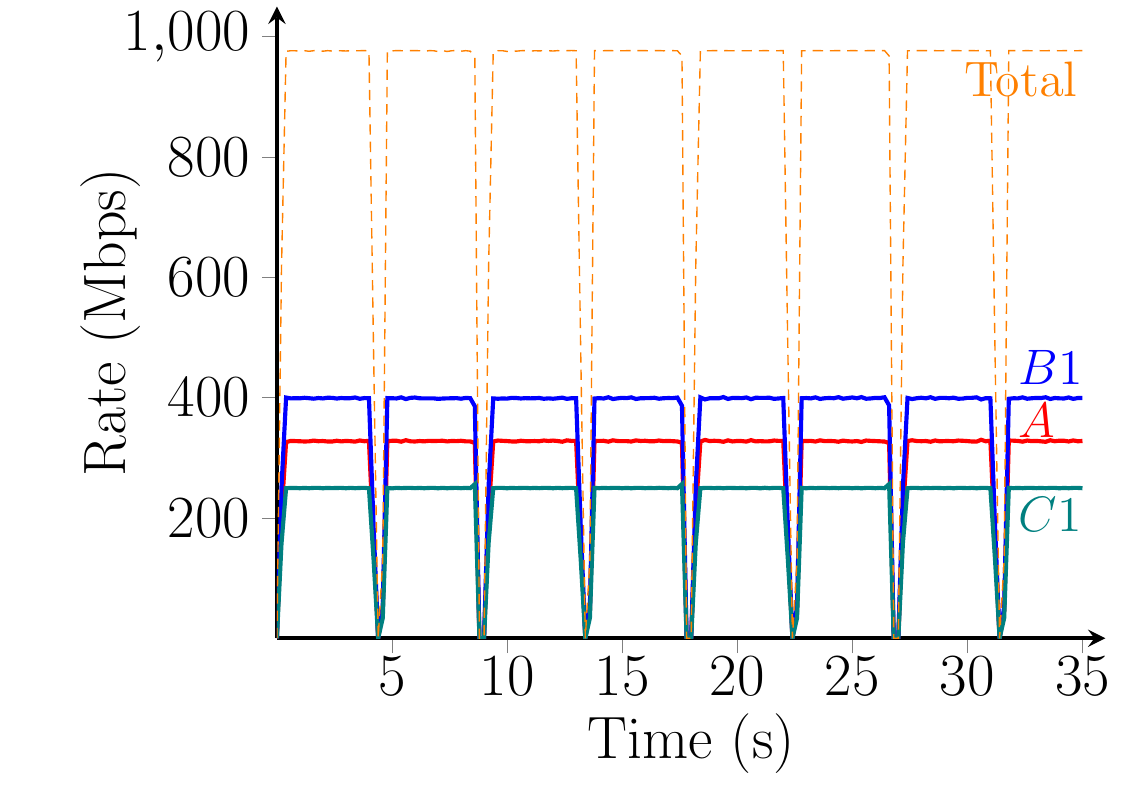}
	\label{fig:htb-bistabililty}
	}   
%
\subfigure[HLS qdisc.]{
 	\includegraphics[width=0.31\columnwidth]{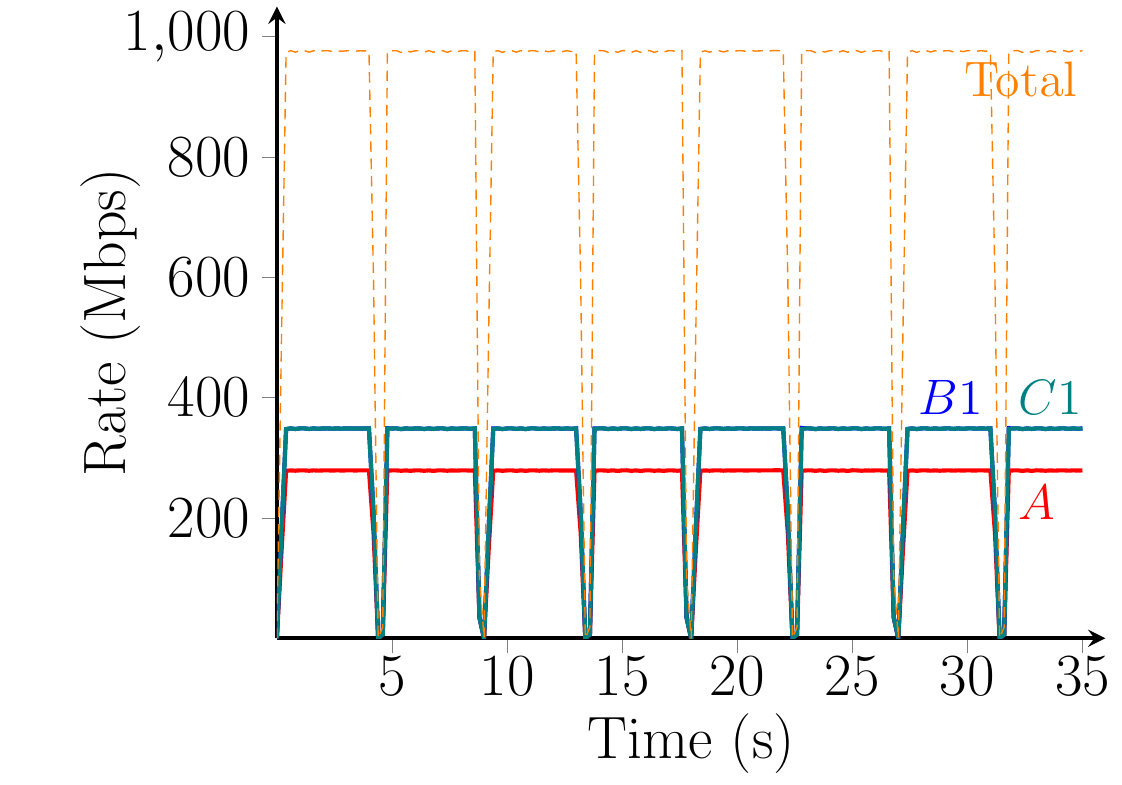}
	\label{fig:hls-bistabililty}
	}   
	
\caption{Experiment 3. Non-determinism.  {\rm With CBQ, stopping and resuming transmissions results in different rate allocations. }}
 \label{fig:Exp3-bistability}
\end{figure*}

\subsection{Experiment 3: Non-determinism}
\label{subsec:exp3}

This experiment shows that,  for a given input, the CBQ qdisc 
does not uniquely determine a rate allocation. 
We consider the class hierarchy  in Fig.~\ref{fig:exp3-hierarchy}, where active classes  ($A, B1, C1$) are indicated by a shaded background. 
In the experiment, classes $A, B1, C1$ repeatedly transmit for $4$~s and then pause for $0.5$~s. This creates an on-off pattern which is repeated several times. 
The transmission pauses are sufficiently long so that the backlog at the scheduler is fully cleared.

Fig.~\ref{fig:cbq-bistabililty} presents the allocations of the CBQ qdisc. 
We note again that the minimum guarantees of leaf and internal classes are never violated. 
The main observation is that when classes resume transmissions after a pause, the link sharing settles to two different allocations. 
In both outcomes, class $C1$ is always kept at $250$~Mbps, the minimum 
guarantee of its parent class. 
In one of the outcomes, the allocation of class~$A$ is close to its minimum 
guarantee, while class~$B1$ consumes all of the remaining capacity. 
In the other outcome, $B1$ receives the minimum guarantee given to its parent, 
and $A$ consumes all other bandwidth. No equitable sharing between classes occurs in either outcome.

We have observed non-deterministic rate allocations in other scenarios, including situations where rate allocation change spontaneously  without pausing traffic sources (e.g., see Fig.~\ref{fig:cbq-unfair-middle} at $\sim \!\!13$ s). 
This non-determinism indicates that the allocation of the 
CBQ qdisc is under-determined, that is, multiple allocations satisfy its link sharing guidelines. We have not observed non-deterministic allocations of CBQ in {\it ns2} simulations of this experiment. 

In Fig.~\ref{fig:htb-bistabililty} and~\ref{fig:hls-bistabililty} we show the outcomes for this experiment  for HTB and HLS, respectively. Both create  unique allocations. 
The results for HTB in Fig.~\ref{fig:htb-bistabililty} show link sharing 
between classes $A$ and $B1$, while class~$C1$, similar to CBQ, is kept at the guarantee of its parent class.  The allocation of HLS in Fig.~\ref{fig:hls-bistabililty} is HMM fair in each `on' phase. Classes $A$, $B1$, and $C1$  
share the link capacity in the ratio 4:5:5, according to the ratio of the guarantees of classes $A$, $B$, and $C$.

\subsection{Experiment 4: Overhead}

We next measure the processing overhead of HLS and compare it to that of CBQ and HTB. Since all schedulers are implemented as  Linux Qdiscs, they share  the performance limitations of the Qdisc framework, in particular, the single Qdisc lock. In order to move the bottleneck of the experimental 
setup to per-packet processing, we replace the 1~Gbps link in Fig.~\ref{fig:exp_topo} between the scheduler and the traffic sink by a 10~Gbps link, and we send small packets. 
We verified that the servers that run the traffic generator and traffic sink are not bottlenecks in 
the experiment. 

The experiment uses the  NetPerf TCP-RR \cite{netperf}, similar to an experiment in \cite[Fig.~6]{Carousel}. TCP senders and receivers send  1-byte packets 
in each direction in a ping-pong fashion. One round of the ping-pong is 
called a transaction. 
The traffic from each TCP sender is mapped to a separate leaf class 
at the scheduler node (in Fig.~\ref{fig:exp_topo}). 
The performance metric is the total number of completed transactions per second. 

We consider two class hierarchies: a perfect binary tree and a flat hierarchy. 
In a binary tree, with $N$~leaf classes, the total number of classes, including the root,  is 
$2N-1$. 
In the flat hierarchy, the root class has~$N$ children which are all leaf classes. 
For HLS we set the weight of every class to one in both scenarios. 
For HTB and CBQ, the rate guarantees are divided evenly between the leaf classes.
In addition to the hierarchical scheduling algorithms HLS, HTB, and CBQ, we also include measurements with   FIFO scheduling. Since FIFO is a classless scheduler, outcomes are not sensitive to the class hierarchy.

Fig.~\ref{fig:overhead-binary} shows the  transactions per second as a function of the number of leaf classes for the binary tree hierarchy. 
All schedulers show roughly the same performance.  
The number of transactions initially increases linearly with the leaf classes 
and saturates at around 300K transactions  (100K $= 10^5$). 
Since HTB limits the number of levels in the class hierarchy, the binary tree 
hierarchy cannot be increased beyond 128 leaf classes.
The fact that FIFO sometimes has worse results than the hierarchical 
schedulers  indicates the degree of randomness in 
experiments that involve a large number of TCP  flows.

Fig.~\ref{fig:overhead-flat} depicts the results for the flat hierarchy. Here, the number of transactions initially increases and plateau at around~$300K$ transactions, similar to binary tree scenario.
After around 256 leaf classes, however, the performance of the classful schedulers declines. A comparison with FIFO points to a  performance bottleneck that arises during  packet classification, which impacts all classful schedulers in the same way.
The classification compares the packet destination port to the port associated to each leaf class, and the number of comparisons grows linearly with the number of leaf classes.

\begin{figure}[t]

\centering
	\subfigure[Binary tree class hierarchy.]{
 	\includegraphics[width=0.48\columnwidth]{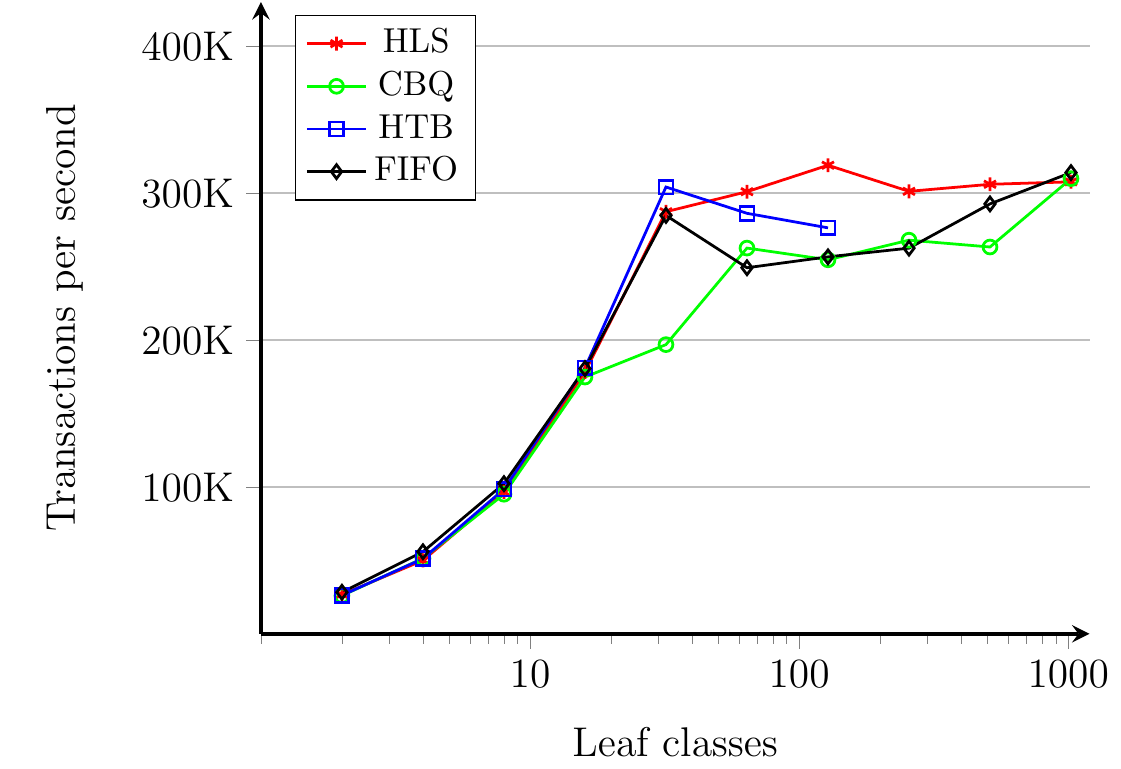}
	\label{fig:overhead-binary}
	}   
%
	\subfigure[Flat class hierarchy.]{
 	\includegraphics[width=0.48\columnwidth]{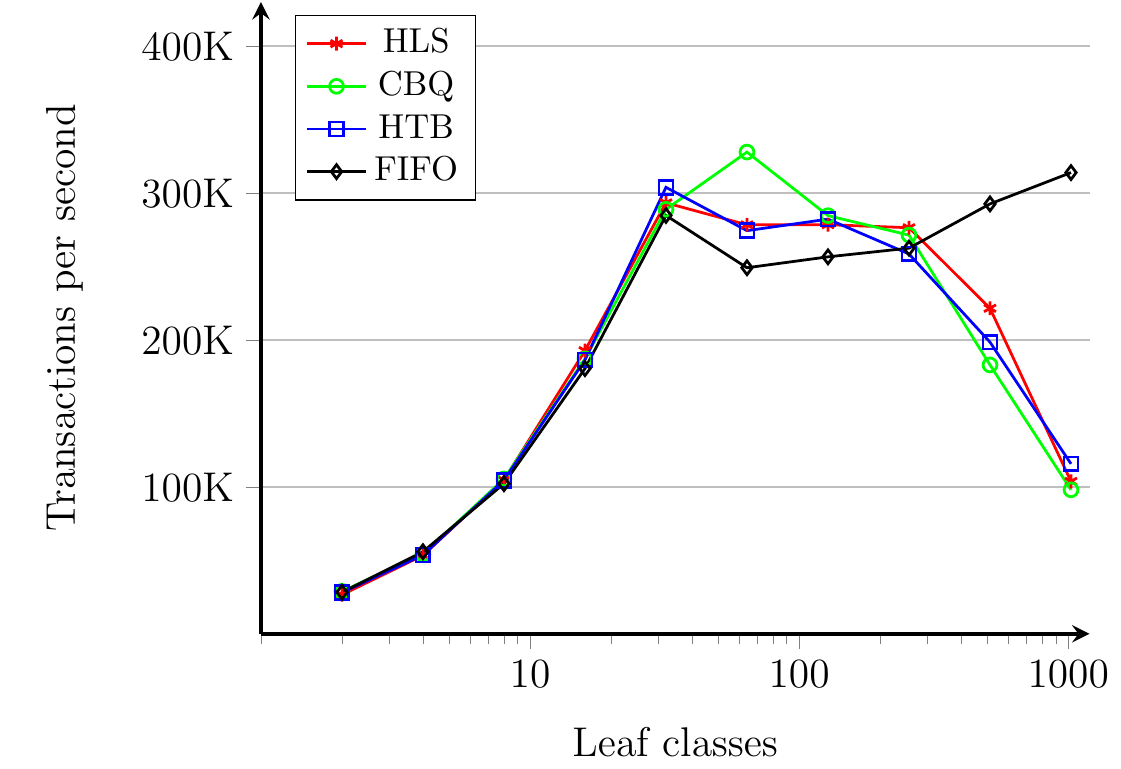}
	\label{fig:overhead-flat}
	}   

\caption{Experiment 4. Overhead.
}
\label{fig:Exp4-overhead}


\end{figure}
The experiment leads us to conclude that the HLS Qdisc does not 
incur a performance penalty, compared to HTB and CBQ. In fact, 
since the schedulers perform similarly to FIFO, none of the 
hierarchical schedulers presents a  bottleneck. 

We emphasize that the outcome of this  experiment is   
sensitive to the configuration of filters that map 
packets to traffic classes. In Fig.~\ref{fig:Exp4-overhead}, 
the mapping of packets to classes is done progressively. Each internal node in 
the hierarchy has (two) filter expressions for mapping 
traffic to its child classes. Alternatively, the mapping 
can be performed at the root Qdisc for all leaf classes. If this is done, 
the results show a precipitous drop 
of completed transactions when the number of leaf classes exceeds 100.

\section{Related Work}
\label{sec:related}

There are several reasons for the recent surge of interest in shaping and scheduling 
algorithms.
First, the increased flexibility of recent programmable packet switches 
has  enabled customization of scheduling algorithms to application requirements 
\cite{P4,PIFO,Siva20}. Second, the Ultra-Reliable Low-Latency Communication service category in 5G networks, which  guarantees latencies below 1~ms has led to 
standardization efforts by the IEEE (for Layer-2) and by the IETF (for Layer-3) 
for compatible protocol frameworks \cite{tsn} and traffic control algorithms 
\cite{Leboudec18}. Third, an increased demand for fine-grain control of traffic 
in data centers has created a need for advanced packet scheduling methods 
at servers~\cite{B4,datacenter-tc}. 

These efforts benefit from an intense period of 
research in the 1990s that created many of the 
scheduling and shaping  algorithms in use today 
\cite{DRR,WFQ-keshav,ZhangProc,CBQ}. 
Recent research on packet scheduling has put emphasis on 
generality, e.g., PIFO~\cite{PIFO}, UPS~\cite{UPS}, and 
efficient implementations, for example, Carousel~\cite{Carousel}, Eiffel~\cite{Eiffel}, Loom~\cite{Loom}, CQ~\cite{sharma20}. 

Most relevant to our paper is the claim in \cite{PIFO} of 
realizing HPFQ by a hierarchy of PIFO queues. 
However, the claim holds only when packets have a fixed size. 
For variable-sized packets and classes with different weights the arrival of a packet may require changing the relative order of packets in the PIFO buffers. By design, PIFO does not support reshuffling buffered packets.   

BwE \cite{BwE} performs a centralized rate allocation 
for hierarchically organized inter-data center traffic, 
which computes end-to-end max-min fair rate allocations for a 
network setting, which are enforced by HTB ceiling rates. 
Interestingly, in \cite[Sec.9]{BwE} it is argued that fair queueing  
is not suitable since `weights are insufficient for 
delivering user guarantees.'  By showing the equivalence 
of rate guarantees and weights  in Sec.~\ref{sec:hierarchy}, our paper shows that the above statement requires a correction. 

We have not included HFSC \cite{HFSC2,HFSC-techrep} in this paper, 
even though it is another hierarchical scheduler available in Linux. 
HFSC is hybrid scheduler that deterministically guarantees service 
curves and shares   excess bandwidth with fairness objectives. 
As pointed out in \cite{HFSC2} it is, in general, not possible to simultaneously guarantee  the service curves of HFSC and its fairness criteria. 
Since HFSC resolves conflicting  guarantees by giving priorities to service curves, the role of link sharing is limited. 
We also note that HFSC realizes so-called `lower  service curves' 
\cite{Book-LeBoudec}. Rate guarantees  of these service curves share 
a drawback with the VirtualClock scheduler~\cite{VC}, where 
a class that is served above its guaranteed rate for some time, 
may be later served at a rate below its guarantee.
Differently, fair scheduling algorithms  realize 
`strict service curves'~\cite{Book-LeBoudec}, which ensure 
rate guarantees for every time interval where a class is backlogged. 


\section{Conclusions}
\label{sec:concl}

We presented a  round-robin scheduler for hierarchical link sharing 
 that ensures rate guarantees and isolation between classes, and which is suitable for supporting high line rates. 
The presented HLS  scheduler resolves shortcomings of  deployed 
hierarchical link sharing algorithms when distributing excess capacity to 
traffic classes. The link sharing in HLS is strategy-proof in that a class that needs more 
bandwidth cannot increase its allocation by increasing its transmissions 
or misrepresenting the class hierarchy of its descendants.  
We have shown that the implementation of HLS does not create a 
performance bottleneck. 
In future work, we will extend the Qdisc implementation of HLS to also 
enforce maximum (ceiling) rates.


\newpage 
\appendix 
\section{Supplemental information on CBQ and HTB implementations}

Since available documentation for Linux schedulers often focuses on 
configuration issues, we provide conceptual descriptions of HTB and CBQ. 
First, we briefly discuss two building blocks: 

\smallskip
\noindent 
\underline{DRR scheduler.}
The scheduler maintains one FIFO queue for each class. Queues with a backlog are visited in a round-robin fashion. Each class has a  counter 
for the number of bytes that the class can transmit. When 
a packet is transmitted, the counter is decremented by the packet size. In a visit by the round robin, a class 
can transmit as long as it has packets and the counter does not become negative. At the start of a new round of the round robin, 
the counters are incremented by the quantum which is set proportional to the class guarantee. 

\smallskip
\noindent 
\underline{Token Bucket.} A token bucket has two parameters, a bucket size $b$ and a rate $r$. 
The bucket content is a counter, which is initialized to $b$. When a packet of size $L$ is transmitted, $L$ tokens 
are removed from the 
bucket. If there are less than~$L$ tokens in the bucket, the packet has to wait until $L$ tokens 
are available. Tokens are added at a constant rate of $r$, but the bucket content may not exceed $b$.

\subsection*{CBQ}
The following description of CBQ is based on the Linux qdisc code \cite{CBQ-code} 
and  a reference implementation of CBQ in the {\it ns2} simulator~\cite{ns2}. 
The implementations sometimes deviate from the 
description in \cite[Appendix A]{CBQ}. 

\begin{figure}[!h]
    \centering
    \includegraphics[width=0.6\columnwidth]{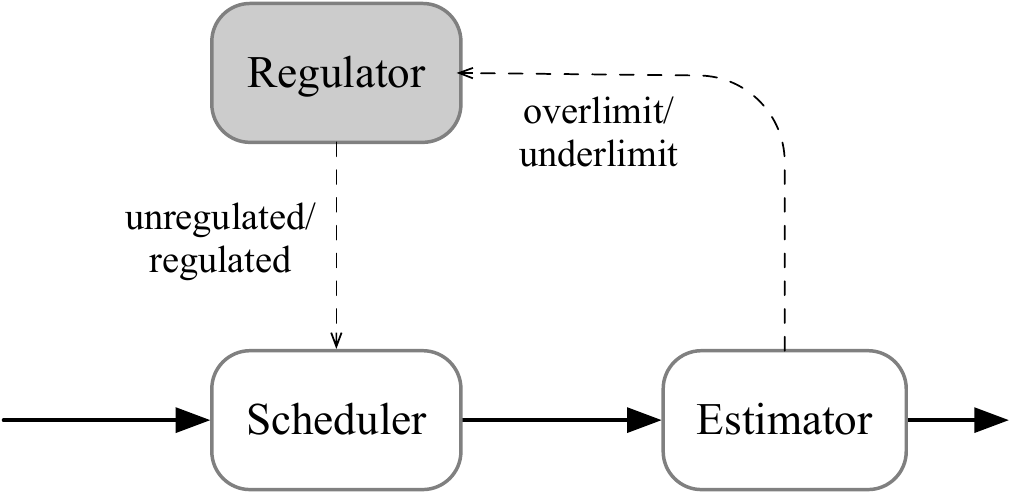}
\caption{Components of CBQ Scheduler.}
    \label{fig:cbq}
\end{figure}

The main components of a CBQ scheduler, shown in Fig.~\ref{fig:exp_topo} are (1) an {\it estimator}, (2) a {\it regulator}, and (3) a {\it scheduler}. 
The estimator  computes the transmission rate of each class using an exponential moving average 
and determines if a class is serviced below or above its guaranteed rate.  The class is {\it underlimit} in the former case and {\it overlimit} in the latter. 
The {\it regulator} uses the output of the  estimator to tag classes as {\it regulated} or {\it unregulated}. Regulated classes are not considered for transmission. This is different from \cite[Appendix A]{CBQ}, which describes CBQ as having two schedulers, one for regulated and for unregulated classes. The scheduler runs Weighted Round Robin, which is  essentially identical to DRR, to transmit packets from unregulated classes, where the quantum of each class is determined from the class guarantee. 
The regulator inspects the underlimit/overlimit state in the class hierarchy. A class is unregulated (1) if it is  underlimit or (2) if it has an ancestor that is underlimit and that has no underlimit descendants with a backlog. 
The first rule ensures that a class receives its rate guarantee. The second rule determines when a class can borrow bandwidth in excess of its guarantee. In \cite{CBQ} these rules are referred to as {\it formal link sharing} guidelines. 

To reduce the complexity involved in applying the  
formal link sharing guidelines, 
CBQ proposes two approximations, which are referred to as  ancestor-only and top-level link sharing guideline. The Linux implementation of CBQ only supports 
top-level link sharing. 

\subsection*{HTB}

Since HTB supports both minimum class guarantees and upper bounds on the rate of a class, it is both a 
scheduler and a shaper. Possibly because of its name,  discussions 
of HTB often focus on its shaping operation, and it is frequently 
characterized as a pure traffic shaper. 
The following description, which is based on the HTB qdisc code \cite{HTB-code}, is simplified in parts. 

In HTB, each class has two token buckets, an assured bucket and a ceiling bucket. The assured and ceiling buckets of class~$i$ are filled at rate $AR_i$ and $CR_i$, respectively, with $CR_i \ge AR_i$. 
For an internal class~$i$, reasonable choices should satisfy 
\[
AR_i \ge \sum_{j \in {\rm child}(i)} AR_j 
\quad \text{and} \quad 
CR_i = \max_{j \in {\rm child}(i)} CR_j  \, . 
\]
In the experiments we set $CR_i$ to the link capacity $C$ for all classes, 
which effectively disables traffic shaping. 
For  transmission of a packet with size $L$ from a class, $L$ tokens are removed from both 
the assured and  ceiling buckets, as well as from the  buckets of the ancestors of the class. 

The filling level of the token buckets
determine the state of a class, where each state is associated with a color:

\begin{center}
\begin{tabular}{ l | l | l }
& State & Color  \\ 
\hline
Assured bucket has tokens & Can send & green \\
Assured bucket is empty, but ceiling bucket has tokens & Can borrow & yellow \\
Ceiling bucket is empty & Can't borrow & red  
\end{tabular}
\end{center}
A green class can always transmit and a red class cannot transmit. A yellow class can transmit by borrowing tokens if its parent is green.  If the parent is  also yellow, the class tries to borrow from the next ancestor. This continues until an ancestor is reached that is either green or red. 
If a red ancestor is reached, the class cannot transmit. 
If a green ancestor is reached, the class can transmit, 
and tokens are removed from 
the ceiling buckets of the class and its ancestors, as well as 
from the assured buckets of green ancestors. 

\begin{figure}[t]
\centerline{
\begin{tikzpicture}[-triangle 45,font=\small,level distance=50pt,
  every node/.style = {shape=circle, draw, align=center, minimum size=9mm},level 1/.style={sibling distance=30mm,align=center,},level 2/.style={sibling distance=15mm}]
  \node[fill=teal!50]{}
       child { node[fill=yellow,sibling distance=1em] {} 
              child { node [fill=teal!50, sibling distance=5em]{$\bm{1}$}}  
              child { node [fill=yellow, sibling distance=5em]{} 
              	child { node [fill=yellow, sibling distance=5em]{$\bm{2}$}}
         			child { node [fill=red!50, sibling distance=5em]{$\bm{3}$} }}
              child [missing]
              }
 	child { node [fill=red!50] {} 
        child { node [fill=teal!50, sibling distance=5em]{$\bm{4}$}}
         child { node [fill=yellow, sibling distance=5em]{$\bm{5}$} }}; 
\end{tikzpicture}
}
\caption{States of classes in HTB.}
\label{fig:HTB-states}
\end{figure}
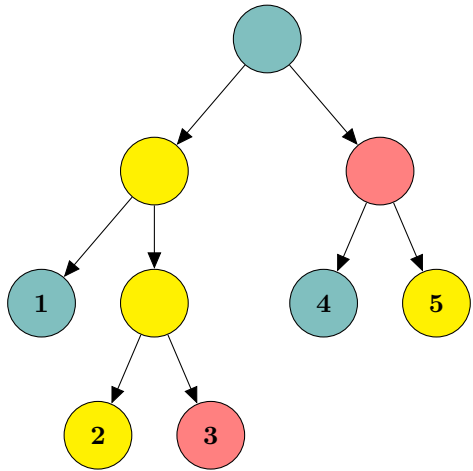
Consider Fig.~\ref{fig:HTB-states} as an example, which depicts a class hierarchy where the states of the classes are indicated by the shading of the circles. 
Among the leaf classes, classes $1$ and $4$ can transmit because they are green, and class $3$ cannot because it is red. Class $2$ is yellow, and goes out to borrow bandwidth. It cannot borrow from its parent or grandparent, because they are both yellow. However, it can borrow from the root, 
its great-grandparent, and is allowed to transmit. (In a workconserving scheduler, the root class is always green.) Class $5$, which is also yellow, has a red parent and is therefore blocked from transmission.

HTB organizes green classes into groups and prioritizes the groups. (HTB also permits the  configuration of class priorities, which we do not discuss here.) 
The grouping of green classes is based on their position  in the class hierarchy. A leaf class~$i$ is assigned ${\rm level}(i)=0$, and an internal class~$i$ is 
assigned   
\[
{\rm level}(i) = 1+ \max_{j \in {\rm child}(i)} {\rm level}(j) \, . 
\]
Classes with the same level are placed in the same group, and the group with the lowest level becomes the top group. 
HTB maintains one DRR scheduler for each group, where 
only the DRR scheduler  of the top group is active.  
In any of the DRR schedulers, the quantum  of a class is set 
proportional to its assured rate.

As long as there are green leaf classes, they are served in a DRR fashion. If there is no green leaf class, then the DRR scheduler for the green classes at level~1 becomes active. This  scheduler performs a DRR over the yellow leaf classes with a green parent.  If there is no green class at level~1, then the DRR scheduler for the green classes at level~2 is activated. This DRR scheduler serves yellow leaf classes with a yellow parent and a green grandparent. And so on. 
After each transmission, all classes are checked for a change of their color, which may also cause a change of the top group.

\end{document}